\providecommand{\U}[1]{\protect\rule{.1in}{.1in}}
\providecommand{\U}[1]{\protect\rule{.1in}{.1in}}
\providecommand{\U}[1]{\protect\rule{.1in}{.1in}}
\numberwithin{equation}{section}
\newtheorem{theorem}{Theorem}[section]
\newtheorem{lemma}[theorem]{Lemma}
\newtheorem{proposition}[theorem]{Proposition}
\newtheorem{condition}[theorem]{Condition}
\newtheorem{rem}[theorem]{Remark}
\newtheorem{remark}[theorem]{Remark}
\newenvironment{proof}[1][Proof]{\noindent\textbf{#1.} }{\ \rule{0.5em}{0.5em}}
\def\p2{\mathcal A_{\Phi,2\pi}(B)}
\def\0p2{\mathcal A_{\Phi,2\pi}(0)}
\def\sp2{\mathcal A_{\Phi,2\pi,\hbox{\rm SR}}(B)}
\def\beq{\begin{equation}}
\def\ene{\end{equation}}
\def\qed{\ifhmode\unskip\nobreak\fi\ifmmode\ifinner
\else\hskip5pt\fi\fi\hbox{\hskip5pt\vrule width4pt height6pt
depth1.5pt\hskip1pt}}
\def\+out{x^{\rm out}}
\begin{document}

\title{Time delay for the Dirac equation.\thanks{AMS Subject Classifications: 35Q40;
35P25; 35Q41; 81U99} \thanks{Research partially supported by project PAPIIT-DGAPA
UNAM IN102215.}}
\author{Ivan Naumkin\thanks{ Electronic Mail: ivannaumkinkaikin@gmail.com} and Ricardo
Weder\thanks{Fellow, Sistema Nacional de Investigadores. Electronic mail:
weder@unam.mx }\\Departamento de F\'{\i}sica Matem\'{a}tica,\\Instituto de Investigaciones en Matem\'aticas Aplicadas y en Sistemas. \\Universidad Nacional Aut\'onoma de M\'exico.\\Apartado Postal 20-126, M\'exico DF 01000, M\'exico.}
\date{}
\maketitle




\centerline{{\bf Abstract}}

We consider time delay for the Dirac equation. A new method to calculate the
asymptotics of the expectation values of the operator $\int\limits_{0}%
^{\infty}e^{iH_{0}t}\zeta\left(  \frac{\left\vert x\right\vert }{R}\right)
e^{-iH_{0}t}dt,$ as $R\rightarrow\infty,$ is presented. Here $H_{0}$ is the
free Dirac operator and $\zeta\left(  t\right)  $ is such that $\zeta\left(
t\right)  =1$ for $0\leq t\leq1$ and $\zeta\left(  t\right)  =0$ for $t>1.$
This approach allows us to obtain the time delay operator $\delta
\mathcal{T}\left(  f\right)  $ for initial states $f$ in $\mathcal{H}%
_{2}^{3/2+\varepsilon}\left(  \mathbb{R}^{3};\mathbb{C}^{4}\right)  ,$
$\varepsilon>0,$ the Sobolev space of order $3/2+\varepsilon$ and weight $2.$
The relation between the time delay operator $\delta\mathcal{T}\left(
f\right)  $ and the Eisenbud-Wigner time delay operator is given. Also, the
relation between the averaged time delay and the spectral shift function is presented.

\section{Introduction.}

In the present paper we consider time delay for the quantum scattering pair
$\{H_{0},H\}$, where the free Dirac operator $H_{0}$ is given by
\begin{equation}
H_{0}=-i\alpha\cdot\nabla+m\alpha_{4}, \label{basicnotions1}%
\end{equation}
with $m$ - the mass of the particle, $\alpha=(\alpha_{1},\alpha_{2},\alpha
_{3})$ and $\alpha_{j},$ $j=1,2,3,4,$ are $4\times4$ Hermitian matrices that
satisfy the relation:%
\[
\alpha_{j}\alpha_{k}+\alpha_{k}\alpha_{j}=2\delta_{jk},\text{ }j,k=1,2,3,4,
\]
where $\delta_{jk}$ denotes the Kronecker symbol. The standard choice of
$\alpha_{j}$ is (\cite{25}):
\[
\alpha_{j}=%
\begin{pmatrix}
0 & \sigma_{j}\\
\sigma_{j} & 0
\end{pmatrix}
,\text{ \ }1\leq j\leq3,\text{ \ \ \ \ }\alpha_{4}=%
\begin{pmatrix}
I_{2} & 0\\
0 & -I_{2}%
\end{pmatrix}
=\beta,
\]
($I_{n}$ is the $n\times n$ unit matrix) and
\[
\sigma_{1}=%
\begin{pmatrix}
0 & 1\\
1 & 0
\end{pmatrix}
,\sigma_{2}=%
\begin{pmatrix}
0 & -i\\
i & 0
\end{pmatrix}
,\sigma_{3}=%
\begin{pmatrix}
1 & 0\\
0 & -1
\end{pmatrix}
\]
are the Pauli matrices. The operator $H_{0}$ is a self-adjoint operator on
$L^{2}\left(  \mathbb{R}^{3};\mathbb{C}^{4}\right)  $ (see Section \ref{BN}).
The perturbed Dirac operator is defined by
\begin{equation}
H=H_{0}+\mathbf{V}. \label{basicnotions6}%
\end{equation}
Here the potential $\mathbf{V}\left(  x\right)  $ is an Hermitian $4\times4$
matrix valued function defined for $x\in\mathbb{R}^{3}$ such that $H$ is a
self-adjoint operator on $L^{2}\left(  \mathbb{R}^{3};\mathbb{C}^{4}\right)  $
(see Section \ref{BN})$.$ For a detailed study of the Dirac equation we refer
to \cite{26}, \cite{25}, and the references quoted there.

We define time delay for the pair $\{H_{0},H\}$ as follows. Let $\zeta\left(
t\right)  $ be such that $\zeta\left(  t\right)  =1$ for $0\leq t\leq1$ and
$\zeta\left(  t\right)  =0$ for $t>1.$ For $R>0$ and a normalized $f\in
L^{2}\left(  \mathbb{R}^{3};\mathbb{C}^{4}\right)  $ we define the quantities%
\[
\mathcal{T}_{0,R}\left(  f\right)  :=\int\limits_{-\infty}^{\infty}\left\Vert
\zeta\left(  \frac{\left\vert x\right\vert }{R}\right)  e^{-iH_{0}%
t}f\right\Vert ^{2}dt=\int\limits_{-\infty}^{\infty}\int_{\left\vert
x\right\vert \leq R}\left\vert e^{-iH_{0}t}f\right\vert _{\mathbb{C}^{4}}%
^{2}dt
\]
and
\[
\mathcal{T}_{R}\left(  f\right)  :=\int\limits_{-\infty}^{\infty}\left\Vert
\zeta\left(  \frac{\left\vert x\right\vert }{R}\right)  e^{-iHt}f\right\Vert
^{2}dt=\int\limits_{-\infty}^{\infty}\int_{\left\vert x\right\vert \leq
R}\left\vert e^{-iHt}f\right\vert _{\mathbb{C}^{4}}^{2}dt.
\]
As $\left\Vert \zeta\left(  \frac{\left\vert x\right\vert }{R}\right)
e^{-iH_{0}t}f\right\Vert ^{2}$ is the probability that the state $e^{-iH_{0}%
t}f$ is localized in the ball $B_{R}:=\{\left.  x\in\mathbb{R}^{3}\right\vert
\left\vert x\right\vert \leq R\}$ at time $t$, $\mathcal{T}_{0,R}\left(
f\right)  $ represents the total time spent in $B_{R}$ by a normalized state
$f$ under the free evolution group $e^{-iH_{0}t},$ and similarly,
$\mathcal{T}_{R}\left(  f\right)  $ measures the total time that the state
represented by $f$ stays in $B_{R}$ under $e^{-iHt}.$

Since the scattering theory is based on the comparison of the
perturbed\ dynamics with the free one, it is natural to define time delay as
the difference of the time that the scattered particles stay in the scattering
region and the time that the free particles, subject to the same initial
conditions, spend in the scattering region. Let $f$ be the initial condition
at $t=0$ that defines the dynamics $e^{-iH_{0}t}f$ of the free particle. Then,
the wave operator $W_{-},$ if it exists, defines the initial condition
$W_{-}f$ in $t=0$ of the scattered state $e^{-iHt}W_{-}f$, with the property
that asymptotically for $t\rightarrow-\infty$ it has the same dynamics as the
free state$,$ i.e., $\left\Vert e^{-iH_{0}t}f-e^{-iHt}W_{-}f\right\Vert
\rightarrow0,$ when $t\rightarrow-\infty.$ We define the time delay in $B_{R}$
as the difference of the time spent in $B_{R}$ by the state $W_{-}f$ and the
time that the free state $f$ stays in $B_{R}.$ That is, the \textit{local time
delay }$\delta_{R}\mathcal{T}\left(  f\right)  $ is given by
\[
\delta_{R}\mathcal{T}\left(  f\right)  :=\mathcal{T}_{R}\left(  W_{-}f\right)
-\mathcal{T}_{0,R}\left(  f\right)  .
\]
As the effective scattering region is all of $\mathbb{R}^{3},$ we have to
consider the limit of\ $\delta_{R}\mathcal{T}\left(  f\right)  ,$ when
$R\rightarrow\infty.$ If the limit exists, this leads us to the definition of
the \textit{global time delay }or simply \textit{time delay}
\begin{equation}
\delta\mathcal{T}\left(  f\right)  :=\lim_{R\rightarrow\infty}\delta
_{R}\mathcal{T}\left(  f\right)  . \label{t95}%
\end{equation}
The main problem in this definition of time delay consists in exhibiting a set
of initial states $f\in L^{2}\left(  \mathbb{R}^{3};\mathbb{C}^{4}\right)  $
and a class of perturbations $H-H_{0}$ for which the limit in (\ref{t95}) exists.

There are a lot of papers concerning time delay for the Schr\"{o}dinger
equation (see \cite{9}, \cite{40}, \cite{10}, \cite{7}, \cite{11}, \cite{13},
\cite{14}, \cite{15}, \cite{16}, \cite{17}, \cite{18}, \cite{19}, \cite{20},
\cite{21}, \cite{22}, \cite{23}, and the references therein). The works
\cite{14}, \cite{16}, \cite{17}, \cite{41} and \cite{21} study more general,
abstract dynamics, however they apply the obtained results to the case of the
Schr\"{o}dinger equation only. Many physical aspects, as well as applications
of time delay are presented in \cite{12}. Time delay for dynamics given by a
regular enough pseudodifferential operator of hypoelliptic-type, such as the
Schr\"{o}dinger operator or the square-root Klein-Gordon operator
(pseudo-relativistic Schr\"{o}dinger operator), were treated in \cite{24}.
However, the Dirac operator was not considered in \cite{24}. As far as we know
there are no papers concerning time delay for the Dirac equation.

In order to present our results we make some definitions. For $1\leq p<\infty$
and $\alpha\in\mathbb{R}$ we denote by $L^{p}\left(  \mathbb{R}^{3}%
;\mathbb{C}^{4}\right)  $ and $\mathcal{H}^{\alpha}\left(  \mathbb{R}%
^{3};\mathbb{C}^{4}\right)  $ the Lebesgue and Sobolev spaces of
$\mathbb{C}^{4}$-vector valued functions, respectively (see, for example,
\cite{adams}). We often will write $L^{p}$ and $\mathcal{H}^{\alpha}$ instead
of $L^{p}\left(  \mathbb{R}^{3};\mathbb{C}^{4}\right)  $ and $\mathcal{H}%
^{\alpha}\left(  \mathbb{R}^{3};\mathbb{C}^{4}\right)  .$ Also, we introduce
the weighted $L^{2}$ spaces for $s\in\mathbb{R},$ $L_{s}^{2}:=\{f:\left\langle
x\right\rangle ^{s}f\left(  x\right)  \in L^{2}\},$ $\left\Vert f\right\Vert
_{L_{s}^{2}}:=\left\Vert \left\langle x\right\rangle ^{s}f\left(  x\right)
\right\Vert _{L^{2}},$ where $\left\langle x\right\rangle =\left(
1+\left\vert x\right\vert ^{2}\right)  ^{1/2}.$ Moreover, for any $\alpha
,s\in\mathbb{R}$ we define $\mathcal{H}_{s}^{\alpha}:=\{f:\left\langle
x\right\rangle ^{s}f\left(  x\right)  \in\mathcal{H}^{\alpha}\},$ $\left\Vert
f\right\Vert _{\mathcal{H}_{s}^{\alpha}}:=\left\Vert \left\langle
x\right\rangle ^{s}f\left(  x\right)  \right\Vert _{\mathcal{H}^{\alpha}},$
where $\left\Vert f\left(  x\right)  \right\Vert _{\mathcal{H}^{\alpha}%
}=\left(  \int_{\mathbb{R}^{3}}\left\langle \xi\right\rangle ^{2\alpha
}\left\vert \hat{f}\left(  \xi\right)  \right\vert ^{2}d\xi\right)  ^{1/2}.$
The Fourier transform $\mathcal{F}$ is given by
\[
\hat{f}\left(  \xi\right)  =\left(  \mathcal{F}f\right)  \left(  \xi\right)
:=\left(  2\pi\right)  ^{-3/2}%
{\displaystyle\int\limits_{\mathbb{R}^{3}}}
e^{-i\xi\cdot x}f\left(  x\right)  dx.
\]
We denote by $\left(  \cdot,\cdot\right)  $ the $L^{2}$ scalar product. The
scalar product in $\mathbb{C}^{4}$ is denoted by $\left\langle \cdot
,\cdot\right\rangle .$ The projections $\mathbf{P}_{\pm}$ on the positive and
negative energies of $H_{0}$ are defined by (see Section \ref{BN})%
\[
\mathbf{P}_{\pm}:=\frac{1}{2}\left(  I_{4}\pm\frac{H_{0}}{\left\vert
H_{0}\right\vert }\right)  .
\]
Also, we introduce
\begin{equation}
\mathbf{A}_{0}:=\frac{1}{2i}\left(  \frac{1}{\left\vert \nabla\right\vert
^{2}}\nabla\cdot x+x\cdot\nabla\frac{1}{\left\vert \nabla\right\vert ^{2}%
}\right)  =\frac{1}{2}\left(  \frac{1}{\left\vert \nabla\right\vert ^{2}%
}\mathbf{A+A}\frac{1}{\left\vert \nabla\right\vert ^{2}}\right)  ,
\label{t164}%
\end{equation}
where $\nabla$ is the gradient and the operator $\mathbf{A}$, known as
\textquotedblleft the generator of dilations\textquotedblright, is defined as
\[
\mathbf{A:=}\frac{1}{2i}\sum_{j=1}^{3}\left(  x\cdot\nabla+\nabla\cdot
x\right)  .
\]
The operators $\Gamma_{0}\left(  E\right)  ,$ given by (\ref{basicnotions11}),
define the spectral representation $\mathcal{F}_{0}$ of the operator $H_{0}$
(\ref{basicnotions51}). We denote by $\mathcal{\hat{H}}$ the space of the
spectral realization of $H_{0}$ under $\mathcal{F}_{0}$ (see
(\ref{basicnotions50})) and $\left(  \cdot,\cdot\right)  _{\mathcal{\hat{H}}}$
denotes the scalar product in $\mathcal{\hat{H}}$. Also for any open set
$O\subset(-\infty,-m)\cup(m,+\infty),$ we define
\begin{equation}
\Phi\left(  O\right)  :=\{\left.  f\in\mathcal{H}_{2}^{3/2+\varepsilon}\left(
\mathbb{R}^{3};\mathbb{C}^{4}\right)  ,\text{ }\varepsilon>0\right\vert \text{
\ }f=\psi_{f}\left(  H_{0}\right)  f,\text{ \ for some }\psi_{f}\text{ such
that }\operatorname*{supp}\psi_{f}\subset O\}. \label{t190}%
\end{equation}
Finally, $\rho\left(  H\right)  $ is the resolvent set of $H$ and $\sigma
_{p}\left(  H\right)  $ is the closure of the set of the eigenvalues of the
operator $H.$

We assume that the wave operators $W_{\pm}$ exist and are complete, and hence,
the scattering operator $\mathbf{S}$ is unitary. In Section \ref{BN} we give
sufficient conditions on the potential $\mathbf{V}$ under which this
assumption is true.

We are now in position to present our results.

\begin{theorem}
\label{T1}Suppose that the state $f\in\mathcal{H}_{2}^{3/2+\varepsilon}\left(
\mathbb{R}^{3};\mathbb{C}^{4}\right)  $, $\varepsilon>0,$ is such that
$\mathbf{S}f\in\mathcal{H}_{2}^{3/2+\varepsilon}\left(  \mathbb{R}%
^{3};\mathbb{C}^{4}\right)  ,$ the function $t\rightarrow\left\Vert \left(
W_{-}-e^{itH}e^{-itH_{0}}\right)  f\right\Vert _{L^{2}\left(  \mathbb{R}%
^{3}\right)  }$ belongs to $L^{1}\left(  (-\infty,0]\right)  $ and
$t\rightarrow\left\Vert \left(  W_{+}-e^{itH}e^{-itH_{0}}\right)
\mathbf{S}f\right\Vert _{L^{2}\left(  \mathbb{R}^{3}\right)  }$ belongs to
$L^{1}\left(  [0,\infty)\right)  .$ Then the limit in (\ref{t95}) exists and%
\begin{equation}
\delta\mathcal{T}\left(  f\right)  =\left(  f,\mathbf{T}f\right)
,\label{t102}%
\end{equation}
where
\[
\mathbf{T:=}H_{0}\mathbf{S}^{\ast}\left(  \mathbf{P}_{-}\left[  \mathbf{A}%
_{0},\mathbf{S}\right]  \mathbf{P}_{-}-\mathbf{P}_{+}\left[  \mathbf{A}%
_{0},\mathbf{S}\right]  \mathbf{P}_{+}\right)  .
\]
Moreover, let the scattering matrix $S\left(  E\right)  $ be continuously
differentiable with respect to $E$ on some open set $O\subset(-\infty
,-m)\cup(m,+\infty)\setminus\sigma_{p}\left(  H\right)  .$ Then, for any
$f\in\Phi\left(  O\right)  ,$ $\mathbf{T}$\ is the Eisenbud-Wigner time delay
operator, that is%
\[
\delta\mathcal{T}\left(  f\right)  =\left(  \Gamma_{0}\left(  E\right)
f,T\left(  E\right)  \Gamma_{0}\left(  E\right)  f\right)  _{\mathcal{\hat{H}%
}},
\]
with%
\begin{equation}
T\left(  E\right)  :=-iS\left(  E\right)  ^{\ast}\frac{d}{dE}S\left(
E\right)  .\label{t214}%
\end{equation}

\end{theorem}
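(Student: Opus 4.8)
The plan is to reduce $\delta\mathcal T(f)$ to a purely free quantity and then to read off its large-$R$ asymptotics from the analysis of the operator advertised in the abstract. Write $\zeta_R:=\zeta(|x|/R)$, so that $0\le\zeta_R\le 1$ and, since $\zeta$ takes only the values $0$ and $1$, $\zeta_R^{2}=\zeta_R=\chi_{B_R}$. First I would split every time integral in $\delta_R\mathcal T(f)=\mathcal T_R(W_-f)-\mathcal T_{0,R}(f)$ at $t=0$ and use asymptotic completeness in the form $W_-f=W_+\mathbf S f$ to compare the interacting evolution with the free one on each half-line. The elementary inequality $\big|\,\|\zeta_R g\|^2-\|\zeta_R h\|^2\big|\le \|g-h\|\,(\|g\|+\|h\|)$, valid because $0\le\zeta_R\le1$, together with the two $L^1$ hypotheses, bounds the integrand on $(-\infty,0]$ by $C\|(W_--e^{itH}e^{-itH_0})f\|$ and the integrand on $[0,\infty)$ (after inserting $\pm\|\zeta_Re^{-iH_0t}\mathbf Sf\|^2$) by $C\|(W_+-e^{itH}e^{-itH_0})\mathbf Sf\|$, both integrable and independent of $R$. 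Since $\zeta_R\to1$ pointwise and $\|W_-f\|=\|f\|$, these two comparison integrands tend to $0$ for each fixed $t$, so dominated convergence kills them as $R\to\infty$. This leaves
\[
\delta\mathcal T(f)=\lim_{R\to\infty}\int_0^\infty\Big[\big\|\zeta_Re^{-iH_0t}\mathbf Sf\big\|^{2}-\big\|\zeta_Re^{-iH_0t}f\big\|^{2}\Big]dt ,
\]
a quantity involving only the free group, provided the right-hand limit exists.

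Next, using $\zeta_R^2=\chi_{B_R}$, each term is $(\,g,\mathcal R_R g\,)$ with $g\in\{f,\mathbf Sf\}$ and
\[
\mathcal R_R:=\int_0^\infty e^{iH_0t}\,\zeta\Big(\tfrac{|x|}{R}\Big)\,e^{-iH_0t}\,dt ,
\]
the operator whose $R\to\infty$ behaviour is the object of the paper's new method. I would invoke that asymptotic result on the space $\mathcal H_2^{3/2+\varepsilon}$ (this is exactly why both $f$ and $\mathbf Sf$ are assumed to lie there): it should give, in the relevant weak sense, $\mathcal R_R=R\,\Theta(H_0)+\mathbf B+o(1)$, where the divergent leading coefficient $\Theta(H_0)$ is a function of $H_0$ alone and $\mathbf B$ is an explicit operator built from the dilation generator $\mathbf A_0$ and the energy projections $\mathbf P_\pm$. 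Because $\mathbf S$ commutes with every function of $H_0$ (in particular with $\mathbf P_\pm$) and $\|\mathbf Sf\|=\|f\|$, the term linear in $R$ cancels in the difference $(\mathbf Sf,\mathcal R_R\mathbf Sf)-(f,\mathcal R_R f)$, and what survives is $(\mathbf Sf,\mathbf B\,\mathbf Sf)-(f,\mathbf B f)=(f,\mathbf S^{*}[\mathbf B,\mathbf S]f)$. Matching $\mathbf S^{*}[\mathbf B,\mathbf S]$ with the claimed $\mathbf T$ amounts to identifying $\mathbf B$, up to terms that commute with $\mathbf S$, with $-H_0(\mathbf P_+\mathbf A_0\mathbf P_+-\mathbf P_-\mathbf A_0\mathbf P_-)$; using $[\mathbf S,\mathbf P_\pm]=[\mathbf S,H_0]=0$ one then gets $\mathbf S^{*}[\mathbf B,\mathbf S]=H_0\mathbf S^{*}(\mathbf P_-[\mathbf A_0,\mathbf S]\mathbf P_--\mathbf P_+[\mathbf A_0,\mathbf S]\mathbf P_+)=\mathbf T$. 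This simultaneously establishes existence of the limit (\ref{t95}) and the formula (\ref{t102}).

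The \emph{main obstacle} is exactly the asymptotic analysis of $\mathcal R_R$: one must compute $\int_0^\infty\langle e^{-iH_0t}g,\chi_{B_R}e^{-iH_0t}g\rangle\,dt$ by stationary-phase / oscillatory-integral methods for the free Dirac propagator, isolate the term growing linearly in $R$, and show the remainder converges to $(\,g,\mathbf B g\,)$ with a uniform-in-$R$ error. The weight $2$ and the regularity $3/2+\varepsilon$ are what make these oscillatory integrals absolutely convergent and the remainder controllable. A second delicate point is the identification of the finite part with $\mathbf A_0$ together with the correct sign in front of the two projections, which reflects that positive-energy states ($\mathbf P_+$) have group velocity $\xi/E$ pointing outward while negative-energy states ($\mathbf P_-$) move inward; this velocity sign is what distinguishes the two projectors in $\mathbf T$.

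For the Eisenbud--Wigner statement I would pass to the spectral representation $\mathcal F_0$ of $H_0$. There $H_0$ is multiplication by $E$, $\mathbf S$ is multiplication by $S(E)$, and a direct computation from the momentum-space form $\mathbf A=i(\xi\cdot\nabla_\xi+\tfrac32)$ shows that $\mathbf A_0$ acts fibrewise as $\tfrac{i}{E}\tfrac{d}{dE}$, whence $H_0\mathbf A_0$ is $i\,d/dE$ and, since $S(E)$ commutes with $E$, $H_0[\mathbf A_0,\mathbf S]=[H_0\mathbf A_0,\mathbf S]=iS'(E)$. Restricting to $f\in\Phi(O)$, where $S(\cdot)$ is $C^1$ and the energy support lies in $O$, the projectors $\mathbf P_\pm$ select the sign of $E$, and the combination $\mathbf P_-(\cdots)\mathbf P_--\mathbf P_+(\cdots)\mathbf P_+$ together with the sign of the group velocity collapses $(f,\mathbf Tf)$ to $(\Gamma_0(E)f,T(E)\Gamma_0(E)f)_{\hat{\mathcal H}}$ with $T(E)=-iS(E)^{*}\tfrac{d}{dE}S(E)$. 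The only bookkeeping that needs the explicit $\Gamma_0(E)$ from Section \ref{BN} is the orientation of $d/dE$ on the negative-energy shell, which is fixed precisely so that the single expression $T(E)$ holds on all of $O$.
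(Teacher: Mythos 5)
Your strategy coincides with the paper's own proof: reduce $\delta\mathcal{T}(f)$ to the free comparison formula (\ref{t103}) using the two $L^{1}$ hypotheses (this is exactly Proposition \ref{t89}, which the paper cites rather than re-proves), apply a large-$R$ expansion of the operator $\mathcal{R}_{R}=\int_{0}^{\infty}e^{iH_{0}t}\zeta(\left\vert x\right\vert /R)e^{-iH_{0}t}dt$, cancel the term linear in $R$ by unitarity of $\mathbf{S}$ and its commutation with functions of $H_{0}$, and pass to the spectral representation for the Eisenbud--Wigner identification. Your commutator algebra showing $\mathbf{S}^{\ast}[\mathbf{B},\mathbf{S}]=\mathbf{T}$ for your posited $\mathbf{B}$ is correct, and your fibrewise picture ($H_{0}\mathbf{A}_{0}$ acting as $i\,d/dE$ modulo multiplication operators that drop out by unitarity of $S(E)$) is the same mechanism as the paper's passage from (\ref{t105}) to (\ref{t104}).

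There are, however, two genuine gaps. The smaller one: to define $\delta_{R}\mathcal{T}(f)$ at all, and to legitimize your splitting of $\mathcal{T}_{R}(W_{-}f)-\mathcal{T}_{0,R}(f)$ into three pieces, you need $\mathcal{T}_{0,R}(f)$, $\mathcal{T}_{R}(W_{-}f)$ and $\int_{0}^{\infty}\Vert\zeta(\left\vert x\right\vert /R)e^{-iH_{0}t}\mathbf{S}f\Vert^{2}dt$ to be finite for each fixed $R$; your inequalities only control differences, so without this you are manipulating expressions of the form $\infty-\infty$. The paper supplies this by an integration by parts giving the decay $\Vert\zeta(\left\vert x\right\vert /R)e^{-iH_{0}t}g\Vert_{L^{2}}\leq Ct^{-1}\Vert\mathcal{F}g\Vert_{\mathcal{H}_{3/2+\varepsilon}^{2}}$ (estimate (\ref{t92})), which is where the hypothesis $f,\mathbf{S}f\in\mathcal{H}_{2}^{3/2+\varepsilon}$ first does real work. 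The decisive gap: the entire analytic content of the theorem --- the existence of the expansion $\mathcal{R}_{R}=R\,\Theta(H_{0})+\mathbf{B}+o(1)$ in the weak sense on $\mathcal{H}_{2}^{3/2+\varepsilon}$ and, above all, the computation of $\mathbf{B}$ --- is invoked rather than proved, and you then determine $\mathbf{B}$ by matching against the claimed answer $\mathbf{T}$, which is circular as a proof of (\ref{t102}). In the paper this content is Theorem \ref{t58} (all of Section 3: the decomposition (\ref{time9}), the treatment of the conditionally convergent $t$-integrals, and Lemmas \ref{t54}--\ref{t48}), followed by the identities (\ref{t188}) and (\ref{t189}) which convert the explicit differential operator in (\ref{t55}) into the dilation-generator form; in particular the relative signs of the $\mathbf{P}_{\pm}$ blocks come out of that computation, not out of the group-velocity heuristic you offer. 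Without an independent derivation of this finite part, your proposal establishes neither the existence of the limit (\ref{t95}) nor formula (\ref{t102}).
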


\begin{remark}\rm
Observe that Theorem \ref{T1} remains valid in the case when $H-H_{0}$ is not a
multiplication operator.
\end{remark}

\begin{rem}\rm
\label{R1}We note that in the case of the Schr\"{o}dinger equation a similar
result was proved in \cite{9}. For pseudodifferential operators of
hypoelliptic-type, a result as in Theorem \ref{T1} was obtained in \cite{24}.
In these papers, the Fourier transforms of the functions $f$ and $Sf$ are
assumed to be compactly supported. We do not need this condition and we prove
Theorem \ref{T1} for $f$ and $Sf$ belonging to the weighted Sobolev space
$H_{2}^{3/2+\varepsilon}\left(  \mathbb{R}^{3};\mathbb{C}^{4}\right)  .$
\end{rem}

We suppose now that the potential $\mathbf{V}$ satisfies the following:

\begin{condition}\rm
\label{C1}The potential $\mathbf{V}$ has the form
\[
\mathbf{V}\left(  x\right)  \mathbf{=}\left\langle x\right\rangle ^{-\rho
}\left(  \mathbf{V}_{1}\left(  x\right)  +\mathbf{V}_{2}\left(  x\right)
\right)  ,\text{ \ }\rho>2,
\]
where each element of the matrix $\mathbf{V}_{1}$ belongs to $L^{\infty
}\left(  \mathbb{R}^{3}\right)  $ and the entries of $\mathbf{V}_{2}$ are
$L^{p}\left(  \mathbb{R}^{3}\right)  $ functions, for some $p>3.$ Moreover,
the elements of the matrix $x\cdot\nabla\mathbf{V}_{1}$ are in $L^{\infty
}\left(  \mathbb{R}^{3}\right)  +L^{q_{1}}\left(  \mathbb{R}^{3}\right)  $ and
the entries of $\left\langle x\right\rangle \mathbf{V}_{2}$ belong to
$L^{\infty}\left(  \mathbb{R}^{3}\right)  +L^{q_{2}}\left(  \mathbb{R}%
^{3}\right)  ,$ with $q_{j}\geq2,$ $j=1,2.$
\end{condition}

For $\tau>0$ we define the following dense subset of $L^{2}\left(
\mathbb{R}^{3};\mathbb{C}^{4}\right)  :$%

\[
\mathcal{D}_{\tau}:=\{\left.  f\in L_{\tau}^{2}\left(  \mathbb{R}%
^{3};\mathbb{C}^{4}\right)  \right\vert \text{ \ }f=\psi_{f}\left(
H_{0}\right)  f,\text{ \ for some }\psi_{f}\in C_{0}^{\infty}\left(
(-\infty,-m)\cup(m,+\infty)\setminus\sigma_{p}\left(  H\right)  \right)  \}.
\]

We have

\begin{theorem}
\label{T2}Suppose that $\mathbf{V}$ satisfies Condition \ref{C1} and let
$f\in\mathcal{D}_{\tau},$ $\tau>2.$ Then, the global time delay $\delta
\mathcal{T}\left(  f\right)  $ exists and relation (\ref{t102}) is true, with
$\mathbf{T}$ being the Eisenbud-Wigner time delay operator.
\end{theorem}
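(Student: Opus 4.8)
The plan is to verify that every hypothesis of Theorem \ref{T1} is met by each $f\in\mathcal{D}_{\tau}$ when $\mathbf{V}$ obeys Condition \ref{C1}, and then to invoke Theorem \ref{T1}. Writing $f=\psi_{f}\left(H_{0}\right)f$ with $\psi_{f}\in C_{0}^{\infty}\left((-\infty,-m)\cup(m,+\infty)\setminus\sigma_{p}\left(H\right)\right)$ and $f\in L_{\tau}^{2}$, $\tau>2$, I must check: (i) $f$ and $\mathbf{S}f$ lie in $\mathcal{H}_{2}^{3/2+\varepsilon}$; and (ii) $t\mapsto\left\Vert\left(W_{-}-e^{itH}e^{-itH_{0}}\right)f\right\Vert_{L^{2}}\in L^{1}\left((-\infty,0]\right)$ together with $t\mapsto\left\Vert\left(W_{+}-e^{itH}e^{-itH_{0}}\right)\mathbf{S}f\right\Vert_{L^{2}}\in L^{1}\left([0,\infty)\right)$. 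Once (i)--(ii) hold, the first part of Theorem \ref{T1} yields the existence of $\delta\mathcal{T}\left(f\right)$ and formula (\ref{t102}); the Eisenbud--Wigner identification then follows from the second part of Theorem \ref{T1} after exhibiting an open set $O$ with $f\in\Phi\left(O\right)$ on which $S\left(E\right)$ is $C^{1}$.

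For (i) the energy localization supplies the Sobolev regularity: since $\psi_{f}$ is supported away from the thresholds $\pm m$, $\psi_{f}\left(H_{0}\right)$ is smoothing and maps $L^{2}$ into $\mathcal{H}^{s}$ for every $s$, whence $f\in\mathcal{H}^{s}$, and, because $\mathbf{S}$ commutes with $H_{0}$, also $\mathbf{S}f=\psi_{f}\left(H_{0}\right)\mathbf{S}f\in\mathcal{H}^{s}$. For the weight I would expand $\left\langle x\right\rangle^{2}f=\psi_{f}\left(H_{0}\right)\left\langle x\right\rangle^{2}f+[\left\langle x\right\rangle^{2},\psi_{f}\left(H_{0}\right)]f$: the first term is in $\mathcal{H}^{s}$ since $\left\langle x\right\rangle^{2}f\in L^{2}$ (as $\tau>2$), and the commutator is of lower weight-order, hence controlled by $f\in L_{\tau}^{2}$, giving $f\in\mathcal{H}_{2}^{3/2+\varepsilon}$. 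The analogous statement $\mathbf{S}f\in L_{2}^{2}$ is less elementary: it expresses that $\mathbf{S}$ preserves the pertinent weighted space on the chosen energy band, which I would obtain from the limiting absorption principle for $H_{0}$ and $H$ and the resulting smoothness of the scattering matrix; these are available under Condition \ref{C1} through Mourre theory with the conjugate operator $\mathbf{A}_{0}$ of (\ref{t164}), the hypotheses on $x\cdot\nabla\mathbf{V}_{1}$ and $\left\langle x\right\rangle\mathbf{V}_{2}$ being exactly what is needed to bound the commutators entering that analysis.

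For (ii) I would use Cook's method. Since $f\in D\left(H_{0}\right)$ carries an energy cutoff, $\frac{d}{ds}e^{isH}e^{-isH_{0}}f=ie^{isH}\mathbf{V}e^{-isH_{0}}f$, so integration from $-\infty$ gives $\left\Vert\left(W_{-}-e^{itH}e^{-itH_{0}}\right)f\right\Vert_{L^{2}}\leq\int_{-\infty}^{t}\left\Vert\mathbf{V}e^{-isH_{0}}f\right\Vert_{L^{2}}\,ds$, and by Fubini the $L^{1}\left((-\infty,0]\right)$ requirement reduces to
\beq
\int_{-\infty}^{0}\left\vert s\right\vert\,\left\Vert\mathbf{V}e^{-isH_{0}}f\right\Vert_{L^{2}}\,ds<\infty.\label{planA}
\ene
The crux is to show $\left\Vert\mathbf{V}e^{-isH_{0}}f\right\Vert_{L^{2}}$ decays faster than $\left\vert s\right\vert^{-2}$. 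I would split $\mathbb{R}^{3}$ into an inner cone $\left\vert x\right\vert\leq v_{0}\left\vert s\right\vert$ and its complement, with $v_{0}$ below the minimal group velocity of $\operatorname{supp}\psi_{f}$ (strictly positive precisely because $\psi_{f}$ is supported away from $\pm m$, so the momenta are bounded away from $0$). On the outer region $\mathbf{V}\left(x\right)=O\left(\left\langle x\right\rangle^{-\rho}\right)$ yields $\left\langle x\right\rangle^{-\rho}\mathbf{1}_{\left\vert x\right\vert\geq v_{0}\left\vert s\right\vert}\leq C\left\vert s\right\vert^{-\rho}$, integrable against $\left\vert s\right\vert\,ds$ since $\rho>2$; the $L^{p}$ part $\mathbf{V}_{2}$ is handled the same way after a H\"older/Sobolev estimate. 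On the inner cone $e^{-isH_{0}}f$ is away from the stationary points of its phase, so non-stationary phase (integration by parts in $\xi$, using $\widehat{f}\in\mathcal{H}^{\tau}$ with $\tau>2$) gives decay $\left\vert s\right\vert^{-N}$ with $N>2$, again integrable against $\left\vert s\right\vert\,ds$. The same argument applied with $W_{+}$ and the state $\mathbf{S}f$ gives the $[0,\infty)$ integrability, using that $\mathbf{S}f$ inherits the energy cutoff and, by (i), the weight. I expect (\ref{planA}) --- that is, the free-Dirac propagation/decay estimate for $\mathbf{V}e^{-isH_{0}}f$ together with the weight-preservation of $\mathbf{S}$ --- to be the main obstacle, since both rest on the quantitative propagation estimates and the limiting absorption principle rather than on soft arguments.

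Finally, for the Eisenbud--Wigner identification I would pick an open set $O$ with $\operatorname{supp}\psi_{f}\subset O\subset(-\infty,-m)\cup(m,+\infty)\setminus\sigma_{p}\left(H\right)$; then $f=\psi_{f}\left(H_{0}\right)f$ together with $f\in\mathcal{H}_{2}^{3/2+\varepsilon}$ from (i) gives $f\in\Phi\left(O\right)$, and under Condition \ref{C1} the scattering matrix $S\left(E\right)$ is continuously differentiable on $O$, so the second part of Theorem \ref{T1} applies and identifies $\mathbf{T}$ with the operator $T\left(E\right)$ of (\ref{t214}).
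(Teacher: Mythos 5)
Your overall strategy---verifying the hypotheses of Theorem \ref{T1} and then invoking it---is exactly the paper's, and your treatment of the incoming part is essentially sound: Cook's method together with the minimal-velocity estimate $\Vert F\left(  \left\vert x\right\vert \leq v_{0}s\right)  e^{-isH_{0}}f\Vert \lesssim s^{-\tau}$ (valid because $f\in L_{\tau}^{2}$ has energy cutoff away from $\pm m$) gives $\Vert \mathbf{V}e^{-isH_{0}}f\Vert \lesssim s^{-\min\left(  \rho,\tau\right)  }$ with $\rho,\tau>2$, which is integrable against $\left\vert s\right\vert ds$. The genuine gap is in the outgoing part. Your Cook bound $\Vert (W_{+}-e^{itH}e^{-itH_{0}})\mathbf{S}f\Vert \leq\int_{t}^{\infty}\Vert \mathbf{V}e^{-isH_{0}}\mathbf{S}f\Vert ds$ places the entire burden on configuration-space decay of $\mathbf{S}f$ itself: for the right-hand side to be integrable in $t$ you need $\mathbf{S}f\in L_{\sigma}^{2}$ with $\sigma>2$ \emph{strictly}. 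But your step (i) claims only $\mathbf{S}f\in L_{2}^{2}$, and with weight exactly $2$ the inner-cone estimate gives $\Vert \mathbf{V}e^{-isH_{0}}\mathbf{S}f\Vert \lesssim s^{-2}$, hence $\Vert (W_{+}-e^{itH}e^{-itH_{0}})\mathbf{S}f\Vert \lesssim t^{-1}$, which is \emph{not} in $L^{1}\left(  [0,\infty)\right)  $; the bound diverges logarithmically. Moreover, even the weight-$2$ statement is not a soft consequence of the limiting absorption principle: it is the ``configuration space property of the S-matrix,'' the hardest ingredient of the whole theorem, and what is actually proved (the paper's adaptation of Proposition 4 of \cite{40}, via the commutator identities $[x^{2},U_{t}^{\ast}]$, $[x^{2},U_{t}^{0}]$) is $\mathbf{S}f\in\mathcal{D}_{2}$, i.e.\ weight exactly $2$---enough for the hypothesis $\mathbf{S}f\in\mathcal{H}_{2}^{3/2+\varepsilon}$ of Theorem \ref{T1}, but not enough to close your Cook argument.

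The paper avoids this trap by a different Cook decomposition. Since the wave operators are complete, $W_{+}\mathbf{S}f=W_{-}f$, so $\Vert (W_{+}-e^{itH}e^{-itH_{0}})\mathbf{S}f\Vert =\Vert (W_{+}^{\ast}-e^{itH_{0}}e^{-itH})W_{-}f\Vert \leq\int_{t}^{\infty}\Vert \mathbf{V}e^{-isH}W_{-}f\Vert ds$: Cook with respect to the \emph{full} evolution applied to $W_{-}f$. The integrand is then controlled by minimal-velocity propagation estimates for $e^{-isH}$ rather than $e^{-isH_{0}}$, which require only the weight $\tau>2$ of $f$ and nothing about $\mathbf{S}f$; establishing those full-dynamics estimates is precisely the role of the adapted Lemmas 2 to 9 and the Corollary of Proposition 2 of \cite{40} (this is where the hypotheses on $x\cdot\nabla\mathbf{V}_{1}$ and $\left\langle x\right\rangle \mathbf{V}_{2}$ in Condition \ref{C1}, and the Dirac commutators $[H,x_{k}]=-i\alpha_{k}$, $[H,\left\vert x\right\vert ^{m}]$, enter), and they cannot be replaced by stationary-phase arguments because $e^{-isH}$ has no explicit Fourier representation. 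To repair your proof you must either prove $\mathbf{S}f\in L_{2+\epsilon}^{2}$ for some $\epsilon>0$ (strictly more than the known S-matrix mapping results provide) or switch to the full-evolution Cook bound and develop the commutator propagation estimates for $e^{-isH}$, which is what the paper does.
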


Let us now suppose that the potential $\mathbf{V}$ satisfy the following:

\begin{condition}\rm
\label{C2}The potential $\mathbf{V}$ satisfies the estimate
\[
\left\vert \mathbf{V}\left(  x\right)  \right\vert \leq C\left(  1+\left\vert
x\right\vert \right)  ^{-4-\varepsilon},\text{ \ }\varepsilon>0.
\]

\end{condition}

The \textit{spectral shift function} (SSF) is a real valued function
$\xi\left(  E;H,H_{0}\right)  $ such that the relation%
\[
\operatorname*{Tr}\left(  f\left(  H\right)  -f\left(  H_{0}\right)  \right)
=%
{\displaystyle\int\limits_{-\infty}^{\infty}}
\xi\left(  E;H,H_{0}\right)  f^{\prime}\left(  E\right)  dE,
\]
known as the trace formula, holds at least for all $f\in C_{0}^{\infty}\left(
\mathbb{R}\right)  $ (see \cite{39}, \cite{36} and the references there in).
Here $\operatorname*{Tr}A$ denotes the trace of an operator $A.$ The
\textit{average time delay} at energy $E$ is defined by $\operatorname*{Tr}%
T\left(  E\right)  ,$ with $T\left(  E\right)  $ given by (\ref{t214}).
Finally, we present a formula that relates the average time delay with the
SSF. We have the following result (see \cite{Buslaev}, \cite{21}, \cite{39},
\cite{36}, and the references therein in the case of the Schr\"{o}dinger
operator).\textit{ }

\begin{theorem}
\label{T4}Assume that $\mathbf{V}$ satisfies Condition \ref{C2}. Then, the
following equality is valid
\begin{equation}
\operatorname*{Tr}T\left(  E\right)  =-2\pi\xi^{\prime}\left(  E;H,H_{0}%
\right)  , \label{t186}%
\end{equation}
for $E\in(-\infty,-m)\cup(m,+\infty).$
\end{theorem}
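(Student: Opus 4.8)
The plan is to establish the trace formula (\ref{t186}) by combining the Eisenbud-Wigner expression (\ref{t214}) for $T\left(E\right)$ with the classical Birman-Krein theorem, which relates the spectral shift function to the scattering matrix through the identity $\det S\left(E\right)=e^{-2\pi i\xi\left(E;H,H_{0}\right)}$. First I would verify that Condition \ref{C2} is strong enough to place us within the framework of the Birman-Krein theory for the Dirac operator: the decay $\left\vert\mathbf{V}\left(x\right)\right\vert\leq C\left(1+\left\vert x\right\vert\right)^{-4-\varepsilon}$ should guarantee that $\mathbf{V}$ is relatively trace-class (or that a suitable resolvent difference is trace-class), so that the spectral shift function is well defined and the Birman-Krein formula holds on $(-\infty,-m)\cup(m,+\infty)$. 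One must also confirm that under Condition \ref{C2} the scattering matrix $S\left(E\right)$ differs from the identity by a trace-class operator for each fixed $E$ in this set, and that $E\mapsto S\left(E\right)$ is continuously differentiable, so that the trace $\operatorname*{Tr}T\left(E\right)$ in (\ref{t214}) makes sense and the hypotheses of Theorem \ref{T1} are met.

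Next, granting the Birman-Krein identity in its differentiated form, I would take the logarithm and differentiate with respect to $E$. Formally, $\log\det S\left(E\right)=\operatorname*{Tr}\log S\left(E\right)=-2\pi i\,\xi\left(E;H,H_{0}\right)$, and differentiating yields $\operatorname*{Tr}\left(S\left(E\right)^{\ast}\frac{d}{dE}S\left(E\right)\right)=-2\pi i\,\xi^{\prime}\left(E;H,H_{0}\right)$, where I have used unitarity of $S\left(E\right)$ to write $S\left(E\right)^{-1}=S\left(E\right)^{\ast}$. Multiplying by $-i$ and recalling (\ref{t214}) gives
\[
\operatorname*{Tr}T\left(E\right)=\operatorname*{Tr}\left(-iS\left(E\right)^{\ast}\frac{d}{dE}S\left(E\right)\right)=-2\pi\,\xi^{\prime}\left(E;H,H_{0}\right),
\]
which is precisely (\ref{t186}). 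The algebraic heart of the argument is therefore short once the analytic framework is in place.

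The main obstacle I expect is the rigorous justification of the step $\frac{d}{dE}\operatorname*{Tr}\log S\left(E\right)=\operatorname*{Tr}\left(S\left(E\right)^{\ast}S^{\prime}\left(E\right)\right)$, together with the differentiability of $E\mapsto\det S\left(E\right)$ and the interchange of trace and derivative. This requires knowing that $S\left(E\right)-I_{4}$ is trace-class and depends on $E$ in a $C^{1}$ fashion in trace norm, which in turn rests on mapping properties of the boundary values of the resolvent of $H_{0}$ (the limiting absorption principle for the Dirac operator) and on the smoothing effect of the weight $\left\langle x\right\rangle^{-4-\varepsilon}$. I would handle this by invoking the stationary representation of the scattering matrix in terms of the trace operators $\Gamma_{0}\left(E\right)$ from (\ref{basicnotions11}) and the perturbation $\mathbf{V}$, estimating the resulting integral kernel on the energy shell and showing that its derivative in $E$ remains trace-class under Condition \ref{C2}; the decay rate $4+\varepsilon$ is chosen precisely to close these estimates. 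Once the trace-class differentiability is secured, the Birman-Krein theorem supplies the equality $\det S\left(E\right)=e^{-2\pi i\xi\left(E;H,H_{0}\right)}$ and the computation above completes the proof.
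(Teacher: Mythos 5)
Your proposal is correct and follows essentially the same route as the paper: invoke the Birman--Krein formula (\ref{t213}) for the Dirac operator (Theorem 4.5 of Yafaev's 2005 trace-formula paper), establish that $S\left(E\right)$ is continuously differentiable in the trace class via the stationary formula (\ref{basicnotions14}) and the limiting absorption principle, and then differentiate the identity to obtain (\ref{t186}). The only cosmetic difference is that you justify the key step $\left(\operatorname*{Det}S\left(E\right)\right)^{-1}\frac{d}{dE}\operatorname*{Det}S\left(E\right)=\operatorname*{Tr}\left(S\left(E\right)^{\ast}\frac{d}{dE}S\left(E\right)\right)$ through $\log\operatorname*{Det}=\operatorname*{Tr}\log$, whereas the paper proves it elementarily by finite-dimensional Jacobi-formula approximation and explicitly verifies continuity of $\xi\left(E;H,H_{0}\right)$ before differentiating; these are implementation details of the same argument.
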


\begin{rem}\rm
For an operator $A$ of trace class we denote by $\operatorname*{Det}\left(
I+A\right)  $ the determinant of $I+A$\ (\cite{39}, \cite{36}). If Condition
\ref{C2} holds, Theorem 4.5 of \cite{yafaev2005} implies that the operator
$S\left(  E\right)  -I$ is of trace class. The scattering phase $\theta\left(
E\right)  $ is defined by the relation $\operatorname*{Det}S\left(  E\right)
=e^{-2i\theta\left(  E\right)  }.$ It also follows from Theorem 4.5 of
\cite{yafaev2005} that the SSF $\xi\left(  E;H,H_{0}\right)  $ exists and it
is related to the scattering phase $\theta\left(  E\right)  $ by $\xi\left(
E;H,H_{0}\right)  =\left(  1/\pi\right)  \theta\left(  E\right)  $. Thus,
formula (\ref{t186}) shows that up to numerical coefficients, the average time
delay, the SSF and the scattering phase, that have different physical
meanings, coincide. On the other hand, we observe that Theorem \ref{T2} and
Theorem \ref{T4} establish, via the Eisenbud-Wigner operator, a connection
between $\delta\mathcal{T}\left(  f\right)  $ and the SSF $\xi\left(
E;H,H_{0}\right)  $.
\end{rem}

We now briefly explain our strategy. As in the case of the Schr\"{o}dinger
equation (\cite{9}), by Proposition \ref{t89} the study of the limit
(\ref{t95}) reduces to finding an asymptotic expansion for the quantity
\begin{equation}
I\left(  R\right)  :=\int\limits_{0}^{\infty}\left\langle e^{-iH_{0}t}%
f,\zeta\left(  \frac{\left\vert x\right\vert }{R}\right)  e^{-iH_{0}%
t}g\right\rangle dt, \label{t96}%
\end{equation}
as $R\rightarrow\infty.$ In order to find the asymptotics of $I\left(
R\right)  $ we proceed as follows. Separating $H_{0}$ into positive and
negative energies we obtain the decomposition (\ref{time9}) below. Since
$\sqrt{\left\vert p\right\vert ^{2}+m^{2}}+\sqrt{\left\vert q\right\vert
^{2}+m^{2}}$ is different from $0$ for all $p$ and $q,$ the terms
$I_{2}\left(  R\right)  $ and $I_{4}\left(  R\right)  $ in (\ref{time9}),
containing $e^{-i\left(  \sqrt{\left\vert p\right\vert ^{2}+m^{2}}%
+\sqrt{\left\vert q\right\vert ^{2}+m^{2}}\right)  t}$ or $e^{i\left(
\sqrt{\left\vert p\right\vert ^{2}+m^{2}}+\sqrt{\left\vert q\right\vert
^{2}+m^{2}}\right)  t},$ result to be $o\left(  1\right)  $ functions, as
$R\rightarrow\infty.$ On the other hand, as $\sqrt{\left\vert p\right\vert
^{2}+m^{2}}=\sqrt{\left\vert q\right\vert ^{2}+m^{2}}$ for $p=q,$ the terms
$I_{1}\left(  R\right)  $ and $I_{3}\left(  R\right)  ,$ that contain
$e^{-i\left(  \sqrt{\left\vert p\right\vert ^{2}+m^{2}}-\sqrt{\left\vert
q\right\vert ^{2}+m^{2}}\right)  t}$ or $e^{i\left(  \sqrt{\left\vert
p\right\vert ^{2}+m^{2}}-\sqrt{\left\vert q\right\vert ^{2}+m^{2}}\right)
t},$ give us the principal part in the asymptotics of $I\left(  R\right)  ,$
as $R\rightarrow\infty.$ The asymptotic expansion of $I\left(  R\right)  \ $is
given by Theorem \ref{t58}. To prove this theorem we first separate the part
in $I_{1}\left(  R\right)  $ and $I_{3}\left(  R\right)  $ that diverges as
$R,$ when $R\rightarrow\infty,$ and the constant part. These are the results
of Lemmas \ref{t54} and \ref{t57}, respectively. After that, we need to show
that the remaining part is $o\left(  1\right)  $ function, as $R\rightarrow
\infty.$ We prove this result in Lemmas \ref{t45} and \ref{t49}. Finally, in
Lemma \ref{t48} we show that the terms $I_{2}\left(  R\right)  $ and
$I_{4}\left(  R\right)  $ also are $o\left(  1\right)  $ functions, as
$R\rightarrow\infty$. This completes the scheme of the proof of Theorem
\ref{t58}. We observe that the difficulty in obtaining the asymptotics of
$I\left(  R\right)  $ consists in that the integral in $t$ in (\ref{t96}) is
conditionally convergent and this convergence depends on $R.$ The dependence
on $R$ is rather delicate and therefore we need some sharp estimates in
weighted Sobolev spaces in order to obtain our result. Also, we note that we
do not use a formula analogous to the Alsholm-Kato formula (see (2.1) of
\cite{AsholmKato}) that was used in \cite{9} or \cite{24}. Besides, our
approach allows us to obtain the asymptotics of $I\left(  R\right)  $ for
functions $f,g\ $in weighted Sobolev space $\mathcal{H}_{2}^{3/2+\varepsilon}%
$, $\varepsilon>0,$ and we do not need that the Fourier transforms of $f$ and
$g$ have compact support. This enables us to prove Theorem \ref{T1} for $f$
and $\mathbf{S}f$ belonging to $\mathcal{H}_{2}^{3/2+\varepsilon}\left(
\mathbb{R}^{3};\mathbb{C}^{4}\right)  $ (see Remark \ref{R1}). If we assume
that $f$ and $g$ in Theorem \ref{t58} are such that their Fourier transforms
are compactly supported, the proof of Theorem \ref{t58} results to be
technically easier. The first assertion of Theorem \ref{T1} is proved by using
Proposition \ref{t89} and Theorem \ref{t58}, and then, we give the relation of
the time delay $\mathbf{T}$ and the Eisenbud-Wigner time delay operator
(\cite{43}, \cite{42}), which is the result of the second assertion of Theorem
\ref{T1} (see Subsection \ref{proofT1}). We observe that our method is direct
and can be applied to another equations in quantum scattering theory, such as,
for example, the Schr\"{o}dinger operator, the Klein-Gordon equation or the
Pauli operator. The proof of Theorem \ref{T2} consists in showing that under
Condition \ref{C1} on the potential $\mathbf{V}$ the assumptions of Theorem
\ref{T1} are valid. We do this by adapting the results of \cite{40} and
\cite{15} for the Schr\"{o}dinger operator to the case of the Dirac operator.
Finally, we prove Theorem \ref{T4} by using the Birman-Krein's formula,
obtained for the Dirac equation in \cite{yafaev2005}.

The paper is organized as follows. In Section 2 we give some known results
about scattering theory for the Dirac operator. In Section 3 we obtain the
asymptotic expansion for $I\left(  R\right)  ,$ as $R\rightarrow\infty.$
Section 4 is dedicated to the proofs of our theorems. In Subsection
\ref{proofT1} we use the asymptotics of $I\left(  R\right)  $ in order to
prove Theorem \ref{T1}. Subsection \ref{proofT2} is dedicated to Theorem
\ref{T2}. Finally, the proof of Theorem \ref{T4} is given in Subsection
\ref{proofT4}.

\section{Basic notions.\label{BN}}

The free Dirac operator $H_{0}$ (\ref{basicnotions1}) is a self-adjoint
operator on $L^{2}\left(  \mathbb{R}^{3};\mathbb{C}^{4}\right)  $ with domain
$D\left(  H_{0}\right)  =\mathcal{H}^{1}\left(  \mathbb{R}^{3};\mathbb{C}%
^{4}\right)  $ (\cite{25}). We can diagonalize $H_{0}$ by the Fourier
transform $\mathcal{F}.$ Actually, $\mathcal{F}H_{0}\mathcal{F}^{\ast}$ acts
as multiplication by the matrix $h_{0}\left(  \xi\right)  =\alpha\cdot
\xi+m\beta.$ This matrix has two eigenvalues $E=\pm\sqrt{\xi^{2}+m^{2}}$ and
each eigenspace $X^{\pm}\left(  \xi\right)  $ is a two-dimensional subspace of
$\mathbb{C}^{4}.$ The orthogonal projections onto these eigenspaces are given
by (see \cite{25}, page 9)%
\begin{equation}
P_{\pm}\left(  \xi\right)  :=\frac{1}{2}\left(  I_{4}\pm\left(  \xi^{2}%
+m^{2}\right)  ^{-1/2}\left(  \alpha\cdot\xi+m\beta\right)  \right)  .
\label{basicnotions49}%
\end{equation}
Note that
\[
P_{\pm}\left(  \xi\right)  \mathcal{F}=\left(  \mathcal{F}\mathbf{P}_{\pm
}\right)  \left(  \xi\right)  ,
\]
where%
\[
\mathbf{P}_{\pm}:=\frac{1}{2}\left(  I_{4}\pm\frac{H_{0}}{\left\vert
H_{0}\right\vert }\right)  .
\]
The spectrum of $H_{0}$ is purely absolutely continuous and it is given by
$\sigma\left(  H_{0}\right)  =\sigma_{ac}\left(  H_{0}\right)  =(-\infty
,-m]\cup\lbrack m,\infty).$

Let us now consider the perturbed Dirac operator $H$, given by
(\ref{basicnotions6}). Suppose that the Hermitian $4\times4$ matrix valued
potential $\mathbf{V,}$ defined for $x\in\mathbb{R}^{3}$ satisfies the following

\begin{condition}\rm
\label{basicnotions26}For some $s_{0}>1/2,$ $\left\langle x\right\rangle
^{2s_{0}}\mathbf{V}$ is a compact operator from $\mathcal{H}^{1}$ to $L^{2}.$
\end{condition}

The assumptions on a potential $\mathbf{V,}$ assuring Condition
\ref{basicnotions26} are well known (see, for example, \cite{27}). In
particular, Condition \ref{basicnotions26} for $\mathbf{V}$ holds, if for some
$\varepsilon>0,$ $\sup_{x\in\mathbb{R}^{3}}\int_{\left\vert x-y\right\vert
\leq1}\left\vert \left\langle y\right\rangle ^{2s_{0}}\mathbf{V}\left(
y\right)  \right\vert ^{3+\varepsilon}dy<\infty$ and $\int_{\left\vert
x-y\right\vert \leq1}\left\vert \left\langle y\right\rangle ^{2s_{0}%
}\mathbf{V}\left(  y\right)  \right\vert ^{3+\varepsilon}dy\rightarrow0,$ as
$\left\vert x\right\vert \rightarrow\infty$ (see Theorem 9.6, Chapter 6, of
\cite{27}). Of course, the last two relations are true if $\mathbf{V}$
satisfies Condition \ref{C1}.

Since $\mathbf{V}$ is an Hermitian $4\times4$ matrix valued potential
$\mathbf{V,}$ Condition \ref{basicnotions26} implies assumptions (A$_{1}%
$)-(A$_{3}$) of \cite{28}. Thus, under Condition \ref{basicnotions26} $H$ is a
self-adjoint operator on $D\left(  H\right)  =\mathcal{H}^{1}$ and the
essential spectrum $\sigma_{ess}\left(  H\right)  =\sigma\left(  H_{0}\right)
$. The wave operators (WO), defined as the following strong limit%
\[
W_{\pm}\left(  H,H_{0}\right)  :=s-\lim_{t\rightarrow\pm\infty}e^{iHt}%
e^{-iH_{0}t},
\]
exist and are complete, i.e., $\operatorname*{Range}W_{\pm}=\mathcal{H}_{ac}$
(the subspace of absolutely continuity of $H$) and the singular continuous
spectrum of $H$ is absent.

\begin{remark}\rm
\label{Rem1}We recall that the study about the absence of eigenvalues of $H$
embedded in the absolutely continuous spectrum was made in \cite{29},
\cite{32}, \cite{30}, \cite{31}, and the references quoted there. In
particular, there are no eigenvalues in the absolutely continuous spectrum if
\[
\left\vert \mathbf{V}\left(  x\right)  \right\vert \leq C\left(  1+\left\vert
x\right\vert \right)  ^{-1-\varepsilon},\text{ \ }\varepsilon>0.
\]

\end{remark}

From the existence of the WO it follows that $HW_{\pm}=W_{\pm}H_{0}$
(intertwining relations). The scattering operator, defined by%
\[
\mathbf{S=S}\left(  H,H_{0}\right)  :=W_{+}^{\ast}W_{-},
\]
commutes with $H_{0}$ and it is unitary.

Let $H_{0S}:=-\triangle$ be the free Schr\"{o}dinger operator in $L^{2}\left(
\mathbb{R}^{3};\mathbb{C}^{4}\right)  .$ The limiting absorption principle
(LAP) is the following statement. For $z$ in the resolvent set of $H_{0S}$ let
$R_{0S}\left(  z\right)  :=\left(  H_{0S}-z\right)  ^{-1}$ be the resolvent.
The limits $R_{0S}\left(  \lambda\pm i0\right)  =\lim_{\varepsilon
\rightarrow+0}R_{0S}\left(  \lambda\pm i\varepsilon\right)  ,$ ($\varepsilon
\rightarrow+0$ means $\varepsilon\rightarrow0$ with $\varepsilon>0$) exist in
the uniform operator topology in $\mathcal{B}\left(  L_{s}^{2},\mathcal{H}%
_{-s}^{\alpha}\right)  ,$ $s>1/2,$ $\left\vert \alpha\right\vert \leq2$
(\cite{33},\cite{34},\cite{35},\cite{36}) and, moreover, $\left\Vert
R_{0S}\left(  \lambda\pm i0\right)  f\right\Vert _{\mathcal{H}^{\alpha,-s}%
}\leq C_{s,\delta}\lambda^{-\left(  1-\left\vert \alpha\right\vert \right)
/2}\left\Vert f\right\Vert _{L_{s}^{2}},$ for $\lambda\in\lbrack\delta
,\infty),$ $\delta>0$. Here for any pair of Banach spaces $X,Y,$
$\mathcal{B}\left(  X,Y\right)  $ denotes the Banach space of all bounded
operators from $X$ into $Y.$ The functions $R_{0S}^{\pm}\left(  \lambda
\right)  ,$ given by $R_{0S}\left(  \lambda\right)  $ if $\operatorname{Im}%
\lambda\neq0,$ and $R_{0S}\left(  \lambda\pm i0\right)  ,$ if $\lambda
\in(0,\infty),$ are defined for $\lambda\in\mathbb{C}^{\pm}\cup\left(
0,\infty\right)  $ ($\mathbb{C}^{\pm}$ denotes, respectively, the upper,
lower, open complex half-plane) with values in $\mathcal{B}\left(  L_{s}%
^{2},\mathcal{H}_{-s}^{\alpha}\right)  $ and they are analytic for
$\operatorname{Im}\lambda\neq0$ and locally H\"{o}lder continuous for
$\lambda\in(0,\infty)$ with exponent $\vartheta$ satisfying the estimates
$0<\vartheta\leq s-1/2$ and $\vartheta<1.$

For $z$ in the resolvent set of $H_{0}$ let $R_{0}\left(  z\right)  :=\left(
H_{0}-z\right)  ^{-1}$ be the resolvent. From the LAP for $H_{0S}$ it follows
that the limits (see Lemma 3.1 of \cite{28})%
\begin{equation}
\left.  R_{0}\left(  E\pm i0\right)  =\lim_{\varepsilon\rightarrow+0}%
R_{0}\left(  E\pm i\varepsilon\right)  =\left\{
\begin{array}
[c]{c}%
\left(  H_{0}+E\right)  R_{0S}\left(  \left(  E^{2}-m^{2}\right)  \pm
i0\right)  \text{ for }E>m\\
\left(  H_{0}+E\right)  R_{0S}\left(  \left(  E^{2}-m^{2}\right)  \mp
i0\right)  \text{ for }E<-m,
\end{array}
\right.  \right.  \label{basicnotions27}%
\end{equation}
exist for $E\in(-\infty,-m)\cup(m,\infty)$ in the uniform operator topology in
$\mathcal{B}\left(  L_{s}^{2},\mathcal{H}_{-s}^{\alpha}\right)  ,$ $s>1/2,$
$\alpha\leq1,$ and $\left\Vert R_{0}\left(  E\pm i0\right)  f\right\Vert
_{\mathcal{H}^{\alpha,-s}}$ $\leq C_{s,\delta}\left\vert E\right\vert
^{\left\vert \alpha\right\vert }\left\Vert f\right\Vert _{L_{s}^{2}},$ for
$\left\vert E\right\vert \in\lbrack m+\delta,\infty),$ $\delta>0$.
Furthermore, the functions, $R_{0}^{\pm}\left(  E\right)  ,$ given by
$R_{0}\left(  E\right)  ,$ if $\operatorname{Im}E\neq0,$ and by $R_{0}\left(
E\pm i0\right)  ,$ if $E\in(-\infty,-m)\cup(m,\infty),$ are defined for
$E\in\mathbb{C}^{\pm}\cup\left(  -\infty,-m\right)  \cup\left(  m,+\infty
\right)  $ with values in $\mathcal{B}\left(  L_{s}^{2},\mathcal{H}%
_{-s}^{\alpha}\right)  ,$ and moreover, they are analytic for
$\operatorname{Im}E\neq0$ and locally H\"{o}lder continuous for $E\in
(-\infty,-m)\cup(m,\infty)$ with exponent $\vartheta$ such that $0<\vartheta
\leq s-1/2$ and $\vartheta<1.$

Next we consider the resolvent $R\left(  z\right)  :=\left(  H-z\right)
^{-1}$ for $z$ in the resolvent set of $H.$ The following limits exist for
$E\in\{(-\infty,-m)\cup(m,\infty)\}\backslash\sigma_{p}\left(  H\right)  $ in
the uniform operator topology in $\mathcal{B}\left(  L_{s}^{2},\mathcal{H}%
_{-s}^{\alpha}\right)  ,$ $s\in\left(  1/2,s_{0}\right]  ,$ $\left\vert
\alpha\right\vert \leq1,$ where $s_{0}$ is defined by Condition
\ref{basicnotions26} (see Theorem 3.9 of \cite{28})%
\begin{equation}
\left.  R\left(  E\pm i0\right)  =\lim_{\varepsilon\rightarrow+0}R\left(  E\pm
i\varepsilon\right)  =R_{0}\left(  E\pm i0\right)  \left(  1+\mathbf{V}%
R_{0}\left(  E\pm i0\right)  \right)  ^{-1}.\right.  \label{basicnotions13}%
\end{equation}
From this relation and the properties of $R_{0}^{\pm}\left(  E\right)  $ it
follows that the functions, $R^{\pm}\left(  E\right)  :=\{R\left(  E\right)  $
if $\operatorname{Im}E\neq0,$ and $R\left(  E\pm i0\right)  ,$ $E\in
\{(-\infty,-m)\cup(m,\infty)\}\backslash\sigma_{p}\left(  H\right)  \},$
defined for $E\in\mathbb{C}^{\pm}\cup\{(-\infty,-m)\cup(m,\infty
)\}\backslash\sigma_{p}\left(  H\right)  ,$ with values in $\mathcal{B}\left(
L_{s}^{2},\mathcal{H}_{-s}^{\alpha}\right)  $ are analytic for
$\operatorname{Im}E\neq0$ and locally H\"{o}lder continuous for $E\in
\{(-\infty,-m)\cup(m,\infty)\}\backslash\sigma_{p}\left(  H\right)  $ with
exponent $\vartheta$ such that $0<\vartheta\leq s-1/2,$ $s<s_{0}$ and
$\vartheta<1.$

We now give a spectral representation of $H_{0}$. Let us define%

\begin{equation}
\left(  \Gamma_{0}\left(  E\right)  f\right)  \left(  \omega\right)  :=\left(
2\pi\right)  ^{-\frac{3}{2}}\upsilon\left(  E\right)  P_{\omega}\left(
E\right)  \int_{\mathbb{R}^{3}}e^{-i\nu\left(  E\right)  \omega\cdot
x}f\left(  x\right)  dx, \label{basicnotions11}%
\end{equation}
where
\[
\upsilon\left(  E\right)  =\left(  E^{2}\left(  E^{2}-m^{2}\right)  \right)
^{\frac{1}{4}},\text{ and \ }\nu\left(  E\right)  =\sqrt{E^{2}-m^{2}%
},\text{\ }%
\]
and%
\[
P_{\omega}\left(  E\right)  :=\left\{
\begin{array}
[c]{c}%
P_{+}\left(  \nu\left(  E\right)  \omega\right)  ,\text{ \ }E>m,\\
P_{-}\left(  \nu\left(  E\right)  \omega\right)  ,\text{ \ }E<-m,
\end{array}
\right.
\]
($P_{\pm}\left(  \xi\right)  $ are given by (\ref{basicnotions49})). The
adjoint operator $\Gamma_{0}^{\ast}\left(  E\right)  :L^{2}\left(
\mathbb{S}^{2};\mathbb{C}^{4}\right)  \rightarrow L_{-s}^{2},$ $s>1/2,$ is
given by
\[
\left(  \Gamma_{0}^{\ast}\left(  E\right)  f\right)  \left(  \omega\right)
:=\left(  2\pi\right)  ^{-\frac{3}{2}}\upsilon\left(  E\right)  \int
_{\mathbb{S}^{2}}e^{i\nu\left(  E\right)  x\cdot\omega}P_{\omega}\left(
E\right)  f\left(  \omega\right)  d\omega.
\]
Note that $\Gamma_{0}\left(  E\right)  $ is unitary equivalent to the trace
operator $T_{0}\left(  E\right)  ,$ defined by using the Foldy-Wouthuysen
transform in \cite{28} (see \cite{NW}). Then, from the properties of
$T_{0}\left(  E\right)  $~(\cite{37}, \cite{38}, \cite{36}) we conclude that
$\Gamma_{0}\left(  E\right)  $ is bounded from $L_{s}^{2},$ $s>1/2,$ into
$L^{2}\left(  \mathbb{S}^{2};\mathbb{C}^{4}\right)  $ and the operator valued
function $\Gamma_{0}\left(  E\right)  $ is locally H\"{o}lder continuous on
$(-\infty,-m)\cup(m,\infty)$\ with exponent $\vartheta$ satisfying
$0<\vartheta\leq s-1/2$ and $\vartheta<1$.

Since the operators $\Gamma_{0}\left(  E\right)  $ and $T_{0}\left(  E\right)
$ are unitary equivalent, it follows from Section 3 of \cite{28} that the
operator
\begin{equation}
\left(  \mathcal{F}_{0}f\right)  \left(  E,\omega\right)  :=\left(  \Gamma
_{0}\left(  E\right)  f\right)  \left(  \omega\right)  )
\label{basicnotions51}%
\end{equation}
extends to unitary operator from $L^{2}$ onto
\begin{equation}
\mathcal{\hat{H}}:=\int_{\left(  -\infty,-m\right)  \cup\left(  m,+\infty
\right)  }^{\oplus}\mathcal{H}\left(  E\right)  dE, \label{basicnotions50}%
\end{equation}
where%
\[
\mathcal{H}\left(  E\right)  :=\int_{\mathbb{S}^{2}}^{\oplus}X^{\pm}\left(
\nu\left(  E\right)  \omega\right)  d\omega,\text{ \ }\pm E>m.
\]
Moreover, $\mathcal{F}_{0}$ gives a spectral representation of $H_{0}$%
\[
\mathcal{F}_{0}H_{0}\mathcal{F}_{0}^{\ast}=E,
\]
the operator of multiplication by $E$ in $\mathcal{\hat{H}}$. For these
results see \cite{NW}.

Since the scattering operator $\mathbf{S}$ commutes with $H_{0},$ the operator
$\mathcal{F}_{0}\mathbf{S}\mathcal{F}_{0}^{\ast}$ acts as a multiplication by
the operator valued function $S\left(  E\right)  :\mathcal{H}\left(  E\right)
\rightarrow\mathcal{H}\left(  E\right)  .$ We obtain from Theorem 4.2 of
\cite{28} (see also \cite{39},\cite{35},\cite{36}) the following stationary
formula for $S\left(  E\right)  $ (see \cite{NW}),
\begin{equation}
S\left(  E\right)  =I-2\pi i\Gamma_{0}\left(  E\right)  \left(  \mathbf{V}%
-\mathbf{V}R\left(  E+i0\right)  \mathbf{V}\right)  \Gamma_{0}\left(
E\right)  ^{\ast}, \label{basicnotions14}%
\end{equation}
for $E\in\{(-\infty,-m)\cup(m,\infty)\}\backslash\sigma_{p}\left(  H\right)
.$ Here $I$ is the identity operator on $\mathcal{H}\left(  E\right)  .$
$S\left(  E\right)  $ is called the scattering matrix.

\section{Asymptotics for $I\left(  R\right)  .$}

Let us first show that $I\left(  R\right)  $ is well defined. The following
result holds.

\begin{lemma}
\label{t42}For any fixed $0<R<\infty,$ and $f,g\in\mathcal{H}_{3/2+\varepsilon
}^{3/2+\varepsilon}$, $\varepsilon>0,$ we have
\[
I\left(  R\right)  <\infty.
\]

\end{lemma}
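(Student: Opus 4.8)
The plan is to reduce the finiteness of $I(R)$ to a local decay estimate in time for the free evolution, and then to prove that estimate by a stationary phase analysis in momentum space. First I would bound the integrand by the Cauchy--Schwarz inequality, pointwise in $\mathbb{C}^{4}$ and then in $L^{2}(B_{R})$:
\[
\left\vert\left\langle e^{-iH_{0}t}f,\zeta\left(\tfrac{\left\vert x\right\vert}{R}\right)e^{-iH_{0}t}g\right\rangle\right\vert\le\left\Vert\zeta\left(\tfrac{\left\vert x\right\vert}{R}\right)e^{-iH_{0}t}f\right\Vert_{L^{2}}\left\Vert\zeta\left(\tfrac{\left\vert x\right\vert}{R}\right)e^{-iH_{0}t}g\right\Vert_{L^{2}}.
\]
After a further Cauchy--Schwarz in $t$, it then suffices to show that for every $h\in\mathcal{H}_{3/2+\varepsilon}^{3/2+\varepsilon}$ the function $t\mapsto\left\Vert\zeta(\left\vert x\right\vert/R)e^{-iH_{0}t}h\right\Vert_{L^{2}}$ lies in $L^{2}([0,\infty))$. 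On $[0,1]$ this is immediate, since $e^{-iH_{0}t}$ is unitary and $0\le\zeta\le1$, so the contribution is dominated by $\left\Vert h\right\Vert_{L^{2}}^{2}$. The whole matter is therefore the decay on $[1,\infty)$.

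To obtain that decay I would pass to the Fourier side, where $\widehat{e^{-iH_{0}t}h}(\xi)=e^{-i\lambda(\xi)t}P_{+}(\xi)\hat{h}(\xi)+e^{i\lambda(\xi)t}P_{-}(\xi)\hat{h}(\xi)$, with $\lambda(\xi)=\sqrt{\left\vert\xi\right\vert^{2}+m^{2}}$ and $P_{\pm}(\xi)$ as in (\ref{basicnotions49}). Thus $(e^{-iH_{0}t}h)(x)$ is a sum of two oscillatory integrals with phases $\Phi_{\pm}(\xi)=\xi\cdot x\mp\lambda(\xi)t$. For $\left\vert x\right\vert\le R$ and $t$ large the stationary points, determined by $x=\pm t\,\xi/\lambda(\xi)$, can only occur where $\left\vert\xi\right\vert/\lambda(\xi)\le R/t$, i.e. in a shrinking neighbourhood of $\xi=0$. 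I would therefore split with a smooth frequency cut-off $\chi$ supported near the origin. On the support of $1-\chi$ one has $\left\vert\xi\right\vert\ge\delta$, hence $\left\vert\nabla_{\xi}\Phi_{\pm}\right\vert=\left\vert x\mp t\nabla\lambda(\xi)\right\vert\ge ct-R\ge\tfrac{c}{2}t$ for $t$ large; repeated non-stationary phase integrations by parts then produce powers of $t^{-1}$, each at the cost of one $\xi$-derivative falling on $P_{\pm}(\xi)\hat{h}(\xi)$, and the number of derivatives afforded by the weight $3/2+\varepsilon$ is enough to reach a decay rate strictly faster than $t^{-1/2}$.

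On the support of $\chi$ I would invoke the stationary phase estimate, using that the Hessian $\nabla^{2}\lambda(\xi)=\lambda^{-1}\left(I-\xi\otimes\xi/\lambda^{2}\right)$ has eigenvalues $m^{2}/\lambda^{3}$ and $\lambda^{-1}$, hence is non-degenerate with $\left\vert\det\nabla^{2}\lambda(\xi)\right\vert=m^{2}/\lambda^{5}$ bounded below on the support of $\chi$ (here $m>0$ is essential). The standard three-dimensional stationary phase bound then gives a contribution of size $C\,t^{-3/2}$, uniformly in $\left\vert x\right\vert\le R$, controlled by two $\xi$-derivatives of $P_{\pm}(\xi)\hat{h}(\xi)$. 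Collecting both regions yields a pointwise bound $\sup_{\left\vert x\right\vert\le R}\left\vert(e^{-iH_{0}t}h)(x)\right\vert\le C(R)(1+t)^{-\sigma}\left\Vert h\right\Vert_{\mathcal{H}_{3/2+\varepsilon}^{3/2+\varepsilon}}$ for some $\sigma>1/2$ (in fact $\sigma=3/2$) and all $t\ge1$, whence $\left\Vert\zeta(\left\vert x\right\vert/R)e^{-iH_{0}t}h\right\Vert_{L^{2}}\le C(R)\left\vert B_{R}\right\vert^{1/2}(1+t)^{-\sigma}\left\Vert h\right\Vert$. Since $\int_{1}^{\infty}(1+t)^{-2\sigma}\,dt<\infty$, the required square integrability follows, and with it $I(R)<\infty$.

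The delicate point, and the main obstacle, is the low-frequency region near the threshold $\xi=0$: there the group velocity $\nabla\lambda(\xi)$ vanishes, the stationary point migrates to the origin, and non-stationary phase is unavailable. Handling this region uniformly in $x\in B_{R}$ is exactly where the non-degeneracy of the Hessian of $\lambda$ (so that $m>0$) enters, and where the derivative count must be reconciled with the available regularity. This is precisely what the two indices $3/2+\varepsilon$ provide: the weight $3/2+\varepsilon$ of $h$ translates, under the Fourier transform, into enough $\xi$-smoothness of $\hat{h}$ to justify the integrations by parts and the stationary phase expansion, while the Sobolev order $3/2+\varepsilon$, being strictly above the critical index $3/2$ in three dimensions, guarantees $\hat{h}\in L^{1}$ and the embedding $\mathcal{H}^{3/2+\varepsilon}\hookrightarrow L^{\infty}$ that makes the oscillatory-integral bounds meaningful pointwise. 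Making the patching of the two frequency regions quantitative, with constants depending on $R$, is the only genuinely technical part of the argument.
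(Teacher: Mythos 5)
Your overall route --- Cauchy--Schwarz twice to reduce $I(R)<\infty$ to the statement that $t\mapsto\bigl\Vert \zeta(|x|/R)e^{-iH_{0}t}h\bigr\Vert _{L^{2}}$ lies in $L^{2}([0,\infty))$, followed by a frequency-split oscillatory-integral analysis in $x$-space --- is genuinely different from the paper's, which never leaves momentum space: there the bilinear form is decomposed with $\mathbf{P}_{\pm}$ into the four terms of (\ref{time9}), each oscillating factor $e^{\mp i\sqrt{r^{2}+m^{2}}t}$ is rewritten as $\frac{1}{t}$ times $\frac{\sqrt{r^{2}+m^{2}}}{r}\partial_{r}$ applied to itself, and a single radial integration by parts in \emph{each} of the two momentum variables produces integrable decay $t^{-2}$ of the whole bracket. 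Your reduction is clean, and the non-stationary-phase (high-frequency) half of your argument is sound: for $|\xi|\geq\delta$ and $|x|\leq R$ one indeed has $|\nabla_{\xi}\Phi_{\pm}|\geq ct$ for large $t$, one integration by parts costs one derivative of $P_{\pm}\hat{h}$, and $\nabla\hat{h}\in L_{3/2+\varepsilon}^{2}\subset L^{1}$ is available from $h\in\mathcal{H}_{3/2+\varepsilon}^{3/2+\varepsilon}$; this yields an $O(t^{-1})$ bound on that piece, which is already square-integrable on $[1,\infty)$.

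The genuine gap is in the low-frequency region. You invoke stationary phase with amplitude $P_{\pm}(\xi)\hat{h}(\xi)$ and assert the contribution is ``controlled by two $\xi$-derivatives'' of that amplitude. Under the hypothesis $h\in\mathcal{H}_{3/2+\varepsilon}^{3/2+\varepsilon}$, the Fourier transform $\hat{h}$ has only fractional Sobolev regularity $3/2+\varepsilon$: in three dimensions $\mathcal{H}^{3/2+\varepsilon}$ embeds into $C^{0,\varepsilon}$ but not into $C^{1}$, so two classical derivatives of $\hat{h}$ need not exist, and no form of the stationary-phase lemma applies to this amplitude. This is precisely the point where, in your own words, ``the derivative count must be reconciled with the available regularity,'' and as written it is not: the weight $3/2+\varepsilon$ buys smoothness exactly $3/2+\varepsilon$ of $\hat{h}$ --- enough for one integration by parts, not for a stationary-phase expansion --- and for the same reason the claimed rate $\sigma=3/2$ is unjustified (only the low-frequency piece can decay that fast; the high-frequency piece gives $t^{-1}$ with the regularity at hand). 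The step can be repaired by a standard rearrangement: place the cut-off in the kernel rather than on the data, i.e.\ write the low-frequency part as $K_{t}^{\pm}\ast h$ with $K_{t}^{\pm}=\mathcal{F}^{-1}\left[ e^{\mp it\lambda}\chi P_{\pm}\right] $, whose amplitude $\chi P_{\pm}$ is $C^{\infty}$ and compactly supported; stationary phase applied to the \emph{kernel} gives $\Vert K_{t}^{\pm}\Vert _{L^{\infty}}\leq Ct^{-3/2}$ uniformly, and Young's inequality then needs only $h\in L^{1}$, which follows from $h\in L_{3/2+\varepsilon}^{2}$ since $2(3/2+\varepsilon)>3$. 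With this modification your proof closes, with $\sigma=1$, which is all your Cauchy--Schwarz reduction requires.
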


\begin{proof}
Using that $\mathbf{P}_{+}+\mathbf{P}_{-}=I$, we get
\begin{equation}
\left.
\begin{array}
[c]{c}%
I\left(  R\right)  =%
{\displaystyle\int\limits_{0}^{\infty}}
\left\langle \mathcal{F}\left(  \left(  \mathbf{P}_{+}+\mathbf{P}_{-}\right)
\left(  e^{-iH_{0}t}f\right)  \right)  ,\mathcal{F}\left(  \zeta\left(
\frac{\left\vert x\right\vert }{R}\right)  \left(  \mathbf{P}_{+}%
+\mathbf{P}_{-}\right)  e^{-iH_{0}t}g\right)  \right\rangle dt\\
=\left(  2\pi\right)  ^{-3}%
{\displaystyle\int\limits_{0}^{\infty}}
\left\langle \mathcal{F}\left(  \left(  \mathbf{P}_{+}+\mathbf{P}_{-}\right)
e^{-iH_{0}t}f\right)  ,\tilde{\zeta}_{R}\ast\mathcal{F}\left(  \left(
\mathbf{P}_{+}+\mathbf{P}_{-}\right)  e^{-iH_{0}t}g\right)  \right\rangle dt\\
=\left(  2\pi\right)  ^{-3}\left(  I_{1}+I_{2}+I_{3}+I_{4}\right)  \left(
R\right)  ,
\end{array}
\right.  \label{time9}%
\end{equation}
where $\ast$ denotes the convolution, $\tilde{\zeta}_{R}=\left(  2\pi\right)
^{3/2}\mathcal{F}\left(  \zeta\left(  \frac{\left\vert x\right\vert }%
{R}\right)  \right)  $ and
\[
I_{1}\left(  R\right)  :=\int\limits_{0}^{\infty}\int\limits_{\mathbb{R}^{3}%
}\int\limits_{\mathbb{R}^{3}}\left\langle e^{-i\sqrt{\left\vert p\right\vert
^{2}+m^{2}}t}f_{+}\left(  p\right)  ,\tilde{\zeta}_{R}\left(  p-q\right)
e^{-i\sqrt{\left\vert q\right\vert ^{2}+m^{2}}t}g_{+}\left(  q\right)
\right\rangle dqdpdt,
\]%
\[
I_{2}\left(  R\right)  :=\int\limits_{0}^{\infty}\int\limits_{\mathbb{R}^{3}%
}\int\limits_{\mathbb{R}^{3}}\left\langle e^{-i\sqrt{\left\vert p\right\vert
^{2}+m^{2}}t}f_{+}\left(  p\right)  ,\tilde{\zeta}_{R}\left(  p-q\right)
e^{i\sqrt{\left\vert q\right\vert ^{2}+m^{2}}t}g_{-}\left(  q\right)
\right\rangle dqdpdt,
\]%
\[
I_{3}\left(  R\right)  :=\int\limits_{0}^{\infty}\int\limits_{\mathbb{R}^{3}%
}\int\limits_{\mathbb{R}^{3}}\left\langle e^{i\sqrt{\left\vert p\right\vert
^{2}+m^{2}}t}f_{-}\left(  p\right)  ,\tilde{\zeta}_{R}\left(  p-q\right)
e^{-i\sqrt{\left\vert q\right\vert ^{2}+m^{2}}t}g_{+}\left(  q\right)
\right\rangle dqdpdt,
\]%
\[
I_{4}\left(  R\right)  :=\int\limits_{0}^{\infty}\int\limits_{\mathbb{R}^{3}%
}\int\limits_{\mathbb{R}^{3}}\left\langle e^{i\sqrt{\left\vert p\right\vert
^{2}+m^{2}}t}f_{-}\left(  p\right)  ,\tilde{\zeta}_{R}\left(  p-q\right)
e^{i\sqrt{\left\vert q\right\vert ^{2}+m^{2}}t}g_{-}\left(  q\right)
\right\rangle dqdpdt,
\]
with $f_{\pm}\left(  p\right)  :=P_{\pm}\left(  p\right)  \hat{f}\left(
p\right)  $ and $g_{\pm}\left(  p\right)  :=P_{\pm}\left(  p\right)  \hat
{g}\left(  p\right)  .$ We prove that $\left\vert I_{1}\left(  R\right)
\right\vert <\infty.$ The proof of $\left\vert I_{j}\left(  R\right)
\right\vert <\infty,$ for $j=2,3,4,$ is similar.

Let us define%
\[
h\left(  f_{+},g_{+};t\right)  :=\int\limits_{\mathbb{R}^{3}}\int
\limits_{\mathbb{R}^{3}}\left\langle e^{-i\sqrt{\left\vert p\right\vert
^{2}+m^{2}}t}f_{+}\left(  p\right)  ,\tilde{\zeta}_{R}\left(  p-q\right)
e^{-i\sqrt{\left\vert q\right\vert ^{2}+m^{2}}t}g_{+}\left(  q\right)
\right\rangle dqdp.
\]
It is enough to show that%
\begin{equation}
\left.  \left\vert \int\limits_{0}^{\infty}h\left(  f_{+},g_{+};t\right)
dt\right\vert \leq C\left\Vert f_{+}\right\Vert _{\mathcal{H}_{3/2+\varepsilon
}^{3/2+\varepsilon}\left(  \mathbb{R}^{3}\right)  }\left\Vert g_{+}\right\Vert
_{\mathcal{H}_{3/2+\varepsilon}^{3/2+\varepsilon}\left(  \mathbb{R}%
^{3}\right)  }.\right.  \label{time11}%
\end{equation}
Since
\[
\left\vert \tilde{\zeta}_{R}\left(  p\right)  \right\vert \leq C,\text{
\ }p\in\mathbb{R}^{3},
\]
then%
\[
\left\vert \int\limits_{0}^{1}h\left(  f_{+},g_{+};t\right)  dt\right\vert
\leq C\left\Vert f_{+}\right\Vert _{L^{1}\left(  \mathbb{R}^{3}\right)
}\left\Vert g_{+}\right\Vert _{L^{1}\left(  \mathbb{R}^{3}\right)  }\leq
C\left\Vert f_{+}\right\Vert _{L_{3/2+\varepsilon}^{2}\left(  \mathbb{R}%
^{3}\right)  }\left\Vert g_{+}\right\Vert _{L_{3/2+\varepsilon}^{2}\left(
\mathbb{R}^{3}\right)  },
\]
and thus, in order to get (\ref{time11}), we need the estimate%
\begin{equation}
\left.  \left\vert \int\limits_{1}^{\infty}h\left(  f_{+},g_{+};t\right)
dt\right\vert \leq C\left\Vert f_{+}\right\Vert _{\mathcal{H}_{3/2+\varepsilon
}^{3/2+\varepsilon}\left(  \mathbb{R}^{3}\right)  }\left\Vert g_{+}\right\Vert
_{\mathcal{H}_{3/2+\varepsilon}^{3/2+\varepsilon}\left(  \mathbb{R}%
^{3}\right)  }.\right.  \label{t1}%
\end{equation}
Suppose that $f,g\in\mathcal{S}$ (here $\mathcal{S}$ denote the Schwartz
class). Observe that%
\begin{equation}
\left.
\begin{array}
[c]{c}%
h\left(  f_{+},g_{+};t\right)  =%
{\displaystyle\int\limits_{\mathbb{S}^{2}}}
{\displaystyle\int\limits_{\mathbb{S}^{2}}}
{\displaystyle\int\limits_{0}^{\infty}}
{\displaystyle\int\limits_{0}^{\infty}}
\left\langle e^{-i\sqrt{r^{2}+m^{2}}t}f_{+}\left(  r\omega\right)
,\tilde{\zeta}_{R}\left(  r\omega-r_{1}\omega^{\prime}\right)  e^{-i\sqrt
{r_{1}^{2}+m^{2}}t}g_{+}\left(  r_{1}\omega^{\prime}\right)  \right\rangle
r^{2}r_{1}^{2}drdr_{1}d\omega^{\prime}d\omega\\
=\dfrac{1}{t^{2}}%
{\displaystyle\int\limits_{\mathbb{S}^{2}}}
{\displaystyle\int\limits_{\mathbb{S}^{2}}}
{\displaystyle\int\limits_{0}^{\infty}}
{\displaystyle\int\limits_{0}^{\infty}}
\left\langle \left(  \partial_{r}e^{-i\sqrt{r^{2}+m^{2}}t}\right)
\dfrac{\sqrt{r^{2}+m^{2}}}{r}f_{+}\left(  r\omega\right)  ,\right. \\
\left.  \times\left(  \partial_{r_{1}}e^{-i\sqrt{r_{1}^{2}+m^{2}}t}\right)
\tilde{\zeta}_{R}\left(  r\omega-r_{1}\omega^{\prime}\right)  \dfrac
{\sqrt{r_{1}^{2}+m^{2}}}{r_{1}}g_{+}\left(  r_{1}\omega^{\prime}\right)
\right\rangle r^{2}r_{1}^{2}drdr_{1}d\omega^{\prime}d\omega.
\end{array}
\right.  \label{time10}%
\end{equation}
Integrating by parts in both $r$ and $r_{1},$ we get
\[
\left.
\begin{array}
[c]{c}%
h\left(  f_{+},g_{+};t\right)  =\dfrac{1}{t^{2}}%
{\displaystyle\int\limits_{\mathbb{S}^{2}}}
{\displaystyle\int\limits_{\mathbb{S}^{2}}}
{\displaystyle\int\limits_{0}^{\infty}}
{\displaystyle\int\limits_{0}^{\infty}}
e^{-i\left(  \sqrt{r^{2}+m^{2}}-\sqrt{r_{1}^{2}+m^{2}}\right)  t}\partial
_{r}\partial_{r_{1}}\left\langle r\sqrt{r^{2}+m^{2}}f_{+}\left(
r\omega\right)  ,\right. \\
\left.  \times\tilde{\zeta}_{R}\left(  r\omega-r_{1}\omega^{\prime}\right)
r_{1}\sqrt{r_{1}^{2}+m^{2}}g_{+}\left(  r_{1}\omega^{\prime}\right)
\right\rangle drdr_{1}d\omega^{\prime}d\omega.
\end{array}
\right.
\]
Noting that
\[
\left\vert \nabla^{j}\tilde{\zeta}_{R}\left(  p\right)  \right\vert \leq
C_{j},\text{ }p\in\mathbb{R}^{3},\text{\ for all }j,
\]
we obtain%
\[
\left.  \left\vert \int\limits_{1}^{\infty}h\left(  f_{+},g_{+};t\right)
dt\right\vert \leq C\left(  \left\Vert \hat{f}\right\Vert _{L^{\infty}\left(
\left\vert p\right\vert \leq1\right)  }+\left\Vert \hat{f}\right\Vert
_{\mathcal{H}_{3/2+\varepsilon}^{1}\left(  \mathbb{R}^{3}\right)  }\right)
\left(  \left\Vert \hat{g}\right\Vert _{L^{\infty}\left(  \left\vert
p\right\vert \leq1\right)  }+\left\Vert \hat{g}\right\Vert _{\mathcal{H}%
_{3/2+\varepsilon}^{1}\left(  \mathbb{R}^{3}\right)  }\right)  \right.
\]
and then, using the Sobolev embedding theorem, we get (\ref{t1}) for
$f,g\in\mathcal{S}$. Hence, by continuity, we extend the estimate (\ref{t1})
for all $f,g\in\mathcal{H}_{3/2+\varepsilon}^{3/2+\varepsilon}$ and therefore,
we arrive to (\ref{time11}).
\end{proof}

We now study the asymptotics of $I\left(  R\right)  $, as $R\rightarrow
\infty.$ We have the following

\begin{theorem}
\label{t58}Let $f,g\in\mathcal{H}_{2}^{3/2+\varepsilon}$, $\varepsilon>0$.
Then, as $R\rightarrow\infty,$%
\begin{equation}
\left.
\begin{array}
[c]{c}%
I\left(  R\right)  =R%
{\displaystyle\int\limits_{\mathbb{R}^{3}}}
\left\langle \dfrac{\sqrt{\left\vert p\right\vert ^{2}+m^{2}}}{\left\vert
p\right\vert }f_{+}\left(  p\right)  ,g_{+}\left(  p\right)  \right\rangle
dp+R%
{\displaystyle\int\limits_{\mathbb{R}^{3}}}
\left\langle \dfrac{\sqrt{\left\vert p\right\vert ^{2}+m^{2}}}{\left\vert
p\right\vert }f_{-}\left(  p\right)  ,g_{-}\left(  p\right)  \right\rangle
dp\\
+i%
{\displaystyle\int\limits_{\mathbb{R}^{3}}}
\left\langle f_{+}\left(  p\right)  ,\dfrac{\sqrt{\left\vert p\right\vert
^{2}+m^{2}}}{2\left\vert p\right\vert ^{2}}g_{+}\left(  p\right)
+\dfrac{\sqrt{\left\vert p\right\vert ^{2}+m^{2}}}{2\left\vert p\right\vert
^{2}}p\cdot\triangledown\left(  g_{+}\left(  p\right)  \right)  +\dfrac
{1}{2\left\vert p\right\vert ^{2}}p\cdot\triangledown\left(  \sqrt{\left\vert
p\right\vert ^{2}+m^{2}}g_{+}\left(  p\right)  \right)  \right\rangle dp\\
-i%
{\displaystyle\int\limits_{\mathbb{R}^{3}}}
\left\langle f_{-}\left(  p\right)  ,\dfrac{\sqrt{\left\vert p\right\vert
^{2}+m^{2}}}{2\left\vert p\right\vert ^{2}}g_{-}\left(  p\right)
+\dfrac{\sqrt{\left\vert p\right\vert ^{2}+m^{2}}}{2\left\vert p\right\vert
^{2}}p\cdot\triangledown\left(  g_{-}\left(  p\right)  \right)  +\dfrac
{1}{2\left\vert p\right\vert ^{2}}p\cdot\triangledown\left(  \sqrt{\left\vert
p\right\vert ^{2}+m^{2}}g_{-}\left(  p\right)  \right)  \right\rangle
dp+o\left(  1\right)  ,
\end{array}
\right.  \label{t55}%
\end{equation}
where $f_{\pm}\left(  p\right)  =P_{\pm}\left(  p\right)  \hat{f}\left(
p\right)  $ and $g_{\pm}\left(  p\right)  =P_{\pm}\left(  p\right)  \hat
{g}\left(  p\right)  .$
\end{theorem}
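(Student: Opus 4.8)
The plan is to build on the decomposition $I(R)=(2\pi)^{-3}\bigl(I_1+I_2+I_3+I_4\bigr)(R)$ obtained in the proof of Lemma \ref{t42} and to treat each piece according to whether its phase can vanish. In $I_2$ and $I_4$ the total phase obeys $\sqrt{|p|^2+m^2}+\sqrt{|q|^2+m^2}\ge 2m>0$, so these terms are genuinely oscillatory in $t$; in $I_1$ and $I_3$ the phase $\omega(p,q):=\sqrt{|p|^2+m^2}-\sqrt{|q|^2+m^2}$ vanishes precisely on the shell $|p|=|q|$, which is exactly where the convolution kernel $\tilde\zeta_R(p-q)=R^3\tilde\zeta_1\bigl(R(p-q)\bigr)$ concentrates as $R\to\infty$. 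Hence I expect $I_1$ and $I_3$ to carry the whole asymptotics and $I_2,I_4$ to be negligible.

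For $I_2$ and $I_4$ I would insert a convergence factor $e^{-\epsilon t}$, perform the now absolutely convergent $t$-integration, and pass to $\epsilon\to0^+$ using the bound from Lemma \ref{t42}. This produces a kernel $\bigl(\epsilon+i(\sqrt{|p|^2+m^2}+\sqrt{|q|^2+m^2})\bigr)^{-1}$ with no diagonal singularity. Letting $R\to\infty$, the concentration of $\tilde\zeta_R$ localizes the $q$-integral to $q=p$, where the integrand contains $\langle f_+(p),g_-(p)\rangle=\langle P_+(p)\hat f(p),P_-(p)\hat g(p)\rangle=0$ since $P_+(p)P_-(p)=0$. Thus $I_2,I_4=o(1)$.

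The heart of the matter is $I_1$, and symmetrically $I_3$. After the same regularization, the $t$-integration yields the kernel $\bigl(\epsilon+i\,\omega(p,q)\bigr)^{-1}$, whose real and imaginary parts tend, as $\epsilon\to0^+$, to $\pi\delta(\omega)$ and $-\,\mathrm{p.v.}\,\omega^{-1}$. I would split along this decomposition. For the $\delta$-part I resolve the delta on the shell, $\delta(\omega)=\tfrac{\sqrt{|p|^2+m^2}}{|p|}\,\delta(|q|-|p|)$, write $q=s\omega'$, and exploit that $\tilde\zeta_R$ forces $\omega'\to p/|p|$; a scaling computation (the angular spread is of order $(R|p|)^{-1}$, so $\int_{\mathbb{S}^2}R^3\tilde\zeta_1(\cdots)\,d\omega'\to |p|^{-2}R\int_{\Pi}\tilde\zeta_1$ with $\Pi$ the tangent plane and $\int_{\Pi}\tilde\zeta_1=2(2\pi)^2$) collapses all constants to give exactly $R\int \tfrac{\sqrt{|p|^2+m^2}}{|p|}\langle f_+,g_+\rangle\,dp$, the first leading term of (\ref{t55}). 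For the principal-value part, where kernel and denominator vanish together on the diagonal, I would pass to radial coordinates and Taylor-expand the full amplitude $r_1\sqrt{r_1^2+m^2}\,g_+(r_1\omega')$ (energy and Jacobian factors included) in $r_1$ about the diagonal: the zeroth order is killed by the oddness of $\mathrm{p.v.}\,\omega^{-1}$ against the even kernel, while the first order survives and, after the $R\to\infty$ limit, reproduces the constant terms carrying the $\tfrac{\sqrt{|p|^2+m^2}}{2|p|^2}$ and $p\cdot\nabla$ structure displayed in (\ref{t55}).

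The main obstacle is that the $t$-integral is only conditionally convergent and its convergence is entangled with the limit $R\to\infty$, so the two limits $\epsilon\to0^+$ and $R\to\infty$ cannot be interchanged naively; indeed, one may prefer to avoid the distributional $\delta$/principal-value split altogether and instead integrate by parts in $t$, as in the proof of Lemma \ref{t42}, at the price of heavier but cleaner estimates. Either way, the technical core is to show, in the weighted Sobolev norm $\mathcal{H}_2^{3/2+\varepsilon}$, that the error between the regularized integral and the extracted leading-plus-constant part is $o(1)$, and that the Taylor remainder and the tails of the radial and angular integrals also contribute $o(1)$; this is the content of the auxiliary lemmas that follow. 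It is precisely the weight $2$ and the regularity $3/2+\varepsilon$ that make the single derivative falling on $g_\pm$ integrable against the concentrating kernel and the singular factor $|p|^{-2}$ near $p=0$, so these uniform bounds are exactly what forces that choice of space.
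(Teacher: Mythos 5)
Your overall strategy --- regularize the time integral, apply a Sokhotski--Plemelj splitting into a $\delta(\omega)$ part (producing the term linear in $R$) and a principal-value part (producing the constants), and discard the sum-phase terms as $o(1)$ --- is the same as the paper's. But you have misidentified which of the four terms in the decomposition (\ref{time9}) are resonant, and this is fatal to the proposal as written. Track the conjugation in the $\mathbb{C}^{4}$ inner product: the positive-energy parts evolve as $e^{-i\sqrt{|p|^{2}+m^{2}}\,t}$ and the negative-energy parts as $e^{+i\sqrt{|q|^{2}+m^{2}}\,t}$, so after conjugating the first slot, $I_{1}$ (pairing $f_{+}$ with $g_{+}$) and $I_{4}$ (pairing $f_{-}$ with $g_{-}$) carry the difference phase $e^{\pm i\left(\sqrt{|p|^{2}+m^{2}}-\sqrt{|q|^{2}+m^{2}}\right)t}$, while $I_{2}$ ($f_{+}$ with $g_{-}$) and $I_{3}$ ($f_{-}$ with $g_{+}$) carry the sum phase. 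The resonant pair is therefore $\{I_{1},I_{4}\}$, not $\{I_{1},I_{3}\}$ as you claim; this is exactly how the paper proceeds: $I_{1}$ and $I_{4}$ give the main terms, and Lemma \ref{t48} shows $I_{2},I_{3}=o(1)$.

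Concretely, two steps of your plan fail. First, you dispose of $I_{4}$ by arguing that the localization $q\to p$ produces $\left\langle f_{+}(p),g_{-}(p)\right\rangle =0$; but the integrand of $I_{4}$ pairs $f_{-}$ with $g_{-}$, and $\left\langle f_{-}(p),g_{-}(p)\right\rangle$ does not vanish --- $I_{4}$ in fact contributes the entire second and fourth lines of (\ref{t55}). Second, you expect $I_{3}$ to supply, ``symmetrically,'' the negative-energy main terms; but $I_{3}$ pairs $f_{-}$ with $g_{+}$, its phase is bounded below by $2m$, and even its diagonal value $\left\langle f_{-}(p),g_{+}(p)\right\rangle$ vanishes by the very orthogonality $P_{-}(p)P_{+}(p)=0$ that you invoke. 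Carried out as written, your argument would thus yield an expansion containing only the $f_{+},g_{+}$ terms, i.e.\ a formula that contradicts (\ref{t55}). The repair is essentially a relabeling: apply your $I_{1}$ analysis verbatim to $I_{4}$, noting that its time integration produces the complex-conjugate kernel, whose principal-value part enters with the opposite sign and yields the $-i$ in front of the $f_{-},g_{-}$ constant terms; and apply your oscillation-plus-orthogonality argument to $I_{2}$ and $I_{3}$. With that correction your outline coincides with the paper's proof (Lemmas \ref{t54}, \ref{t45}, \ref{t57}, \ref{t49} for the resonant terms, Lemma \ref{t48} for the oscillatory ones), up to the technical estimates you defer; but as stated the proposal proves a false statement.
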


\begin{proof}
We decompose $I\left(  R\right)  $ as in relation (\ref{time9}). Let us
consider the term $I_{1}.$ Let $\varphi\left(  s\right)  \in C_{0}^{\infty
}\left(  \mathbb{R}\right)  ,$ be such that $\varphi\left(  s\right)  =1$ in
some neighborhood of $s=0.$ Observe that%
\[
\left.
\begin{array}
[c]{c}%
I_{1}\left(  R\right)  =\lim\limits_{\varepsilon\rightarrow0}%
{\displaystyle\int\limits_{0}^{\infty}}
{\displaystyle\int\limits_{\mathbb{R}^{3}}}
{\displaystyle\int\limits_{\mathbb{R}^{3}}}
e^{-i\left(  \left\vert p\right\vert -\left\vert q\right\vert \right)
t}\varphi\left(  \varepsilon\dfrac{\sqrt{\left\vert p\right\vert ^{2}+m^{2}%
}+\sqrt{\left\vert q\right\vert ^{2}+m^{2}}}{\left\vert p\right\vert
+\left\vert q\right\vert }t\right) \\
\times\left\langle \dfrac{\sqrt{\left\vert p\right\vert ^{2}+m^{2}}%
+\sqrt{\left\vert q\right\vert ^{2}+m^{2}}}{\left\vert p\right\vert
+\left\vert q\right\vert }f_{+}\left(  p\right)  ,\tilde{\zeta}_{R}\left(
p-q\right)  g_{+}\left(  q\right)  \right\rangle dqdpdt.
\end{array}
\right.
\]

Proceeding as in (\ref{time10}), we show that
\[
\int\limits_{\mathbb{R}^{3}}\int\limits_{\mathbb{R}^{3}}e^{-i\left(
\left\vert p\right\vert -\left\vert q\right\vert \right)  t}\varphi\left(
\varepsilon\frac{\sqrt{\left\vert p\right\vert ^{2}+m^{2}}+\sqrt{\left\vert
q\right\vert ^{2}+m^{2}}}{\left\vert p\right\vert +\left\vert q\right\vert
}t\right)  \left\langle \frac{\sqrt{\left\vert p\right\vert ^{2}+m^{2}}%
+\sqrt{\left\vert q\right\vert ^{2}+m^{2}}}{\left\vert p\right\vert
+\left\vert q\right\vert }f_{+}\left(  p\right)  ,\tilde{\zeta}_{R}\left(
p-q\right)  g_{+}\left(  q\right)  \right\rangle dqdp
\]
belongs to $L^{1},$ as a function of $t,$ uniformly on $\varepsilon\leq1.$
Then, it follows from the dominated convergence theorem that%
\[
I_{1}\left(  R\right)  =\int\limits_{0}^{\infty}\int\limits_{\mathbb{R}^{3}%
}\int\limits_{\mathbb{R}^{3}}\left\langle e^{-i\left(  \left\vert p\right\vert
-\left\vert q\right\vert \right)  t}\frac{\sqrt{\left\vert p\right\vert
^{2}+m^{2}}+\sqrt{\left\vert q\right\vert ^{2}+m^{2}}}{\left\vert p\right\vert
+\left\vert q\right\vert }f_{+}\left(  p\right)  ,\tilde{\zeta}_{R}\left(
p-q\right)  g_{+}\left(  q\right)  \right\rangle dqdpdt.
\]
Let the function $F\left(  s\right)  ,$ $s\in\mathbb{R}$, be such that
$F\left(  s\right)  =1$ for $0\leq s<\infty$ and $F\left(  s\right)  =0$ for
$s<0$. Passing to the spherical coordinate system in the $q$ variable in the
expression for $I_{1}\left(  R\right)  $, we get
\[
\left.
\begin{array}
[c]{c}%
I_{1}\left(  R\right)  =\lim\limits_{\tau\rightarrow+0}%
{\displaystyle\int\limits_{\mathbb{R}^{3}}}
{\displaystyle\int\limits_{\mathbb{S}^{2}}}
{\displaystyle\int\limits_{0}^{\infty}}
\left(  \int\limits_{0}^{\infty}e^{-i\left(  \left\vert p\right\vert
-r\right)  t}e^{-\tau t}dt\right)  \left\langle \dfrac{\sqrt{\left\vert
p\right\vert ^{2}+m^{2}}+\sqrt{r^{2}+m^{2}}}{\left\vert p\right\vert +r}%
f_{+}\left(  p\right)  ,\tilde{\zeta}_{R}\left(  p-r\omega\right)
g_{+}\left(  r\omega\right)  \right\rangle r^{2}drdpd\omega\\
=\lim\limits_{\tau\rightarrow+0}%
{\displaystyle\int\limits_{\mathbb{R}^{3}}}
{\displaystyle\int\limits_{\mathbb{S}^{2}}}
{\displaystyle\int\limits_{-\infty}^{\infty}}
\left(
{\displaystyle\int\limits_{0}^{\infty}}
e^{-irt}e^{-\tau t}dt\right)  f\left(  \left(  \left\vert p\right\vert
-r\right)  \omega,p\right)  drdpd\omega,
\end{array}
\right.
\]
where%
\[
f\left(  q,p\right)  :=\left\langle \frac{\sqrt{\left\vert p\right\vert
^{2}+m^{2}}+\sqrt{\left\vert q\right\vert ^{2}+m^{2}}}{\left\vert p\right\vert
+\left\vert q\right\vert }f_{+}\left(  p\right)  ,\tilde{\zeta}_{R}\left(
p-q\right)  g_{+}\left(  q\right)  \right\rangle F\left(  \left\vert
q\right\vert \right)  \left\vert q\right\vert ^{2}.
\]
Moreover, as
\[
\left.
\begin{array}
[c]{c}%
\lim\limits_{\tau\rightarrow+0}%
{\displaystyle\int\limits_{\mathbb{R}^{3}}}
{\displaystyle\int\limits_{\mathbb{S}^{2}}}
{\displaystyle\int\limits_{-\infty}^{\infty}}
\left(
{\displaystyle\int\limits_{0}^{\infty}}
e^{-irt}e^{-\tau t}dt\right)  f\left(  \left(  \left\vert p\right\vert
-r\right)  \omega,p\right)  drd\omega dp=-\lim\limits_{\tau\rightarrow+0}i%
{\displaystyle\int\limits_{\mathbb{R}^{3}}}
{\displaystyle\int\limits_{\mathbb{S}^{2}}}
{\displaystyle\int\limits_{-\infty}^{\infty}}
\dfrac{f\left(  \left(  \left\vert p\right\vert -r\right)  \omega,p\right)
}{r-i\tau}drd\omega dp\\
=-\lim\limits_{\delta\rightarrow0}\lim\limits_{\tau\rightarrow+0}%
{\displaystyle\int\limits_{\mathbb{R}^{3}}}
{\displaystyle\int\limits_{\mathbb{S}^{2}}}
if\left(  \left\vert p\right\vert \omega,p\right)  \left(
{\displaystyle\int\limits_{-\delta}^{\delta}}
\dfrac{dr}{r-i\tau}\right)  d\omega dp-i\lim\limits_{\delta\rightarrow0}%
{\displaystyle\int\limits_{\mathbb{R}^{3}}}
{\displaystyle\int\limits_{\mathbb{S}^{2}}}
\left[
{\displaystyle\int\limits_{-\infty}^{-\delta}}
+%
{\displaystyle\int\limits_{\delta}^{\infty}}
\right]  \dfrac{f\left(  \left(  \left\vert p\right\vert -r\right)
\omega,p\right)  }{r}drd\omega dp\\
-\lim\limits_{\delta\rightarrow0}\lim\limits_{\tau\rightarrow+0}%
{\displaystyle\int\limits_{\mathbb{R}^{3}}}
{\displaystyle\int\limits_{\mathbb{S}^{2}}}
i%
{\displaystyle\int\limits_{-\delta}^{\delta}}
\dfrac{f\left(  \left(  \left\vert p\right\vert -r\right)  \omega,p\right)
-f\left(  \left\vert p\right\vert \omega,p\right)  }{r-i\tau}drd\omega dp,
\end{array}
\right.
\]
we conclude that%
\[
I_{1}\left(  R\right)  =I_{1,1}\left(  R\right)  +I_{1,2}\left(  R\right)  ,
\]
with%
\[
I_{1,1}\left(  R\right)  :=\pi\int\limits_{\mathbb{R}^{3}}\int
\limits_{\mathbb{S}^{2}}f\left(  \left\vert p\right\vert \omega,p\right)
d\omega dp
\]
and
\[
I_{1,2}\left(  R\right)  :=-i\lim\limits_{\delta\rightarrow0}\int
\limits_{\mathbb{R}^{3}}\int\limits_{\mathbb{S}^{2}}\left[  \int
\limits_{-\infty}^{-\delta}+\int\limits_{\delta}^{\infty}\right]
\frac{f\left(  \left(  \left\vert p\right\vert -r\right)  \omega,p\right)
}{r}drd\omega dp.
\]
Using Lemma \ref{t54} for $I_{1,1}\left(  R\right)  $ we obtain the first term
in the R.H.S. of the asymptotic expansion (\ref{t55}). Decomposing
$I_{1,2}\left(  R\right)  $ as in the sum (\ref{t56}) and applying Lemma
\ref{t45} to $I_{1,2}^{1}\left(  R\right)  ,$ Lemma \ref{t57} to $I_{1,2}%
^{2}\left(  R\right)  $ and Lemma \ref{t49} to $I_{1,2}^{3}\left(  R\right)  $
we get the third term in the R.H.S. of (\ref{t55}).

Now note that
\[
\left.
\begin{array}
[c]{c}%
I_{4}\left(  R\right)  =\lim\limits_{\tau\rightarrow+0}\int\limits_{\mathbb{R}%
^{3}}\int\limits_{\mathbb{S}^{2}}\int\limits_{0}^{\infty}\left(
\int\limits_{0}^{\infty}e^{i\left(  \left\vert p\right\vert -r\right)
t}e^{-\tau t}dt\right)  \left\langle \frac{\sqrt{\left\vert p\right\vert
^{2}+m^{2}}+\sqrt{r^{2}+m^{2}}}{\left\vert p\right\vert +r}f_{-}\left(
p\right)  ,\tilde{\zeta}_{R}\left(  p-r\omega\right)  g_{-}\left(
r\omega\right)  \right\rangle r^{2}drdpd\omega\\
=\lim\limits_{\tau\rightarrow+0}\int\limits_{\mathbb{R}^{3}}\int
\limits_{\mathbb{S}^{2}}\int\limits_{-\infty}^{\infty}\left(  \int
\limits_{0}^{\infty}e^{irt}e^{-\tau t}dt\right)  f_{1}\left(  \left(
\left\vert p\right\vert -r\right)  \omega,p\right)  drdpd\omega=\lim
\limits_{\tau\rightarrow+0}i\int\limits_{\mathbb{R}^{3}}\int
\limits_{\mathbb{S}^{2}}\int\limits_{-\infty}^{\infty}\frac{f_{1}\left(
\left(  \left\vert p\right\vert -r\right)  \omega,p\right)  }{r+i\tau
}drd\omega dp\\
=i\lim\limits_{\delta\rightarrow0}\int\limits_{\mathbb{R}^{3}}\int
\limits_{\mathbb{S}^{2}}\left[  \int\limits_{-\infty}^{-\delta}+\int
\limits_{\delta}^{\infty}\right]  \frac{f_{1}\left(  \left(  \left\vert
p\right\vert -r\right)  \omega,p\right)  }{r}drd\omega dp+\lim\limits_{\delta
\rightarrow0}\lim\limits_{\tau\rightarrow+0}\int\limits_{\mathbb{R}^{3}}%
\int\limits_{\mathbb{S}^{2}}i\int\limits_{-\delta}^{\delta}\frac{f_{1}\left(
\left(  \left\vert p\right\vert -r\right)  \omega,p\right)  -f_{1}\left(
\left\vert p\right\vert \omega,p\right)  }{r+i\tau}drd\omega dp\\
+\lim\limits_{\delta\rightarrow0}\lim\limits_{\tau\rightarrow+0}%
\int\limits_{\mathbb{R}^{3}}\int\limits_{\mathbb{S}^{2}}if_{1}\left(
\left\vert p\right\vert \omega,p\right)  \left(  \int\limits_{-\delta}%
^{\delta}\frac{dr}{r+i\tau}\right)  d\omega dp,
\end{array}
\right.
\]
where $f_{1}\left(  q,p\right)  :=\left\langle \frac{\sqrt{\left\vert
p\right\vert ^{2}+m^{2}}+\sqrt{\left\vert q\right\vert ^{2}+m^{2}}}{\left\vert
p\right\vert +\left\vert q\right\vert }f_{-}\left(  p\right)  ,\tilde{\zeta
}_{R}\left(  p-q\right)  g_{-}\left(  q\right)  \right\rangle F\left(
\left\vert q\right\vert \right)  \left\vert q\right\vert ^{2},$ and then
\[
I_{4}\left(  R\right)  =I_{4,1}\left(  R\right)  +I_{4,2}\left(  R\right)  ,
\]
where%
\[
I_{4}\left(  R\right)  :=\pi\int\limits_{\mathbb{R}^{3}}\int
\limits_{\mathbb{S}^{2}}f_{1}\left(  \left\vert p\right\vert \omega,p\right)
d\omega dp
\]
and
\[
I_{4,2}\left(  R\right)  :=i\lim\limits_{\delta\rightarrow0}\int
\limits_{\mathbb{R}^{3}}\int\limits_{\mathbb{S}^{2}}\left[  \int
\limits_{-\infty}^{-\delta}+\int\limits_{\delta}^{\infty}\right]  \frac
{f_{1}\left(  \left(  \left\vert p\right\vert -r\right)  \omega,p\right)  }%
{r}drd\omega dp.
\]
Thus, similarly to the case of $I_{1}\left(  R\right)  $ above, we obtain the
second and fourth terms in the R.H.S. of the asymptotic expansion (\ref{t55}).
Finally, applying Lemma \ref{t48} to $I_{2}\left(  R\right)  $ and
$I_{3}\left(  R\right)  ,$ we complete the proof.
\end{proof}

Let us now check the results that we use in the proof of Theorem \ref{t58}.
First, we calculate the asymptotics of $I_{1,1}\left(  R\right)  $ as
$R\rightarrow\infty.$ We have

\begin{lemma}
\label{t54}Suppose that $f,g\in\mathcal{H}_{2}^{3/2+\varepsilon}$,
$\varepsilon>0.$\ The following relation holds%
\begin{equation}
I_{1,1}\left(  R\right)  =8\pi^{3}R\int\limits_{\mathbb{R}^{3}}\left\langle
\frac{\sqrt{\left\vert p\right\vert ^{2}+m^{2}}}{\left\vert p\right\vert
}f_{+}\left(  p\right)  ,g_{+}\left(  p\right)  \right\rangle dp+o\left(
1\right)  ,\text{ \ as }R\rightarrow\infty. \label{time70}%
\end{equation}

\end{lemma}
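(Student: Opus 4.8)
The plan is to isolate the part of the angular integrand that is constant in $\omega$, evaluate it exactly, and treat the rest as a lower–order remainder. Using $F(|p|)=1$ and the coincidence of the two square roots when $|q|=|p|$, the integrand reduces to $f(|p|\omega,p)=\frac{\sqrt{|p|^2+m^2}}{|p|}\,|p|^2\,\langle f_+(p),\tilde\zeta_R(p-|p|\omega)g_+(|p|\omega)\rangle$. Writing $\hat p=p/|p|$ (so that $g_+(|p|\hat p)=g_+(p)$) and splitting $g_+(|p|\omega)=g_+(p)+\big(g_+(|p|\omega)-g_+(|p|\hat p)\big)$, I would write $I_{1,1}(R)=I_{1,1}^{\mathrm{main}}(R)+I_{1,1}^{\mathrm{rem}}(R)$. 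The first step is to record the physical–space representation $\tilde\zeta_R(\xi)=\int_{|x|\le R}e^{-i\xi\cdot x}\,dx$, which is exactly $(2\pi)^{3/2}\mathcal F(\zeta(|\cdot|/R))$, together with the scaling $\tilde\zeta_R(\xi)=R^3\tilde\zeta_1(R\xi)$ (here $\tilde\zeta_1$ is the $R=1$ version).

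For $I_{1,1}^{\mathrm{main}}(R)$ the whole $\omega$–dependence sits in $\tilde\zeta_R(p-|p|\omega)$, so the scalar factor $K(p,R):=\int_{\mathbb S^2}\tilde\zeta_R(p-|p|\omega)\,d\omega$ can be computed in closed form. Inserting the physical–space representation and applying the spherical–wave identity $\int_{\mathbb S^2}e^{ik\omega\cdot x}\,d\omega=4\pi\,\frac{\sin(k|x|)}{k|x|}$ twice (once in $\omega$, once in the angular variable of $x$), I obtain $K(p,R)=\frac{16\pi^2}{|p|^2}\int_0^R\sin^2(|p|\rho)\,d\rho=\frac{8\pi^2R}{|p|^2}-\frac{4\pi^2\sin(2|p|R)}{|p|^3}$. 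The first summand, multiplied by $\pi\,|p|\sqrt{|p|^2+m^2}\,\langle f_+(p),g_+(p)\rangle$ and integrated, reproduces exactly the claimed leading term $8\pi^3R\int\frac{\sqrt{|p|^2+m^2}}{|p|}\langle f_+,g_+\rangle\,dp$. The second summand yields $-4\pi^3\int\frac{\sqrt{|p|^2+m^2}}{|p|^2}\sin(2|p|R)\,\langle f_+(p),g_+(p)\rangle\,dp$, which is $o(1)$ by the Riemann–Lebesgue lemma once one checks that $r\mapsto\sqrt{r^2+m^2}\int_{\mathbb S^2}\langle f_+(r\omega),g_+(r\omega)\rangle\,d\omega$ is integrable on $(0,\infty)$; the factor $|p|^{-2}$ is harmless in three dimensions near the origin, and the decay at infinity is furnished by $f,g\in\mathcal H_2^{3/2+\varepsilon}$.

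The entire difficulty is to show $I_{1,1}^{\mathrm{rem}}(R)=o(1)$, i.e. that $\int_{\mathbb R^3}|p|\sqrt{|p|^2+m^2}\,\langle f_+(p),\int_{\mathbb S^2}\tilde\zeta_R(p-|p|\omega)\,[g_+(|p|\omega)-g_+(|p|\hat p)]\,d\omega\rangle\,dp\to0$. The favorable structure is that the amplitude $g_+(|p|\omega)-g_+(|p|\hat p)$ vanishes at $\omega=\hat p$, precisely where the kernel $\tilde\zeta_R(p-|p|\omega)=R^3\tilde\zeta_1\big(R|p|(\hat p-\omega)\big)$ concentrates as $R\to\infty$. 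A formal concentration computation (rescaling $\omega$ near $\hat p$ and Taylor expanding the amplitude) shows that the would–be $O(1)$ contribution is proportional to $\int_0^{2\pi}e(\phi)\,d\phi=0$, the angular average of the tangential direction, so it cancels and only an $o(1)$ term should survive.

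The main obstacle is that $\tilde\zeta_1$ decays only like $|\cdot|^{-2}$, since $\zeta$ is a sharp cutoff; consequently the concentration expansion involves conditionally convergent integrals and cannot be carried out term by term, and one must keep the oscillation of $\tilde\zeta_R$ rather than bound it crudely. I would make the argument rigorous by returning to physical space and applying stationary phase to $\int_{\mathbb S^2}e^{ir\omega\cdot x}[g_+(r\omega)-g_+(r\hat p)]\,d\omega$: its critical points $\omega=\pm\hat x$ combine with the extra phase $e^{-ip\cdot x}$ to the single critical direction $\hat x=\pm\hat p$, at which the amplitude vanishes, producing a gain of $(r|x|)^{-1}$ beyond the generic rate. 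The remaining radial integral over $[0,R]$ and the outer $p$–integral are then estimated uniformly in $R$, using the regularity and weight carried by $\mathcal H_2^{3/2+\varepsilon}$: the weight (two derivatives of $\hat f,\hat g$) supplies the Taylor and stationary–phase remainder bounds, while the Sobolev order $3/2+\varepsilon$ guarantees the integrability in $p$ and the boundedness of the restrictions of $\hat f,\hat g$ to spheres that the estimates require.
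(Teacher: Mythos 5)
Your computation of the main term is correct and follows a genuinely different, arguably cleaner, route than the paper's. You split off the constant-in-$\omega$ part $g_+(p)$ and evaluate the spherical average of the kernel exactly, $K(p,R)=\int_{\mathbb{S}^2}\tilde\zeta_R(p-|p|\omega)\,d\omega=\frac{8\pi^2R}{|p|^2}-\frac{4\pi^2\sin(2|p|R)}{|p|^3}$, which indeed yields the leading term $8\pi^3R\int\frac{\sqrt{|p|^2+m^2}}{|p|}\langle f_+,g_+\rangle\,dp$ plus an oscillatory integral killed by Riemann--Lebesgue. The paper instead keeps $g_+(|p|\omega)$ intact and integrates by parts in $\theta$, using that the kernel times $\sin\theta$ is an exact $\theta$-derivative of $\sin\bigl(R|p|\sqrt{2-2\cos\theta}\bigr)/\bigl(|p|\sqrt{2-2\cos\theta}\bigr)$; its boundary term at $\theta=0$ gives the leading term and the one at $\theta=\pi$ gives an oscillatory term at the antipode $-p$ (your oscillatory term sits at $p$; the difference is immaterial). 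You also correctly identified the cancellation that the paper exploits for the remainder: the vanishing angular average $\int_0^{2\pi}(\cos\varphi,\sin\varphi,0)\,d\varphi=0$ of the tangential direction at the critical point $\omega=\hat{p}$.

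However, the remainder estimate, which you yourself call the entire difficulty, is left as a program rather than a proof, and this is a genuine gap. The gain you invoke (one extra power of $(r|x|)^{-1}$ at a critical point where the amplitude vanishes linearly) is a second-order stationary phase statement, and its standard form requires pointwise control of several derivatives of the amplitude. Here the amplitude $g_+(r\omega)-g_+(r\hat{p})$ only inherits from $f,g\in\mathcal{H}_2^{3/2+\varepsilon}$ roughly two angular derivatives in $L^2$ on spheres (the quantities $A(g_+;r)$ and $B(g_+;r)$ appearing in the paper), not $C^2$ or $C^4$ bounds, so the classical expansion with remainder is not directly available. Moreover, after any such pointwise-in-$(r,x)$ gain one must still integrate over $\rho\in[0,R]$ and over $p$, uniformly in $R$, through the singular region $|p|\to0$ and through the region $r|x|=O(1)$ where stationary phase gives nothing; none of these estimates are carried out, and they are exactly where the work lies. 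The paper avoids stationary phase altogether: it integrates by parts once in $\theta$, controls the resulting term by the subtraction $\nabla g_+(|p|\omega)-\nabla g_+(p)$, a Cauchy--Schwarz bound with an $|\theta|^{1/2}$ H\"older gain, and Riemann--Lebesgue; the companion terms in Lemmas \ref{t45}, \ref{t57} and \ref{t49} need delicate fractional-exponent estimates tuned precisely to this low regularity. Until you supply quantitative analogues of those bounds (amplitude estimates in terms of $A(g_+;r)$, uniformity in $R$, integrability in $p$ near $0$ and at infinity), the claim that your remainder is $o(1)$ is not established.
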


\begin{proof}
Noting that $\tilde{\zeta}_{R}\left(  p\right)  =4\pi\left(  -R\frac{\cos
R\left\vert p\right\vert }{\left\vert p\right\vert ^{2}}+\frac{\sin
R\left\vert p\right\vert }{\left\vert p\right\vert ^{3}}\right)  $ (see
Theorem 56, page 235 of \cite{Bochner}) and passing to the spherical
coordinate system in $\omega$, where the $z-$axis is directed along the vector
$p,$ we have%
\[
\left.  I_{1,1}\left(  R\right)  =4\pi^{2}\int\limits_{\mathbb{R}^{3}}%
\int\limits_{0}^{2\pi}\int\limits_{0}^{\pi}\left\langle f_{0}\left(  p\right)
,\left(  -R\frac{\cos\left(  R\left\vert p\right\vert \sqrt{2-2\cos\theta
}\right)  }{\left(  2-2\cos\theta\right)  }+\frac{\sin\left(  R\left\vert
p\right\vert \sqrt{2-2\cos\theta}\right)  }{\left\vert p\right\vert \left(
2-2\cos\theta\right)  ^{\frac{3}{2}}}\right)  g_{+}\left(  \left\vert
p\right\vert \omega(\theta,\varphi)\right)  \right\rangle \sin\theta d\theta
d\varphi dp,\right.
\]
where $f_{0}\left(  p\right)  :=\frac{\sqrt{\left\vert p\right\vert ^{2}%
+m^{2}}}{\left\vert p\right\vert }f_{+}\left(  p\right)  $ and $\omega
(\theta,\varphi):=\left(  \cos\varphi\sin\theta,\sin\varphi\sin\theta
,\cos\theta\right)  .$ Observing that
\[
\left(  -R\frac{\cos\left(  R\left\vert p\right\vert \sqrt{2-2\cos\theta
}\right)  }{\left(  2-2\cos\theta\right)  }+\frac{\sin\left(  R\left\vert
p\right\vert \sqrt{2-2\cos\theta}\right)  }{\left\vert p\right\vert \left(
2-2\cos\theta\right)  ^{\frac{3}{2}}}\right)  \sin\theta=-\partial_{\theta
}\frac{\sin\left(  R\left\vert p\right\vert \sqrt{2-2\cos\theta}\right)
}{\left\vert p\right\vert \sqrt{2-2\cos\theta}}%
\]
and integrating by parts in $\theta$ we get%
\begin{equation}
\left.  I_{1,1}\left(  R\right)  =8\pi^{3}R\int\limits_{\mathbb{R}^{3}%
}\left\langle f_{0}\left(  p\right)  ,g_{+}\left(  p\right)  \right\rangle
dp+I_{1,1}^{1}\left(  R\right)  +I_{1,1}^{2}\left(  R\right)  ,\right.
\label{time71}%
\end{equation}
where%
\begin{equation}
I_{1,1}^{1}\left(  R\right)  :=-8\pi^{3}\int\limits_{\mathbb{R}^{3}}%
\sin\left(  2R\left\vert p\right\vert \right)  \left\langle f_{0}\left(
p\right)  ,\frac{g_{+}\left(  -p\right)  }{2\left\vert p\right\vert
}\right\rangle dp, \label{t111}%
\end{equation}
and%
\[
I_{1,1}^{2}\left(  R\right)  :=4\pi^{2}\int\limits_{\mathbb{R}^{3}}%
\int\limits_{0}^{2\pi}\int\limits_{0}^{\pi}\left\langle f_{0}\left(  p\right)
,\frac{\sin\left(  R\left\vert p\right\vert \sqrt{2-2\cos\theta}\right)
}{\left\vert p\right\vert \sqrt{2-2\cos\theta}}\partial_{\theta}g_{+}\left(
\left\vert p\right\vert \omega(\theta,\varphi)\right)  \right\rangle d\theta
d\varphi dp.
\]
Note that
\begin{equation}
\left.
{\displaystyle\int\limits_{\mathbb{R}^{3}}}
\left\vert \left\langle f_{0}\left(  p\right)  ,g_{+}\left(  -p\right)
\right\rangle \right\vert \dfrac{1}{2\left\vert p\right\vert }dp\leq C\left(
\left\Vert f_{+}\right\Vert _{L^{\infty}\left(  \left\vert p\right\vert
\leq1\right)  }+\left\Vert f_{+}\right\Vert _{L^{2}\left(  \mathbb{R}%
^{3}\right)  }\right)  \left(  \left\Vert g_{+}\right\Vert _{L^{\infty}\left(
\left\vert p\right\vert \leq1\right)  }+\left\Vert g_{+}\right\Vert
_{L^{2}\left(  \mathbb{R}^{3}\right)  }\right)  <\infty.\right.  \label{t112}%
\end{equation}
Taking $p=\left\vert p\right\vert \omega$ in the integral in (\ref{t111}), it
follows from (\ref{t112}) and Fubini's theorem that%
\[
\int\limits_{0}^{\infty}\left\langle f_{0}\left(  p\right)  ,\frac
{g_{+}\left(  -p\right)  }{2\left\vert p\right\vert }\right\rangle \left\vert
p\right\vert ^{2}d\left\vert p\right\vert \in L^{1}\left(  \mathbb{S}%
^{2}\right)  .
\]
In particular, we have%
\[
\int\limits_{0}^{\infty}\left\langle f_{0}\left(  p\right)  ,\frac
{g_{+}\left(  -p\right)  }{2\left\vert p\right\vert }\right\rangle \left\vert
p\right\vert ^{2}d\left\vert p\right\vert <\infty,
\]
for almost all $\omega\in\mathbb{S}^{2}.$ Then by the Riemann-Lebesgue lemma
we get%
\[
\lim\limits_{R\rightarrow\infty}\int\limits_{0}^{\infty}\sin\left(
2R\left\vert p\right\vert \right)  \left\langle f_{0}\left(  p\right)
,\frac{g_{+}\left(  -p\right)  }{2\left\vert p\right\vert }\right\rangle
\left\vert p\right\vert ^{2}d\left\vert p\right\vert =0,
\]
a.e. in $\omega\in\mathbb{S}^{2}.$ Thus, using (\ref{t112}) to apply the
dominated convergence theorem in (\ref{t111}) we obtain%
\begin{equation}
\lim\limits_{R\rightarrow\infty}I_{1,1}^{1}\left(  R\right)  =0. \label{t2}%
\end{equation}

Let us consider now $I_{1,1}^{2}\left(  R\right)  .$ Note that
\begin{equation}
\left.
\begin{array}
[c]{c}%
I_{1,1}^{2}\left(  R\right)  =4\pi^{2}%
{\displaystyle\int\limits_{\mathbb{R}^{3}}}
{\displaystyle\int\limits_{0}^{2\pi}}
{\displaystyle\int\limits_{0}^{\pi}}
\left\langle f_{0}\left(  p\right)  ,\dfrac{\sin\left(  R\left\vert
p\right\vert \sqrt{2-2\cos\theta}\right)  }{2\sqrt{2-2\cos\theta}}\left(
\cos\varphi,\sin\varphi,0\right)  \cdot\nabla g_{+}\left(  \left\vert
p\right\vert \omega(\theta,\varphi)\right)  \right\rangle \cos\theta d\theta
d\varphi dp\\
-4\pi^{2}%
{\displaystyle\int\limits_{\mathbb{R}^{3}}}
{\displaystyle\int\limits_{0}^{2\pi}}
{\displaystyle\int\limits_{0}^{\pi}}
\left\langle f_{0}\left(  p\right)  ,\dfrac{\sqrt{2+2\cos\theta}\sin\left(
R\left\vert p\right\vert \sqrt{2-2\cos\theta}\right)  }{2}\left(
0,0,1\right)  \cdot\nabla g_{+}\left(  \left\vert p\right\vert \omega
(\theta,\varphi)\right)  \right\rangle d\theta d\varphi dp.
\end{array}
\right.  \label{t97}%
\end{equation}
We have%
\[
\int\limits_{\mathbb{R}^{3}}\left\vert f_{0}\left(  p\right)  \right\vert
\int\limits_{0}^{2\pi}\int\limits_{0}^{\pi}\left\vert \nabla g_{+}\left(
\left\vert p\right\vert \omega(\theta,\varphi)\right)  \right\vert d\theta
d\varphi dp\leq C\left(  \left\Vert f_{+}\right\Vert _{L^{\infty}\left(
\left\vert p\right\vert \leq1\right)  }+\left\Vert f_{+}\right\Vert
_{L^{2}\left(  \mathbb{R}^{3}\right)  }\right)  \left\Vert g_{+}\right\Vert
_{\mathcal{H}^{1}\left(  \mathbb{R}^{3}\right)  }.
\]
Then, arguing as in the case of (\ref{t2}), we get
\begin{equation}
\lim\limits_{R\rightarrow\infty}4\pi^{2}\int\limits_{\mathbb{R}^{3}}%
\int\limits_{0}^{2\pi}\int\limits_{0}^{\pi}\left\langle f_{0}\left(  p\right)
,\frac{\sqrt{2+2\cos\theta}\sin\left(  R\left\vert p\right\vert \sqrt
{2-2\cos\theta}\right)  }{2}\left(  0,0,1\right)  \cdot\nabla g_{+}\left(
\left\vert p\right\vert \omega(\theta,\varphi)\right)  \right\rangle d\theta
d\varphi dp=0. \label{t98}%
\end{equation}
Observe now that%
\[
\left.
\begin{array}
[c]{c}%
4\pi^{2}%
{\displaystyle\int\limits_{\mathbb{R}^{3}}}
{\displaystyle\int\limits_{0}^{2\pi}}
{\displaystyle\int\limits_{0}^{\pi}}
\left\langle f_{0}\left(  p\right)  ,\dfrac{\sin\left(  R\left\vert
p\right\vert \sqrt{2-2\cos\theta}\right)  }{2\sqrt{2-2\cos\theta}}\left(
\cos\varphi,\sin\varphi,0\right)  \cdot\nabla g_{+}\left(  \left\vert
p\right\vert \omega(\theta,\varphi)\right)  \right\rangle \cos\theta d\theta
d\varphi dp\\
=4\pi^{2}%
{\displaystyle\int\limits_{\mathbb{R}^{3}}}
{\displaystyle\int\limits_{0}^{2\pi}}
{\displaystyle\int\limits_{0}^{\pi}}
\left\langle f_{0}\left(  p\right)  ,\dfrac{\sin\left(  R\left\vert
p\right\vert \sqrt{2-2\cos\theta}\right)  }{2\sqrt{2-2\cos\theta}}\left(
\cos\varphi,\sin\varphi,0\right)  \cdot\left(  \nabla g_{+}\left(  \left\vert
p\right\vert \omega(\theta,\varphi)\right)  -\nabla g_{+}\left(  p\right)
\right)  \right\rangle \cos\theta d\theta d\varphi dp.
\end{array}
\right.
\]
Thus, as%
\[
\left.
\begin{array}
[c]{c}%
{\displaystyle\int\limits_{\mathbb{R}^{3}}}
{\displaystyle\int\limits_{0}^{2\pi}}
{\displaystyle\int\limits_{0}^{\pi}}
\dfrac{\left\vert f_{0}\left(  p\right)  \right\vert }{\sqrt{2-2\cos\theta}%
}\left\vert \nabla g_{+}\left(  \left\vert p\right\vert \omega(\theta
,\varphi)\right)  -\nabla g_{+}\left(  p\right)  \right\vert d\theta d\varphi
dp\\
\leq C\left(
{\displaystyle\int\limits_{0}^{\pi}}
\frac{\left\vert \theta\right\vert ^{1/2}}{\sqrt{2-2\cos\theta}}%
d\theta\right)
{\displaystyle\int\limits_{\mathbb{R}^{3}}}
\left\vert f_{0}\left(  p\right)  \right\vert \left(
{\displaystyle\int\limits_{0}^{2\pi}}
{\displaystyle\int\limits_{0}^{\pi}}
\left\vert \partial_{\theta}\left(  \nabla g_{+}\left(  \left\vert
p\right\vert \omega(\theta,\varphi)\right)  \right)  \right\vert ^{2}d\theta
d\varphi\right)  ^{1/2}dp\leq C\left\Vert f_{+}\right\Vert _{L_{1}^{2}\left(
\mathbb{R}^{3}\right)  }\left\Vert g_{+}\right\Vert _{\mathcal{H}^{2}\left(
\mathbb{R}^{3}\right)  },
\end{array}
\right.
\]
arguing as in (\ref{t2}), we see that the limit of the first term in the
R.H.S. of (\ref{t97}), as $R\rightarrow\infty,$ is equals to $0.$ Therefore,
passing to the limit, as $R\rightarrow\infty,$ in (\ref{t97}) and using
(\ref{t98}) we conclude that
\begin{equation}
\lim\limits_{R\rightarrow\infty}I_{1,1}^{2}\left(  R\right)  =0. \label{t3}%
\end{equation}
Using relations (\ref{t2}) and (\ref{t3}) in (\ref{time71}) we obtain
(\ref{time70}).
\end{proof}

Next we study the asymptotics of $I_{1,2}\left(  R\right)  $ as $R\rightarrow
\infty.$ Passing to the spherical coordinate system, where the $z-$axis is
directed along the vector $p,$ we obtain
\[
\left.
\begin{array}
[c]{c}%
I_{1,2}\left(  R\right)  =-4\pi i\lim\limits_{\delta\rightarrow0}%
{\displaystyle\int\limits_{\mathbb{R}^{3}}}
{\displaystyle\int\limits_{0}^{2\pi}}
{\displaystyle\int\limits_{0}^{\pi}}
\left[
{\displaystyle\int\limits_{-\infty}^{\left\vert p\right\vert -\delta}}
+%
{\displaystyle\int\limits_{\left\vert p\right\vert +\delta}^{\infty}}
\right] \\
\times\dfrac{\left\langle f_{+}\left(  p\right)  ,\left(  -R\frac{\cos\left(
R\sqrt{\left\vert p\right\vert ^{2}-2r\left\vert p\right\vert \cos\theta
+r^{2}}\right)  }{\left\vert p\right\vert ^{2}-2r\left\vert p\right\vert
\cos\theta+r^{2}}+\frac{\sin\left(  R\sqrt{\left\vert p\right\vert
^{2}-2r\left\vert p\right\vert \cos\theta+r^{2}}\right)  }{\left(  \left\vert
p\right\vert ^{2}-2r\left\vert p\right\vert \cos\theta+r^{2}\right)  ^{3/2}%
}\right)  g_{0}\left(  r\omega(\theta,\varphi);\left\vert p\right\vert
\right)  \sin\theta\right\rangle }{\left(  \left\vert p\right\vert -r\right)
\left(  \left\vert p\right\vert +r\right)  }F\left(  r\right)  r^{2}drd\theta
d\varphi dp,
\end{array}
\right.
\]
where $g_{0}\left(  q;\left\vert p\right\vert \right)  :=\left(
\sqrt{\left\vert p\right\vert ^{2}+m^{2}}+\sqrt{\left\vert q\right\vert
^{2}+m^{2}}\right)  g_{+}\left(  q\right)  $ $\ $and $\omega(\theta
,\varphi)=\left(  \cos\varphi\sin\theta,\sin\varphi\sin\theta,\cos
\theta\right)  .$ Noting that
\[
\left(  -R\frac{\cos\left(  R\sqrt{\left\vert p\right\vert ^{2}-2r\left\vert
p\right\vert \cos\theta+r^{2}}\right)  }{\left\vert p\right\vert
^{2}-2r\left\vert p\right\vert \cos\theta+r^{2}}+\frac{\sin\left(
R\sqrt{\left\vert p\right\vert ^{2}-2r\left\vert p\right\vert \cos\theta
+r^{2}}\right)  }{\left(  \left\vert p\right\vert ^{2}-2r\left\vert
p\right\vert \cos\theta+r^{2}\right)  ^{3/2}}\right)  \sin\theta
=-\partial_{\theta}\frac{\sin\left(  R\sqrt{\left\vert p\right\vert
^{2}-2r\left\vert p\right\vert \cos\theta+r^{2}}\right)  }{r\left\vert
p\right\vert \sqrt{\left\vert p\right\vert ^{2}-2r\left\vert p\right\vert
\cos\theta+r^{2}}}%
\]
we have \
\begin{equation}
\left.
\begin{array}
[c]{c}%
{\displaystyle\int\limits_{0}^{\pi}}
\tilde{\zeta}_{R}\left(  p-r\omega(\theta,\varphi)\right)  g_{0}\left(
r\omega(\theta,\varphi);\left\vert p\right\vert \right)  \sin\theta d\theta\\
=%
{\displaystyle\int\limits_{0}^{\pi}}
\left(  -R\frac{\cos\left(  R\sqrt{\left\vert p\right\vert ^{2}-2r\left\vert
p\right\vert \cos\theta+r^{2}}\right)  }{\left\vert p\right\vert
^{2}-2r\left\vert p\right\vert \cos\theta+r^{2}}+\frac{\sin\left(
R\sqrt{\left\vert p\right\vert ^{2}-2r\left\vert p\right\vert \cos\theta
+r^{2}}\right)  }{\left(  \left\vert p\right\vert ^{2}-2r\left\vert
p\right\vert \cos\theta+r^{2}\right)  ^{3/2}}\right)  g_{0}\left(
r\omega(\theta,\varphi);\left\vert p\right\vert \right)  \sin\theta d\theta\\
=-%
{\displaystyle\int\limits_{0}^{\pi}}
\partial_{\theta}\frac{\sin\left(  R\sqrt{\left\vert p\right\vert
^{2}-2r\left\vert p\right\vert \cos\theta+r^{2}}\right)  }{r\left\vert
p\right\vert \sqrt{\left\vert p\right\vert ^{2}-2r\left\vert p\right\vert
\cos\theta+r^{2}}}g_{0}\left(  r\omega(\theta,\varphi);\left\vert p\right\vert
\right)  d\theta\\
=-\dfrac{\sin\left(  R\left(  \left\vert p\right\vert +r\right)  \right)
}{r\left\vert p\right\vert \left(  \left\vert p\right\vert +r\right)  }%
g_{0}\left(  -r\frac{p}{\left\vert p\right\vert };\left\vert p\right\vert
\right)  +\dfrac{\sin\left(  R\left(  \left\vert p\right\vert -r\right)
\right)  }{r\left\vert p\right\vert \left(  \left\vert p\right\vert -r\right)
}g_{0}\left(  r\frac{p}{\left\vert p\right\vert };\left\vert p\right\vert
\right)  +%
{\displaystyle\int\limits_{0}^{\pi}}
\frac{\sin\left(  R\sqrt{\left\vert p\right\vert ^{2}-2r\left\vert
p\right\vert \cos\theta+r^{2}}\right)  }{r\left\vert p\right\vert
\sqrt{\left\vert p\right\vert ^{2}-2r\left\vert p\right\vert \cos\theta+r^{2}%
}}\partial_{\theta}g_{0}\left(  r\omega(\theta,\varphi);\left\vert
p\right\vert \right)  d\theta,
\end{array}
\right.  \label{t28}%
\end{equation}
and therefore, we obtain the following decomposition
\begin{equation}
I_{1,2}\left(  R\right)  =I_{1,2}^{1}\left(  R\right)  +I_{1,2}^{2}\left(
R\right)  +I_{1,2}^{3}\left(  R\right)  , \label{t56}%
\end{equation}
with
\[
I_{1,2}^{1}\left(  R\right)  :=8\pi^{2}i\lim\limits_{\delta\rightarrow0}%
\int\limits_{\mathbb{R}^{3}}\left[  \int\limits_{-\infty}^{\left\vert
p\right\vert -\delta}+\int\limits_{\left\vert p\right\vert +\delta}^{\infty
}\right]  \frac{\sin\left(  R\left(  \left\vert p\right\vert +r\right)
\right)  }{\left\vert p\right\vert \left(  \left\vert p\right\vert -r\right)
\left(  \left\vert p\right\vert +r\right)  ^{2}}\left\langle f_{+}\left(
p\right)  ,g_{0}\left(  -r\frac{p}{\left\vert p\right\vert };\left\vert
p\right\vert \right)  \right\rangle F\left(  r\right)  rdrdp,
\]%
\[
I_{1,2}^{2}\left(  R\right)  :=-8\pi^{2}i\lim\limits_{\delta\rightarrow0}%
\int\limits_{\mathbb{R}^{3}}\left[  \int\limits_{-\infty}^{\left\vert
p\right\vert -\delta}+\int\limits_{\left\vert p\right\vert +\delta}^{\infty
}\right]  \frac{\sin\left(  R\left(  \left\vert p\right\vert -r\right)
\right)  }{\left\vert p\right\vert \left(  \left\vert p\right\vert -r\right)
^{2}}\left\langle f_{+}\left(  p\right)  ,\frac{g_{0}\left(  r\frac
{p}{\left\vert p\right\vert };\left\vert p\right\vert \right)  }{\left\vert
p\right\vert +r}\right\rangle F\left(  r\right)  rdrdp
\]
and%
\[
\left.
\begin{array}
[c]{c}%
I_{1,2}^{3}\left(  R\right)  :=-4\pi i\lim\limits_{\delta\rightarrow0}%
{\displaystyle\int\limits_{\mathbb{R}^{3}}}
\left[
{\displaystyle\int\limits_{-\infty}^{\left\vert p\right\vert -\delta}}
+%
{\displaystyle\int\limits_{\left\vert p\right\vert +\delta}^{\infty}}
\right]  \dfrac{\left(  \sqrt{\left\vert p\right\vert ^{2}+m^{2}}+\sqrt
{r^{2}+m^{2}}\right)  }{r\left\vert p\right\vert \left(  \left\vert
p\right\vert -r\right)  \left(  \left\vert p\right\vert +r\right)  }\\
\times\left\langle f_{+}\left(  p\right)  ,%
{\displaystyle\int\limits_{0}^{2\pi}}
{\displaystyle\int\limits_{0}^{\pi}}
\dfrac{\sin\left(  R\sqrt{\left\vert p\right\vert ^{2}-2r\left\vert
p\right\vert \cos\theta+r^{2}}\right)  }{\sqrt{\left\vert p\right\vert
^{2}-2r\left\vert p\right\vert \cos\theta+r^{2}}}\partial_{\theta}g_{+}\left(
r\omega(\theta,\varphi)\right)  d\theta d\varphi\right\rangle F\left(
r\right)  r^{2}drdp.
\end{array}
\right.
\]

For $I_{1,2}^{1}\left(  R\right)  $ we have

\begin{lemma}
\label{t45}Let $f,g\in\mathcal{H}_{3/2+\varepsilon}^{3/2+\varepsilon}$,
$\varepsilon>0.$ Then,
\begin{equation}
\lim\limits_{R\rightarrow\infty}I_{1,2}^{1}\left(  R\right)  =0. \label{t18}%
\end{equation}

\end{lemma}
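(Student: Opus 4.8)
The plan is to exploit that the oscillatory factor $\sin\left(R\left(\left\vert p\right\vert+r\right)\right)$ has phase $\left\vert p\right\vert+r$, which is nowhere stationary on the domain (both $\left\vert p\right\vert$ and $r$ are nonnegative, the latter because of $F$), so that the whole expression should vanish as $R\to\infty$ by a Riemann--Lebesgue argument. The one delicate feature is the principal value at $r=\left\vert p\right\vert$, and the key idea is to \emph{decouple} this singularity from the oscillation by a change of variables. First I would pass to spherical coordinates $p=\rho\omega$, $\rho=\left\vert p\right\vert$, $\omega\in\mathbb{S}^{2}$, so that $dp=\rho^{2}d\rho\,d\omega$, the argument $-r\tfrac{p}{\left\vert p\right\vert}$ of $g_{+}$ becomes $-r\omega$, and $g_{0}\left(-r\omega;\rho\right)=\left(\sqrt{\rho^{2}+m^{2}}+\sqrt{r^{2}+m^{2}}\right)g_{+}\left(-r\omega\right)$. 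On the quadrant $\{\rho\geq0,\ r\geq0\}$ I would then introduce
\[
s=\rho+r,\qquad t=\rho-r,\qquad d\rho\,dr=\tfrac12\,ds\,dt,
\]
whose image is $\{s\geq0,\ -s\leq t\leq s\}$. The phase becomes $\sin\left(R\left(\rho+r\right)\right)=\sin\left(Rs\right)$, depending only on $s$, while the singular factor $\tfrac{1}{\left\vert p\right\vert-r}=\tfrac1t$ and the symmetric cut $r\in(\rho-\delta,\rho+\delta)$ turn into $\tfrac1t$ and the symmetric cut $t\in(-\delta,\delta)$. Collecting every remaining factor (namely $\tfrac1\rho$, $(\rho+r)^{-2}=s^{-2}$, the weight $r$, the Jacobians, the scalar $\sqrt{\rho^{2}+m^{2}}+\sqrt{r^{2}+m^{2}}$ and the matrix element $\langle f_{+}(\rho\omega),g_{+}(-r\omega)\rangle$) into an amplitude $B(s,t,\omega)$, the term reads
\[
I_{1,2}^{1}\left(R\right)=8\pi^{2}i\int_{\mathbb{S}^{2}}\int_{0}^{\infty}\sin\left(Rs\right)\Bigl(\,\mathrm{p.v.}\!\int_{-s}^{s}\frac{B(s,t,\omega)}{t}\,dt\Bigr)\,ds\,d\omega.
\]

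The principal value in $t$ is the crux. Since $g_{+}\in\mathcal{H}^{3/2+\varepsilon}\left(\mathbb{R}^{3}\right)$ is H\"older continuous by the Sobolev embedding theorem (because $3/2+\varepsilon>3/2$), and all other factors of $B$ are smooth in $t$ near $t=0$ (here $s>0$ keeps $\rho=(s+t)/2$ away from the origin), the function $B(s,\cdot,\omega)$ is H\"older continuous in $t$ with some exponent $\gamma>0$. I would therefore subtract its value at $t=0$ and use that $\mathrm{p.v.}\int_{-s}^{s}\tfrac{dt}{t}=0$ by symmetry, which gives
\[
\mathrm{p.v.}\!\int_{-s}^{s}\frac{B(s,t,\omega)}{t}\,dt=\int_{-s}^{s}\frac{B(s,t,\omega)-B(s,0,\omega)}{t}\,dt=:\widetilde{B}(s,\omega).
\]
The last integral is now absolutely convergent, its integrand being $O(\left\vert t\right\vert^{\gamma-1})$ near $t=0$; this disposes of the $\delta\to0$ limit and yields a genuine amplitude $\widetilde{B}(s,\omega)$. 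One checks moreover that $B$ vanishes at the boundaries $t=\pm s$ (where $\rho=0$ or $r=0$), since the factor $\rho r$ in $B$ kills the $\tfrac1\rho$ singularity against the Jacobian, so no boundary terms arise.

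It remains to show $\widetilde{B}\in L^{1}\left([0,\infty)\times\mathbb{S}^{2}\right)$. Near $s=0$ a count of powers is harmless: $B=\tfrac12\,\rho^{2}\cdot\tfrac{r}{\rho s^{2}}\left(\sqrt{\rho^{2}+m^{2}}+\sqrt{r^{2}+m^{2}}\right)\langle f_{+}(\rho\omega),g_{+}(-r\omega)\rangle$, and with $\rho\approx r\approx s/2$ the ratio $\tfrac{\rho r}{s^{2}}$ stays bounded, so $B$, and hence $\widetilde{B}$, is bounded there. For large $s$ the decay comes from the weight, exactly as in Lemmas \ref{t42} and \ref{t54}: estimating the matrix element by Cauchy--Schwarz and invoking $f,g\in\mathcal{H}_{3/2+\varepsilon}^{3/2+\varepsilon}$ together with the Sobolev embedding bounds $\int_{\mathbb{S}^{2}}\int_{0}^{\infty}\left\vert\widetilde{B}(s,\omega)\right\vert\,ds\,d\omega$ by $C\left\Vert f_{+}\right\Vert_{\mathcal{H}_{3/2+\varepsilon}^{3/2+\varepsilon}}\left\Vert g_{+}\right\Vert_{\mathcal{H}_{3/2+\varepsilon}^{3/2+\varepsilon}}$.

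With this integrable majorant available, the Riemann--Lebesgue lemma gives $\int_{0}^{\infty}\sin\left(Rs\right)\widetilde{B}(s,\omega)\,ds\to0$ as $R\to\infty$ for almost every $\omega\in\mathbb{S}^{2}$, and the dominated convergence theorem in $\omega$ then yields (\ref{t18}). The main obstacle, just as in the companion lemmas, is the conditional (principal-value) convergence in $t$: it is precisely the H\"older regularity furnished by the order $3/2+\varepsilon$ of the Sobolev space that makes the subtracted integral absolutely convergent and controllable uniformly enough to pass the limit inside.
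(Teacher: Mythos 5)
Your change of variables $s=\rho+r$, $t=\rho-r$ is a genuinely different organization of the argument from the paper's, and it has a real structural advantage. For fixed $s$ the cut $\{|\rho-r|\geq\delta\}$ becomes the symmetric set $\{\delta\leq|t|\leq s\}$ and the kernel $1/t$ is odd, so the subtracted term $B(s,0,\omega)\int_{\{\delta\leq|t|\leq s\}}t^{-1}dt$ vanishes identically, for every $\delta$. In the paper's decomposition (\ref{t6}) the corresponding ``constant'' term $\left\langle f_{+}(p),g_{0}(-p;|p|)\right\rangle i_{1,2}^{1}(R,|p|,\delta)$ does \emph{not} vanish, because in the variables $(|p|,r)$ the kernel $\sin(R(|p|+r))(|p|-r)^{-1}(|p|+r)^{-2}$ is not odd about $r=|p|$; the paper must then establish bounds on $i_{1,2}^{1}(R,|p|,\delta)$ uniform in $R$ and $\delta$ (the case analysis with $\ln\delta$ and $\ln|p|$) before invoking Riemann--Lebesgue, cf. (\ref{t7}). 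Your symmetrization eliminates that entire step; the remaining skeleton (subtraction to make the principal value absolutely convergent, an $L^{1}$ majorant uniform in $\delta$, Riemann--Lebesgue in the oscillating variable, dominated convergence in the remaining ones) matches the paper's treatment of its remainder term, cf. (\ref{t5}) and the argument around (\ref{t2}).

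There is, however, a genuine gap at the step you yourself call the crux, and it is exactly where the paper's technical work lives. The H\"older continuity of $f_{+},g_{+}$ furnished by the unweighted Sobolev embedding $\mathcal{H}^{3/2+\varepsilon}\hookrightarrow C^{0,\gamma}$ comes with a H\"older constant that is uniform in the base point and has no decay at infinity. Feeding it into $\widetilde{B}(s,\omega)=\int_{-s}^{s}t^{-1}\bigl(B(s,t,\omega)-B(s,0,\omega)\bigr)dt$ gives only $|\widetilde{B}(s,\omega)|\leq C(1+s)s^{\gamma}$ (the factor $1+s$ coming from $\sqrt{\rho^{2}+m^{2}}+\sqrt{r^{2}+m^{2}}$ in $B$), which is \emph{not} integrable in $s$: H\"older regularity alone yields the pointwise existence of the principal value, but not the claimed bound $\widetilde{B}\in L^{1}([0,\infty)\times\mathbb{S}^{2})$, nor the uniform-in-$\delta$ majorant needed to move $\lim_{\delta\to0}$ inside the $(s,\omega)$ integral. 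To get integrability for large $s$ you must make the H\"older-type control of the differences $f_{+}(\rho\omega)-f_{+}(\tfrac{s}{2}\omega)$ and $g_{+}(-r\omega)-g_{+}(-\tfrac{s}{2}\omega)$ decay in $s$. This can be done, e.g., by a weighted H\"older estimate (using that $\langle p\rangle^{3/2+\varepsilon}\hat{f}\in\mathcal{H}^{3/2+\varepsilon}\hookrightarrow C^{0,\gamma}\cap L^{\infty}$, since the Fourier transform maps $\mathcal{H}_{s}^{\alpha}$ into $\mathcal{H}_{\alpha}^{s}$ and here $s=\alpha$; this makes the H\"older constant near radius $\rho$ decay like $\langle\min(\rho,s/2)\rangle^{-3/2-\varepsilon}$), or by interpolating your H\"older bound against the pointwise decay $|f_{+}(p)|\leq C\langle p\rangle^{-3/2-\varepsilon}$, or, as the paper does, by writing the differences as integrals of radial derivatives and applying Cauchy--Schwarz so that the weights of $\mathcal{H}_{3/2+\varepsilon}^{3/2+\varepsilon}$ can be tracked region by region. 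None of this is supplied by the phrase ``exactly as in Lemmas \ref{t42} and \ref{t54}'': the estimates there do not involve difference quotients against a $1/t$ kernel and cannot be quoted verbatim. The gap is fillable, but it is the substance of the lemma rather than a routine verification.
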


\begin{proof}
Note that
\begin{equation}
\left.
\begin{array}
[c]{c}%
I_{1,2}^{1}\left(  R\right)  =8\pi^{2}i\lim\limits_{\delta\rightarrow0}%
{\displaystyle\int\limits_{\mathbb{R}^{3}}}
\left\langle f_{+}\left(  p\right)  ,g_{0}\left(  -p;\left\vert p\right\vert
\right)  \right\rangle i_{1,2}^{1}\left(  R,\left\vert p\right\vert
,\delta\right)  dp\\
+8\pi^{2}i\lim\limits_{\delta\rightarrow0}%
{\displaystyle\int\limits_{\mathbb{R}^{3}}}
\left[
{\displaystyle\int\limits_{-\infty}^{\left\vert p\right\vert -\delta}}
+%
{\displaystyle\int\limits_{\left\vert p\right\vert +\delta}^{\infty}}
\right]  \dfrac{\sin\left(  R\left(  \left\vert p\right\vert +r\right)
\right)  }{\left\vert p\right\vert \left(  \left\vert p\right\vert +r\right)
^{2}\left(  \left\vert p\right\vert -r\right)  }\left\langle f_{+}\left(
p\right)  ,%
{\displaystyle\int\limits_{\left\vert p\right\vert }^{r}}
\partial_{r_{1}}\left(  g_{0}\left(  -r_{1}\frac{p}{\left\vert p\right\vert
};\left\vert p\right\vert \right)  r_{1}\right)  dr_{1}\right\rangle F\left(
r\right)  drdp,
\end{array}
\right.  \label{t6}%
\end{equation}
where
\[
i_{1,2}^{1}\left(  R,\left\vert p\right\vert ,\delta\right)  :=\left[
\int\limits_{-\infty}^{\left\vert p\right\vert -\delta}+\int
\limits_{\left\vert p\right\vert +\delta}^{\infty}\right]  \frac{\sin\left(
R\left(  \left\vert p\right\vert +r\right)  \right)  }{\left(  \left\vert
p\right\vert -r\right)  \left(  \left\vert p\right\vert +r\right)  ^{2}%
}F\left(  r\right)  dr.
\]
Since for any $\varepsilon>0,$%
\[
\left.
\begin{array}
[c]{c}%
{\displaystyle\int\limits_{\mathbb{R}^{3}}}
{\displaystyle\int\limits_{0}^{\infty}}
\dfrac{1}{\left(  \left\vert p\right\vert +r\right)  ^{2}\left\vert
r-\left\vert p\right\vert \right\vert }\left\vert \left\langle \dfrac
{f_{+}\left(  p\right)  }{\left\vert p\right\vert },%
{\displaystyle\int\limits_{\left\vert p\right\vert }^{r}}
\partial_{r_{1}}\left(  g_{0}\left(  -r_{1}\frac{p}{\left\vert p\right\vert
};\left\vert p\right\vert \right)  r_{1}\right)  dr_{1}\right\rangle
\right\vert drdp\\
\leq%
{\displaystyle\int\limits_{0}^{\infty}}
{\displaystyle\int\limits_{0}^{\infty}}
\dfrac{dr}{\left(  \left\vert p\right\vert +r\right)  ^{2}\left\vert
r-\left\vert p\right\vert \right\vert ^{1/2}}%
{\displaystyle\int\limits_{\mathbb{S}^{2}}}
\left\vert \dfrac{f_{+}\left(  \left\vert p\right\vert \omega\right)
}{\left\vert p\right\vert }\right\vert \left(
{\displaystyle\int\limits_{0}^{\infty}}
\left\vert \partial_{r_{1}}\left(  g_{0}\left(  -r_{1}\omega;\left\vert
p\right\vert \right)  r_{1}\right)  \right\vert ^{2}dr_{1}\right)
^{1/2}d\omega\left\vert p\right\vert ^{2}d\left\vert p\right\vert \\
\leq\left(
{\displaystyle\int\limits_{0}^{\infty}}
{\displaystyle\int\limits_{\mathbb{S}^{2}}}
\left\vert \partial_{r_{1}}\left(  g_{0}\left(  r_{1}\omega;\left\vert
p\right\vert \right)  r_{1}\right)  \right\vert ^{2}d\omega dr_{1}\right)
^{1/2}%
{\displaystyle\int\limits_{0}^{\infty}}
{\displaystyle\int\limits_{0}^{\infty}}
\dfrac{dr}{\left(  \left\vert p\right\vert +r\right)  ^{2}\left\vert
r-\left\vert p\right\vert \right\vert ^{1/2}}\left(
{\displaystyle\int\limits_{\mathbb{S}^{2}}}
\left\vert \dfrac{f_{+}\left(  \left\vert p\right\vert \omega\right)
}{\left\vert p\right\vert }\right\vert ^{2}d\omega\right)  ^{1/2}\left\vert
p\right\vert ^{2}d\left\vert p\right\vert \\
\leq C\left(  \left\Vert f_{+}\right\Vert _{L^{\infty}\left(  \left\vert
p\right\vert \leq1\right)  }+\left\Vert f_{+}\right\Vert _{L_{\varepsilon}%
^{2}\left(  \mathbb{R}^{3}\right)  }\right)  \left(  \left\Vert g_{+}%
\right\Vert _{L^{\infty}\left(  \left\vert p\right\vert \leq1\right)
}+\left\Vert g_{+}\right\Vert _{\mathcal{H}^{1}}\right)  ,
\end{array}
\right.
\]
arguing as in (\ref{t2}) we get%
\begin{equation}
8\pi^{2}i\lim\limits_{R\rightarrow\infty}\lim\limits_{\delta\rightarrow0}%
\int\limits_{\mathbb{R}^{3}}\left[
{\displaystyle\int\limits_{-\infty}^{\left\vert p\right\vert -\delta}}
+%
{\displaystyle\int\limits_{\left\vert p\right\vert +\delta}^{\infty}}
\right]  \frac{\sin\left(  R\left(  \left\vert p\right\vert +r\right)
\right)  }{\left\vert p\right\vert \left(  \left\vert p\right\vert +r\right)
^{2}\left(  \left\vert p\right\vert -r\right)  }\left\langle f_{+}\left(
p\right)  ,\int\limits_{\left\vert p\right\vert }^{r}\partial_{r_{1}}%
g_{0}\left(  -r_{1}\frac{p}{\left\vert p\right\vert };\left\vert p\right\vert
\right)  dr_{1}\right\rangle F\left(  r\right)  rdrdp=0. \label{t5}%
\end{equation}
For $\delta<1,$ we decompose $i_{1,2}^{1}\left(  R,\left\vert p\right\vert
,\delta\right)  $ in the sum
\[
\left.
\begin{array}
[c]{c}%
i_{1,2}^{1}\left(  R,\left\vert p\right\vert ,\delta\right)  =-\dfrac
{\cos2R\left\vert p\right\vert }{\left(  2\left\vert p\right\vert \right)
^{2}}\left[
{\displaystyle\int\limits_{-1}^{-\delta}}
+%
{\displaystyle\int\limits_{\delta}^{1}}
\right]  \dfrac{\sin\left(  Rr\right)  }{r}F\left(  \left\vert p\right\vert
-r\right)  dr+\sin\left(  2R\left\vert p\right\vert \right)  \left(
i_{1,2}^{1}\left(  R,\left\vert p\right\vert ,\delta\right)  \right)  _{1}\\
+\left[
{\displaystyle\int\limits_{-1}^{-\delta}}
+%
{\displaystyle\int\limits_{\delta}^{1}}
\right]  \sin\left(  R\left(  2\left\vert p\right\vert -r\right)  \right)
\dfrac{1}{r}[\dfrac{1}{\left(  2\left\vert p\right\vert -r\right)  ^{2}%
}-\dfrac{1}{\left(  2\left\vert p\right\vert \right)  ^{2}}]F\left(
\left\vert p\right\vert -r\right)  dr+\left[
{\displaystyle\int\limits_{-\infty}^{-1}}
+%
{\displaystyle\int\limits_{1}^{\infty}}
\right]  \dfrac{\sin R\left(  2\left\vert p\right\vert -r\right)  }{r\left(
2\left\vert p\right\vert -r\right)  ^{2}}F\left(  \left\vert p\right\vert
-r\right)  dr,
\end{array}
\right.
\]
where
\[
\left(  i_{1,2}^{1}\left(  R,\left\vert p\right\vert ,\delta\right)  \right)
_{1}:=\frac{1}{\left(  2\left\vert p\right\vert \right)  ^{2}}\left[
\int\limits_{-1}^{-\delta}+\int\limits_{\delta}^{1}\right]  \frac{\cos\left(
Rr\right)  }{r}F\left(  \left\vert p\right\vert -r\right)  dr.
\]
For $\left\vert p\right\vert <\delta<1$ and any $\varepsilon>0$ we get \
\[
\left.  \left\vert \left(  i_{1,2}^{1}\left(  R,\left\vert p\right\vert
,\delta\right)  \right)  _{1}\right\vert \leq\frac{\ln\delta}{\left(
2\left\vert p\right\vert \right)  ^{2}}\leq\frac{\delta^{\varepsilon}\ln
\delta}{\left(  2\left\vert p\right\vert \right)  ^{2+\varepsilon}}.\right.
\]
If $\delta\leq\left\vert p\right\vert <1,$
\[
\left(  i_{1,2}^{1}\left(  R,\left\vert p\right\vert ,\delta\right)  \right)
_{1}=\frac{\sin\left(  2R\left\vert p\right\vert \right)  }{\left(
2\left\vert p\right\vert \right)  ^{2}}\int\limits_{-1}^{-\left\vert
p\right\vert }\frac{\cos Rr}{r}dr,
\]
and thus,
\[
\left.  \left\vert \left(  i_{1,2}^{1}\left(  R,\left\vert p\right\vert
,\delta\right)  \right)  _{1}\right\vert \leq\frac{\left\vert \ln\left\vert
p\right\vert \right\vert }{\left(  2\left\vert p\right\vert \right)  ^{2}%
}.\right.
\]
Finally, for $\left\vert p\right\vert \geq1,$
\[
\left(  i_{1,2}^{1}\left(  R,\left\vert p\right\vert ,\delta\right)  \right)
_{1}=0.
\]
From these relations, together with the estimates%
\[
\frac{1}{\left(  2\left\vert p\right\vert \right)  ^{2}}\left\vert \left[
\int\limits_{-1}^{-\delta}+\int\limits_{\delta}^{1}\right]  \frac{\sin\left(
Rr\right)  }{r}F\left(  \left\vert p\right\vert -r\right)  dr\right\vert
=\frac{1}{\left(  2\left\vert p\right\vert \right)  ^{2}}\left\vert \left[
\int\limits_{-R}^{-R\delta}+\int\limits_{R\delta}^{R}\right]  \frac{\sin r}%
{r}F\left(  \left\vert p\right\vert -\frac{r}{R}\right)  dr\right\vert \leq
C\frac{1}{\left\vert p\right\vert ^{2}},
\]
uniformly for $R$ and $\delta$ (since $\int_{-\infty}^{\infty}\frac{\sin r}%
{r}dr=2\int_{0}^{\infty}\frac{\sin r}{r}dr=\pi$ implies that $\int_{a}%
^{b}\frac{\sin r}{r}dr\leq C$, for all $a$ and $b$),%
\[
\left.
\begin{array}
[c]{c}%
{\displaystyle\int\limits_{-1}^{1}}
\left\vert \dfrac{1}{r}[\dfrac{1}{\left(  2\left\vert p\right\vert -r\right)
^{2}}-\dfrac{1}{\left(  2\left\vert p\right\vert \right)  ^{2}}]\right\vert
F\left(  \left\vert p\right\vert -r\right)  dr\leq%
{\displaystyle\int\limits_{-1}^{\left\vert p\right\vert }}
\left\vert \dfrac{1}{r}[\dfrac{4r\left\vert p\right\vert -\allowbreak r^{2}%
}{\left(  2\left\vert p\right\vert -r\right)  ^{2}\left(  2\left\vert
p\right\vert \right)  ^{2}}]\right\vert dr\\
\leq%
{\displaystyle\int\limits_{-1}^{\left\vert p\right\vert }}
[\dfrac{1}{\left(  2\left\vert p\right\vert -r\right)  ^{2}\left(  2\left\vert
p\right\vert \right)  }]+[\dfrac{1}{\left(  2\left\vert p\right\vert
-r\right)  \left(  2\left\vert p\right\vert \right)  ^{2}}]dr\leq
C\dfrac{\left(  1+\ln\left\vert p\right\vert \right)  }{\left\vert
p\right\vert ^{2}}%
\end{array}
\right.
\]
and%
\[
\left.  \left[  \int\limits_{-\infty}^{-1}+\int\limits_{1}^{\infty}\right]
\frac{1}{\left\vert r\right\vert \left(  2\left\vert p\right\vert -r\right)
^{2}}F\left(  \left\vert p\right\vert -r\right)  dr\leq C\left(  1+\frac
{1}{\left\vert p\right\vert ^{2}}\right)  ,\right.
\]
we obtain%
\[
\left.
\begin{array}
[c]{c}%
{\displaystyle\int\limits_{\mathbb{R}^{3}}}
\left\vert \left\langle f_{+}\left(  p\right)  ,g_{0}\left(  -p;\left\vert
p\right\vert \right)  \right\rangle i_{1,2}^{1}\left(  R,\left\vert
p\right\vert ,\delta\right)  \right\vert dp\leq C\left\Vert f_{+}\right\Vert
_{L^{2}}\left\Vert g_{+}\right\Vert _{L_{1}^{2}}\\
+C\left(  \left\Vert f_{+}\right\Vert _{L^{\infty}\left(  \left\vert
p\right\vert \leq1\right)  }+\left\Vert f_{+}\right\Vert _{L^{2}\left(
\mathbb{R}^{3}\right)  }\right)  \left(  \left\Vert g_{+}\right\Vert
_{L^{\infty}\left(  \left\vert p\right\vert \leq1\right)  }+\left\Vert
g_{+}\right\Vert _{L^{2}\left(  \mathbb{R}^{3}\right)  }\right)  .
\end{array}
\right.
\]
Arguing as in (\ref{t2}) we get%
\begin{equation}
8\pi^{2}i\lim\limits_{R\rightarrow\infty}\lim\limits_{\delta\rightarrow0}%
\int\limits_{\mathbb{R}^{3}}\left\langle f_{+}\left(  p\right)  ,g_{0}\left(
-p;\left\vert p\right\vert \right)  \right\rangle i_{1,2}^{1}\left(
R,\left\vert p\right\vert ,\delta\right)  dp=0. \label{t7}%
\end{equation}
Moreover, taking the limit in (\ref{t6}), as $R\rightarrow\infty,$ and using
(\ref{t7}) and (\ref{t5}) we obtain (\ref{t18}).
\end{proof}

The following result shows that the term $I_{1,2}^{2}\left(  R\right)  $ gives
the non zero part of the asymptotics of $I_{1,2}\left(  R\right)  $ as
$R\rightarrow\infty$

\begin{lemma}
\label{t57}For $f,g\in\mathcal{H}_{2}^{3/2+\varepsilon}$, $\varepsilon>0$ we
have
\begin{equation}
\lim\limits_{R\rightarrow\infty}I_{1,2}^{2}\left(  R\right)  =8\pi^{3}%
i\int\limits_{\mathbb{R}^{3}}\left\langle f_{+}\left(  p\right)  ,\frac
{g_{0}\left(  p;\left\vert p\right\vert \right)  }{4\left\vert p\right\vert
^{2}}+\frac{p\cdot\left(  \triangledown_{q}g_{0}\right)  \left(  p;\left\vert
p\right\vert \right)  }{2\left\vert p\right\vert ^{2}}\right\rangle dp,
\label{t41}%
\end{equation}
where $g_{0}\left(  q;\left\vert p\right\vert \right)  =\left(  \sqrt
{\left\vert p\right\vert ^{2}+m^{2}}+\sqrt{\left\vert q\right\vert ^{2}+m^{2}%
}\right)  g_{+}\left(  q\right)  .$
\end{lemma}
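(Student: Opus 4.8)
The plan is to treat the inner $r$-integral in $I_{1,2}^{2}\left(  R\right)$ as a singular integral whose kernel $\sin\left(  R\left(  \left\vert p\right\vert -r\right)  \right)  /\left(  \left\vert p\right\vert -r\right)  ^{2}$ concentrates at $r=\left\vert p\right\vert$, and to extract its limit through a Taylor expansion of the smooth factor about that point. Writing the inner integral against which $f_{+}\left(  p\right)$ is paired as $\int_{\left\vert r-\left\vert p\right\vert \right\vert >\delta}\frac{\sin\left(  R\left(  \left\vert p\right\vert -r\right)  \right)  }{\left(  \left\vert p\right\vert -r\right)  ^{2}}\phi\left(  r\right)  dr$ with $\phi\left(  r\right)  :=\frac{r\,g_{0}\left(  r\frac{p}{\left\vert p\right\vert };\left\vert p\right\vert \right)  }{\left\vert p\right\vert \left(  \left\vert p\right\vert +r\right)  }F\left(  r\right)$, I would first pass to the variable $s=\left\vert p\right\vert -r$, which centers the singularity at $s=0$ and turns the kernel into the odd function $\sin\left(  Rs\right)  /s^{2}$. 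Near $s=0$ one has $\sin\left(  Rs\right)  /s^{2}\sim R/s$, so the kernel carries the strength of a simple pole even though it is written with a square; accordingly I expand $\phi\left(  r\right)  =\phi\left(  \left\vert p\right\vert \right)  +\phi^{\prime}\left(  \left\vert p\right\vert \right)  \left(  r-\left\vert p\right\vert \right)  +O\left(  \left(  r-\left\vert p\right\vert \right)  ^{2}\right)$ and analyze the three contributions separately. As in Lemma~\ref{t42}, I would prove the identity first for $f,g\in\mathcal{S}$ and then extend by continuity to $\mathcal{H}_{2}^{3/2+\varepsilon}$.

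The constant term $\phi\left(  \left\vert p\right\vert \right)  \int\frac{\sin\left(  Rs\right)  }{s^{2}}ds$ vanishes in the limit: on the part of the domain symmetric about $s=0$ the odd integrand integrates to zero, while the asymmetry produced by the cutoff $F$ leaves only a tail bounded by $\int_{\left\vert p\right\vert }^{\infty}s^{-2}ds$, which tends to $0$ as $R\rightarrow\infty$ by the Riemann-Lebesgue lemma (arguing as for (\ref{t2})). The linear term supplies the entire limit: using $r-\left\vert p\right\vert =-\left(  \left\vert p\right\vert -r\right)$ it reduces to $-\phi^{\prime}\left(  \left\vert p\right\vert \right)  \int_{\left\vert s\right\vert >\delta}\frac{\sin\left(  Rs\right)  }{s}ds$, and since $\int_{0}^{\infty}\frac{\sin u}{u}du=\frac{\pi}{2}$ one gets $\int_{\left\vert s\right\vert >\delta}\frac{\sin\left(  Rs\right)  }{s}ds\rightarrow\pi$ upon letting $\delta\rightarrow0$ first and then $R\rightarrow\infty$; hence this term converges to $-\pi\,\phi^{\prime}\left(  \left\vert p\right\vert \right)$. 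Finally, the quadratic remainder divided by $\left(  \left\vert p\right\vert -r\right)  ^{2}$ is bounded near $r=\left\vert p\right\vert$ and integrable on the whole domain, so it is of the form $\int\sin\left(  R\left(  \left\vert p\right\vert -r\right)  \right)  \cdot\left[  \text{bounded, integrable}\right]  dr\rightarrow0$, again by Riemann-Lebesgue.

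It remains to compute $\phi^{\prime}\left(  \left\vert p\right\vert \right)$. By the quotient and product rules, $\frac{d}{dr}\bigl(\frac{r\,g_{0}\left(  r\omega;\left\vert p\right\vert \right)  }{\left\vert p\right\vert +r}\bigr)$ evaluated at $r=\left\vert p\right\vert$, with $\omega=p/\left\vert p\right\vert$, equals $\frac{g_{0}\left(  p;\left\vert p\right\vert \right)  }{4\left\vert p\right\vert }+\frac{1}{2}\frac{d}{dr}g_{0}\left(  r\omega;\left\vert p\right\vert \right)  \big|_{r=\left\vert p\right\vert }$, and the radial derivative is the directional derivative $\frac{p}{\left\vert p\right\vert }\cdot\left(  \nabla_{q}g_{0}\right)  \left(  p;\left\vert p\right\vert \right)$. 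Dividing by $\left\vert p\right\vert$ gives $\phi^{\prime}\left(  \left\vert p\right\vert \right)  =\frac{g_{0}\left(  p;\left\vert p\right\vert \right)  }{4\left\vert p\right\vert ^{2}}+\frac{p\cdot\left(  \nabla_{q}g_{0}\right)  \left(  p;\left\vert p\right\vert \right)  }{2\left\vert p\right\vert ^{2}}$. Combining the three contributions, the inner integral tends to $-\pi\,\phi^{\prime}\left(  \left\vert p\right\vert \right)$, and multiplying by the prefactor $-8\pi^{2}i$ and integrating against $f_{+}$ yields precisely (\ref{t41}).

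The main obstacle is analytic rather than algebraic: the $r$-integral is only conditionally convergent, its value depends on the ordering of the limits $\delta\rightarrow0$ (taken first) and $R\rightarrow\infty$, and all of this must be carried out under the outer $dp$-integral. I would therefore establish uniform-in-$R$, uniform-in-$\delta$ domination for each of the three pieces, modeled on the bounds (\ref{t112}) and on the estimate $\left\vert \int_{a}^{b}\frac{\sin r}{r}dr\right\vert \leq C$ used in Lemma~\ref{t45}, so that dominated convergence applies in $p$. The delicate point is that $g_{0}\left(  q;\left\vert p\right\vert \right)$ grows linearly in $\left\vert q\right\vert$ through the factor $\sqrt{\left\vert q\right\vert ^{2}+m^{2}}$, so that the constant-term tail and the quadratic remainder involve $\nabla g_{+}$ together with this growth; controlling them requires exactly the weight $2$ and the regularity $3/2+\varepsilon$ of $g\in\mathcal{H}_{2}^{3/2+\varepsilon}$, via the Sobolev embedding used throughout Section~3.
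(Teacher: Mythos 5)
Your proposal is correct and follows essentially the same route as the paper's own proof: the paper likewise centers the kernel $\sin(Rr)/r^{2}$ at the singularity, Taylor-expands the smooth factor (expanding $\frac{1}{2\left\vert p\right\vert -r}$ and $g_{0}\left(  \left(  \left\vert p\right\vert -r\right)  \frac{p}{\left\vert p\right\vert };\left\vert p\right\vert \right)$ separately, with integral-form remainders), and its three terms $J_{1}^{2,2}$, $J_{2}^{2,2}$, $J_{3}^{2,2}$ are exactly your constant, linear and quadratic pieces, disposed of by the same three mechanisms you name: odd-kernel cancellation against the cutoff $F$, the limit $\int_{-R}^{R}\frac{\sin r}{r}dr\rightarrow\pi$ for the linear term (which produces the coefficient $\frac{g_{0}}{4\left\vert p\right\vert ^{2}}+\frac{p\cdot\nabla_{q}g_{0}}{2\left\vert p\right\vert ^{2}}$, matching your computation of $\phi^{\prime}\left(  \left\vert p\right\vert \right)$), and Riemann--Lebesgue plus dominated convergence for the remainder. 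The only substantive difference is organizational: the paper first discards the regions $\left\vert r\right\vert \geq1$ and $\left\vert p\right\vert \leq r/3$ and then performs the weighted Cauchy--Schwarz estimates directly for $f,g\in\mathcal{H}_{2}^{3/2+\varepsilon}$ (this is where most of its length lies), rather than proving the identity for Schwartz functions and extending by density with uniform-in-$R$, uniform-in-$\delta$ bounds as you propose.
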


\begin{proof}
Observe that%
\begin{equation}
\left.  I_{1,2}^{2}\left(  R\right)  =I_{1,2}^{2,1}\left(  R\right)
+I_{1,2}^{2,2}\left(  R\right)  ,\right.  \label{t17}%
\end{equation}
where%
\[
I_{1,2}^{2,1}\left(  R\right)  :=-8\pi^{2}i\int\limits_{\mathbb{R}^{3}}\left[
\int\limits_{-\infty}^{-1}+\int\limits_{1}^{\infty}\right]  \frac{\sin\left(
Rr\right)  }{\left\vert p\right\vert r^{2}}\left\langle f_{+}\left(  p\right)
,\frac{g_{0}\left(  \left(  \left\vert p\right\vert -r\right)  \frac
{p}{\left\vert p\right\vert };\left\vert p\right\vert \right)  }{2\left\vert
p\right\vert -r}\right\rangle F\left(  \left\vert p\right\vert -r\right)
\left(  \left\vert p\right\vert -r\right)  drdp,
\]
and%
\[
I_{1,2}^{2,2}\left(  R\right)  :=-8\pi^{2}i\lim\limits_{\delta\rightarrow
0}\int\limits_{\mathbb{R}^{3}}\left[  \int\limits_{-1}^{-\delta}%
+\int\limits_{\delta}^{1}\right]  \frac{\sin\left(  Rr\right)  }{\left\vert
p\right\vert r^{2}}\left\langle f_{+}\left(  p\right)  ,\frac{g_{0}\left(
\left(  \left\vert p\right\vert -r\right)  \frac{p}{\left\vert p\right\vert
};\left\vert p\right\vert \right)  }{2\left\vert p\right\vert -r}\right\rangle
F\left(  \left\vert p\right\vert -r\right)  \left(  \left\vert p\right\vert
-r\right)  drdp.
\]

Since
\[
\left.
\begin{array}
[c]{c}%
{\displaystyle\int\limits_{\mathbb{R}^{3}}}
\left[
{\displaystyle\int\limits_{-\infty}^{-1}}
+%
{\displaystyle\int\limits_{1}^{\infty}}
\right]  \left\vert \dfrac{1}{\left\vert p\right\vert r^{2}}\left\langle
f_{+}\left(  p\right)  ,\dfrac{g_{0}\left(  \left(  \left\vert p\right\vert
-r\right)  \frac{p}{\left\vert p\right\vert };\left\vert p\right\vert \right)
}{2\left\vert p\right\vert -r}\right\rangle F\left(  \left\vert p\right\vert
-r\right)  \left(  \left\vert p\right\vert -r\right)  \right\vert drdp\\
\leq C%
{\displaystyle\int\limits_{0}^{\infty}}
{\displaystyle\int\limits_{0}^{\infty}}
{\displaystyle\int\limits_{\mathbb{S}^{2}}}
\left\vert f_{+}\left(  \left\vert p\right\vert \omega\right)  \left(
1+\dfrac{1}{\left\vert p\right\vert }\right)  \right\vert \left\vert
g_{+}\left(  r\omega\right)  \right\vert d\omega rdr\left\vert p\right\vert
^{2}d\left\vert p\right\vert \\
\leq C\left(
{\displaystyle\int\limits_{0}^{\infty}}
\left(
{\displaystyle\int\limits_{\mathbb{S}^{2}}}
\left\vert \dfrac{f_{+}\left(  \left\vert p\right\vert \omega\right)
}{\left\vert p\right\vert }\right\vert ^{2}d\omega\right)  ^{1/2}\left\vert
p\right\vert ^{2}d\left\vert p\right\vert \right)  \left(
{\displaystyle\int\limits_{0}^{\infty}}
\left(
{\displaystyle\int\limits_{\mathbb{S}^{2}}}
\left\vert g_{+}\left(  r\omega\right)  \right\vert ^{2}r^{2}d\omega\right)
^{1/2}dr\right) \\
\leq C\left(  \left\Vert f_{+}\right\Vert _{L^{\infty}\left(  \left\vert
p\right\vert \leq1\right)  }+\left\Vert f_{+}\right\Vert _{L_{3/2+\varepsilon
}^{2}}\right)  \left(  \left\Vert g_{+}\right\Vert _{L^{\infty}\left(
\left\vert p\right\vert \leq1\right)  }+\left\Vert g_{+}\right\Vert
_{L_{3/2+\varepsilon}^{2}}\right)  ,
\end{array}
\right.
\]
arguing as in (\ref{t2}) we obtain
\begin{equation}
\lim_{R\rightarrow\infty}I_{1,2}^{2,1}\left(  R\right)  =0. \label{t19}%
\end{equation}

As for all $\delta\leq1$%
\[
\left.
\begin{array}
[c]{c}%
\left[
{\displaystyle\int\limits_{-1}^{-\delta}}
+%
{\displaystyle\int\limits_{\delta}^{1}}
\right]
{\displaystyle\int\limits_{\mathbb{R}^{3}}}
\dfrac{1}{\left\vert p\right\vert ^{2}r^{2}}\left\vert \left\langle
f_{+}\left(  p\right)  ,g_{0}\left(  \left(  \left\vert p\right\vert
-r\right)  \frac{p}{\left\vert p\right\vert };\left\vert p\right\vert \right)
\right\rangle F\left(  \left\vert p\right\vert -r\right)  \dfrac{\left\vert
p\right\vert -r}{2\left\vert p\right\vert -r}\right\vert dpdr\\
\leq\dfrac{C}{\delta^{2}}\left(  \left\Vert f_{+}\right\Vert _{L^{\infty
}\left(  \left\vert p\right\vert \leq1\right)  }+\left\Vert f_{+}\right\Vert
_{L_{1/2+\varepsilon}^{2}}\right)  \left\Vert g_{+}\right\Vert _{L^{\infty}},
\end{array}
\right.
\]
it follows from the Fubini's theorem that%
\begin{equation}
\left.
\begin{array}
[c]{c}%
I_{1,2}^{2,2}\left(  R\right)  =-8\pi^{2}i\lim\limits_{\delta\rightarrow
0}\left[
{\displaystyle\int\limits_{-1}^{-\delta}}
+%
{\displaystyle\int\limits_{\delta}^{1}}
\right]  \sin\left(  Rr\right)
{\displaystyle\int\limits_{\mathbb{R}^{3}}}
\dfrac{1}{\left\vert p\right\vert r^{2}}\left\langle f_{+}\left(  p\right)
,\dfrac{g_{0}\left(  \left(  \left\vert p\right\vert -r\right)  \frac
{p}{\left\vert p\right\vert };\left\vert p\right\vert \right)  }{2\left\vert
p\right\vert -r}\right\rangle F\left(  \left\vert p\right\vert -r\right)
\left(  \left\vert p\right\vert -r\right)  dpdr\\
=-8\pi^{2}i\lim\limits_{\delta\rightarrow0}%
{\displaystyle\int\limits_{\delta}^{1}}
\dfrac{\sin\left(  Rr\right)  }{r^{2}}%
{\displaystyle\int\limits_{\left\vert p\right\vert \leq\frac{r}{3}}}
\left(  \left\langle f_{+}\left(  p\right)  ,g_{0}\left(  \left(  \left\vert
p\right\vert +r\right)  \frac{p}{\left\vert p\right\vert };\left\vert
p\right\vert \right)  \right\rangle \dfrac{\left\vert p\right\vert
+r}{2\left\vert p\right\vert +r}\dfrac{dp}{\left\vert p\right\vert }\right)
dr\\
-8\pi^{2}i\lim\limits_{\delta\rightarrow0}\left[
{\displaystyle\int\limits_{-1}^{-\delta}}
+%
{\displaystyle\int\limits_{\delta}^{1}}
\right]  \dfrac{\sin\left(  Rr\right)  }{r^{2}}%
{\displaystyle\int\limits_{\left\vert p\right\vert \geq\frac{\left\vert
r\right\vert }{3}}}
\left(  \left\langle f_{+}\left(  p\right)  ,\dfrac{g_{0}\left(  \left(
\left\vert p\right\vert -r\right)  \frac{p}{\left\vert p\right\vert
};\left\vert p\right\vert \right)  }{2\left\vert p\right\vert -r}\right\rangle
F\left(  \left\vert p\right\vert -r\right)  \left(  \left\vert p\right\vert
-r\right)  \dfrac{dp}{\left\vert p\right\vert }\right)  dr.
\end{array}
\right.  \label{t20}%
\end{equation}
Noting that
\[
\left.  \int\limits_{\left\vert p\right\vert \leq\frac{r}{3}}\left\vert
\left(  \left\langle f_{+}\left(  p\right)  ,g_{0}\left(  \left(  \left\vert
p\right\vert +r\right)  \frac{p}{\left\vert p\right\vert };\left\vert
p\right\vert \right)  \right\rangle \frac{\left\vert p\right\vert
+r}{2\left\vert p\right\vert +r}\right)  \right\vert \frac{dp}{\left\vert
p\right\vert }\leq Cr^{2}\left\Vert f_{+}\right\Vert _{L^{\infty}\left(
\left\vert p\right\vert \leq1\right)  }\left\Vert g_{+}\right\Vert
_{L^{\infty}\left(  \left\vert p\right\vert \leq2\right)  },\right.
\]
we get
\[
\int\limits_{0}^{1}\frac{1}{r^{2}}\left\vert \int\limits_{\left\vert
p\right\vert \leq\frac{r}{3}}\left(  \left\langle f_{+}\left(  p\right)
,g_{0}\left(  \left\vert p\right\vert +r,\frac{p}{\left\vert p\right\vert
}\right)  \right\rangle \frac{\left\vert p\right\vert +r}{2\left\vert
p\right\vert +r}\frac{dp}{\left\vert p\right\vert }\right)  \right\vert dr\leq
C,
\]
and thus, arguing as in (\ref{t2}), we see that
\begin{equation}
\lim\limits_{R\rightarrow\infty}\lim\limits_{\delta\rightarrow0}%
\int\limits_{\delta}^{1}\frac{\sin\left(  Rr\right)  }{r^{2}}\int
\limits_{\left\vert p\right\vert \leq\frac{r}{3}}\left(  \left\langle
f_{+}\left(  p\right)  ,g_{0}\left(  \left(  \left\vert p\right\vert
+r\right)  \frac{p}{\left\vert p\right\vert };\left\vert p\right\vert \right)
\right\rangle \frac{\left\vert p\right\vert +r}{2\left\vert p\right\vert
+r}\frac{dp}{\left\vert p\right\vert }\right)  dr=0. \label{t21}%
\end{equation}
Using that%
\[
\frac{1}{2\left\vert p\right\vert -r}=\frac{1}{2\left\vert p\right\vert
}+\frac{r}{4\left\vert p\right\vert ^{2}}+\int\limits_{0}^{r}\frac{1}{\left(
2\left\vert p\right\vert -t\right)  ^{3}}\left(  r-t\right)  dt,
\]
and%
\[
g_{0}\left(  \left(  \left\vert p\right\vert -r\right)  \frac{p}{\left\vert
p\right\vert };\left\vert p\right\vert \right)  =g_{0}\left(  p;\left\vert
p\right\vert \right)  -\left(  \frac{p\cdot\left(  \triangledown_{q}%
g_{0}\right)  \left(  p;\left\vert p\right\vert \right)  }{\left\vert
p\right\vert }\right)  r+\frac{1}{2}\int\limits_{0}^{r}\partial_{t}^{2}%
g_{0}\left(  \left(  \left\vert p\right\vert -t\right)  \frac{p}{\left\vert
p\right\vert };\left\vert p\right\vert \right)  \left(  r-t\right)  dt,
\]
we get%
\begin{equation}
\left.
\begin{array}
[c]{c}%
-8\pi^{2}i\lim\limits_{\delta\rightarrow0}\left[
{\displaystyle\int\limits_{-1}^{-\delta}}
+%
{\displaystyle\int\limits_{\delta}^{1}}
\right]  \dfrac{\sin\left(  Rr\right)  }{r^{2}}%
{\displaystyle\int\limits_{\left\vert p\right\vert \geq\frac{\left\vert
r\right\vert }{3}}}
\left(  \left\langle f_{+}\left(  p\right)  ,\dfrac{g_{0}\left(  \left(
\left\vert p\right\vert -r\right)  \frac{p}{\left\vert p\right\vert
};\left\vert p\right\vert \right)  }{2\left\vert p\right\vert -r}\right\rangle
F\left(  \left\vert p\right\vert -r\right)  \left(  \left\vert p\right\vert
-r\right)  \dfrac{dp}{\left\vert p\right\vert }\right)  dr\\
=J_{1}^{2,2}\left(  R\right)  +J_{2}^{2,2}\left(  R\right)  +J_{3}%
^{2,2}\left(  R\right)  ,
\end{array}
\right.  \label{t22}%
\end{equation}
where%
\[
J_{1}^{2,2}\left(  R\right)  :=-8\pi^{2}i\lim\limits_{\delta\rightarrow
0}\left[  \int\limits_{-1}^{-\delta}+\int\limits_{\delta}^{1}\right]
\frac{\sin\left(  Rr\right)  }{r^{2}}\int\limits_{\left\vert p\right\vert
\geq\frac{\left\vert r\right\vert }{3}}F\left(  \left\vert p\right\vert
-r\right)  \left\langle f_{+}\left(  p\right)  ,\frac{g_{0}\left(
p;\left\vert p\right\vert \right)  }{2\left\vert p\right\vert }\right\rangle
dpdr,
\]%
\[
\left.  J_{2}^{2,2}\left(  R\right)  :=8\pi^{2}i\lim\limits_{\delta
\rightarrow0}\left[  \int\limits_{-1}^{-\delta}+\int\limits_{\delta}%
^{1}\right]  \frac{\sin\left(  Rr\right)  }{r}\int\limits_{\left\vert
p\right\vert \geq\frac{\left\vert r\right\vert }{3}}F\left(  \left\vert
p\right\vert -r\right)  \left\langle f_{+}\left(  p\right)  ,\frac
{g_{0}\left(  p;\left\vert p\right\vert \right)  }{4\left\vert p\right\vert
^{2}}+\frac{p\cdot\left(  \triangledown_{q}g_{0}\right)  \left(  p;\left\vert
p\right\vert \right)  }{2\left\vert p\right\vert ^{2}}\right\rangle
dpdr,\right.
\]
and%
\begin{equation}
\left.
\begin{array}
[c]{c}%
J_{3}^{2,2}\left(  R\right)  :=8\pi^{2}i%
{\displaystyle\int\limits_{-1}^{1}}
\sin\left(  Rr\right)
{\displaystyle\int\limits_{\left\vert p\right\vert \geq\frac{\left\vert
r\right\vert }{3}}}
\dfrac{1}{\left\vert p\right\vert }F\left(  \left\vert p\right\vert -r\right)
\left\langle f_{+}\left(  p\right)  ,\dfrac{g_{0}\left(  p;\left\vert
p\right\vert \right)  }{4\left\vert p\right\vert ^{2}}-\dfrac{p\cdot\left(
\triangledown_{q}g_{0}\right)  \left(  p;\left\vert p\right\vert \right)
}{2\left\vert p\right\vert ^{2}}\right\rangle dpdr\\
-8\pi^{2}i%
{\displaystyle\int\limits_{-1}^{1}}
\sin\left(  Rr\right)
{\displaystyle\int\limits_{\left\vert p\right\vert \geq\frac{\left\vert
r\right\vert }{3}}}
\dfrac{1}{\left\vert p\right\vert r^{2}}\left\langle f_{+}\left(  p\right)
,w\left(  \left\vert p\right\vert ,\frac{p}{\left\vert p\right\vert
},r\right)  \right\rangle F\left(  \left\vert p\right\vert -r\right)  \left(
\left\vert p\right\vert -r\right)  dpdr,
\end{array}
\right.  \label{t14}%
\end{equation}
with%
\[
\left.
\begin{array}
[c]{c}%
w\left(  \left\vert p\right\vert ,\frac{p}{\left\vert p\right\vert },r\right)
:=\left(
{\displaystyle\int\limits_{0}^{r}}
\dfrac{1}{\left(  2\left\vert p\right\vert -t\right)  ^{3}}\left(  r-t\right)
dt\right)  g_{0}\left(  p;\left\vert p\right\vert \right) \\
-\left(  \frac{r}{4\left\vert p\right\vert ^{2}}+%
{\displaystyle\int\limits_{0}^{r}}
\dfrac{1}{\left(  2\left\vert p\right\vert -t\right)  ^{3}}\left(  r-t\right)
dt\right)  \dfrac{p\cdot\left(  \triangledown_{q}g_{0}\right)  \left(
p;\left\vert p\right\vert \right)  }{\left\vert p\right\vert }r+\dfrac
{1}{2\left(  2\left\vert p\right\vert -r\right)  }%
{\displaystyle\int\limits_{0}^{r}}
\partial_{t}^{2}g_{0}\left(  \left(  \left\vert p\right\vert -t\right)
\frac{p}{\left\vert p\right\vert };\left\vert p\right\vert \right)  \left(
r-t\right)  dt.
\end{array}
\right.
\]

Note that%
\[
\left.
\begin{array}
[c]{c}%
J_{1}^{2,2}\left(  R\right)  =-8\pi^{2}i\lim\limits_{\delta\rightarrow
0}\left[
{\displaystyle\int\limits_{-1}^{-\delta}}
+%
{\displaystyle\int\limits_{\delta}^{1}}
\right]  \dfrac{\sin\left(  Rr\right)  }{r^{2}}%
{\displaystyle\int\limits_{\left\vert p\right\vert \geq\left\vert r\right\vert
}}
\left\langle f_{+}\left(  p\right)  ,\dfrac{g_{0}\left(  p;\left\vert
p\right\vert \right)  }{2\left\vert p\right\vert }\right\rangle dpdr\\
-8\pi^{2}i\lim\limits_{\delta\rightarrow0}\left[
{\displaystyle\int\limits_{-1}^{-\delta}}
+%
{\displaystyle\int\limits_{\delta}^{1}}
\right]  \dfrac{\sin\left(  Rr\right)  }{r^{2}}%
{\displaystyle\int\limits_{\frac{\left\vert r\right\vert }{3}\leq\left\vert
p\right\vert \leq\left\vert r\right\vert }}
F\left(  \left\vert p\right\vert -r\right)  \left\langle f_{+}\left(
p\right)  ,\dfrac{g_{0}\left(  p;\left\vert p\right\vert \right)
}{2\left\vert p\right\vert }\right\rangle dpdr\\
=8\pi^{2}i%
{\displaystyle\int\limits_{0}^{1}}
\sin\left(  Rr\right)
{\displaystyle\int\limits_{\frac{r}{3}\leq\left\vert p\right\vert \leq r}}
\dfrac{1}{r^{2}}\left\langle f_{+}\left(  p\right)  ,\dfrac{g_{0}\left(
p;\left\vert p\right\vert \right)  }{2\left\vert p\right\vert }\right\rangle
dpdr.
\end{array}
\right.
\]
As for any $0<\varepsilon<1$%
\[
\left.
\begin{array}
[c]{c}%
{\displaystyle\int\limits_{0}^{1}}
{\displaystyle\int\limits_{\frac{r}{3}\leq\left\vert p\right\vert \leq r}}
\dfrac{1}{r^{2}}\left\vert \left\langle f_{+}\left(  p\right)  ,\dfrac
{g_{0}\left(  p;\left\vert p\right\vert \right)  }{2\left\vert p\right\vert
}\right\rangle \right\vert dpdr\leq%
{\displaystyle\int\limits_{0}^{1}}
\dfrac{dr}{r^{1-\varepsilon}}%
{\displaystyle\int\limits_{\left\vert p\right\vert \leq1}}
\dfrac{1}{\left\vert p\right\vert ^{2+\varepsilon}}\left\vert \left\langle
f_{+}\left(  p\right)  ,g_{0}\left(  p;\left\vert p\right\vert \right)
\right\rangle \right\vert dp\\
\leq\dfrac{C}{\varepsilon}\left\Vert f_{+}\right\Vert _{L^{\infty}\left(
\left\vert p\right\vert \leq1\right)  }\left\Vert g_{+}\right\Vert
_{L^{\infty}\left(  \left\vert p\right\vert \leq1\right)  }%
\end{array}
\right.
\]
arguing as in (\ref{t2}) we get
\begin{equation}
\lim_{R\rightarrow\infty}J_{1}^{2,2}\left(  R\right)  =0. \label{t23}%
\end{equation}

We split $J_{2}^{2,2}\left(  R\right)  $ as%
\begin{equation}
\left.
\begin{array}
[c]{c}%
J_{2}^{2,2}\left(  R\right)  =8\pi^{2}i%
{\displaystyle\int\limits_{-1}^{1}}
\sin\left(  Rr\right)
{\displaystyle\int\limits_{\mathbb{R}^{3}}}
\dfrac{1}{r}F\left(  \left\vert p\right\vert -r\right)  \left\langle
f_{+}\left(  p\right)  ,\dfrac{g_{0}\left(  p;\left\vert p\right\vert \right)
}{4\left\vert p\right\vert ^{2}}+\dfrac{p\cdot\left(  \triangledown_{q}%
g_{0}\right)  \left(  p;\left\vert p\right\vert \right)  }{2\left\vert
p\right\vert ^{2}}\right\rangle dpdr\\
-8\pi^{2}i%
{\displaystyle\int\limits_{0}^{1}}
\sin\left(  Rr\right)
{\displaystyle\int\limits_{\left\vert p\right\vert \leq\frac{r}{3}}}
\dfrac{1}{r}\left\langle f_{+}\left(  p\right)  ,\dfrac{g_{0}\left(
p;\left\vert p\right\vert \right)  }{4\left\vert p\right\vert ^{2}}%
+\dfrac{p\cdot\left(  \triangledown_{q}g_{0}\right)  \left(  p;\left\vert
p\right\vert \right)  }{2\left\vert p\right\vert ^{2}}\right\rangle dpdr.
\end{array}
\right.  \label{t11}%
\end{equation}
Noting that for any $0<\varepsilon<\frac{1}{2}$
\[
\left.
\begin{array}
[c]{c}%
{\displaystyle\int\limits_{0}^{1}}
\dfrac{1}{r}%
{\displaystyle\int\limits_{\left\vert p\right\vert \leq\frac{r}{3}}}
\left\vert \left\langle f_{+}\left(  p\right)  ,\dfrac{g_{0}\left(
p;\left\vert p\right\vert \right)  }{4\left\vert p\right\vert ^{2}}%
+\dfrac{p\cdot\left(  \triangledown_{q}g_{0}\right)  \left(  p;\left\vert
p\right\vert \right)  }{2\left\vert p\right\vert ^{2}}\right\rangle
\right\vert dpdr\\
\leq%
{\displaystyle\int\limits_{0}^{1}}
\dfrac{dr}{r^{1-\varepsilon}}%
{\displaystyle\int\limits_{\left\vert p\right\vert \leq1}}
\dfrac{\left\vert f_{+}\left(  p\right)  \right\vert }{\left\vert p\right\vert
^{1+\varepsilon}}\left(  \left\vert \dfrac{g_{0}\left(  p;\left\vert
p\right\vert \right)  }{4\left\vert p\right\vert }\right\vert +\left\vert
\dfrac{p\cdot\left(  \triangledown_{q}g_{0}\right)  \left(  p;\left\vert
p\right\vert \right)  }{\left\vert p\right\vert }\right\vert \right)  dp\\
\leq\dfrac{C}{\varepsilon}\left\Vert f_{+}\right\Vert _{L^{\infty}\left(
\left\vert p\right\vert \leq1\right)  }\left(  \left\Vert g_{+}\right\Vert
_{L^{\infty}\left(  \left\vert p\right\vert \leq1\right)  }+\left\Vert
g_{+}\right\Vert _{\mathcal{H}^{1}\left(  \left\vert p\right\vert
\leq1\right)  }\right)  ,
\end{array}
\right.
\]
and arguing as in (\ref{t2}) we obtain%
\begin{equation}
8\pi^{2}i\lim_{R\rightarrow\infty}\int\limits_{0}^{1}\sin\left(  Rr\right)
\int\limits_{\left\vert p\right\vert \leq\frac{r}{3}}\frac{1}{r}\left\langle
f_{+}\left(  p\right)  ,\frac{g_{0}\left(  p;\left\vert p\right\vert \right)
}{4\left\vert p\right\vert ^{2}}+\frac{p\cdot\left(  \triangledown_{q}%
g_{0}\right)  \left(  p;\left\vert p\right\vert \right)  }{2\left\vert
p\right\vert ^{2}}\right\rangle dpdr=0. \label{t12}%
\end{equation}
Observe now that
\[
\left.
\begin{array}
[c]{c}%
8\pi^{2}i%
{\displaystyle\int\limits_{-1}^{1}}
{\displaystyle\int\limits_{\mathbb{R}^{3}}}
\dfrac{\sin\left(  Rr\right)  }{r}F\left(  \left\vert p\right\vert -r\right)
\left\langle f_{+}\left(  p\right)  ,\dfrac{g_{0}\left(  p;\left\vert
p\right\vert \right)  }{4\left\vert p\right\vert ^{2}}+\dfrac{p\cdot\left(
\triangledown_{q}g_{0}\right)  \left(  p;\left\vert p\right\vert \right)
}{2\left\vert p\right\vert ^{2}}\right\rangle dpdr\\
=8\pi^{2}i%
{\displaystyle\int\limits_{\mathbb{R}^{3}}}
\left(
{\displaystyle\int\limits_{-1}^{1}}
\dfrac{\sin\left(  Rr\right)  }{r}F\left(  \left\vert p\right\vert -r\right)
dr\right)  \left\langle f_{+}\left(  p\right)  ,\dfrac{g_{0}\left(
p;\left\vert p\right\vert \right)  }{4\left\vert p\right\vert ^{2}}%
+\dfrac{p\cdot\left(  \triangledown_{q}g_{0}\right)  \left(  p;\left\vert
p\right\vert \right)  }{2\left\vert p\right\vert ^{2}}\right\rangle dp\\
=8\pi^{2}i%
{\displaystyle\int\limits_{\left\vert p\right\vert \geq1}}
\left(
{\displaystyle\int\limits_{-R}^{R}}
\dfrac{\sin r}{r}dr\right)  \left\langle f_{+}\left(  p\right)  ,\dfrac
{g_{0}\left(  p;\left\vert p\right\vert \right)  }{4\left\vert p\right\vert
^{2}}+\dfrac{p\cdot\left(  \triangledown_{q}g_{0}\right)  \left(  p;\left\vert
p\right\vert \right)  }{2\left\vert p\right\vert ^{2}}\right\rangle dp\\
+8\pi^{2}i%
{\displaystyle\int\limits_{\left\vert p\right\vert \leq1}}
\left(
{\displaystyle\int\limits_{-R}^{R\left\vert p\right\vert }}
\dfrac{\sin r}{r}dr\right)  \left\langle f_{+}\left(  p\right)  ,\dfrac
{g_{0}\left(  p;\left\vert p\right\vert \right)  }{4\left\vert p\right\vert
^{2}}+\dfrac{p\cdot\left(  \triangledown_{q}g_{0}\right)  \left(  p;\left\vert
p\right\vert \right)  }{2\left\vert p\right\vert ^{2}}\right\rangle dp.
\end{array}
\right.
\]
Since
\[
\left.
\begin{array}
[c]{c}%
{\displaystyle\int\limits_{\left\vert p\right\vert \geq1}}
\left\vert \left(
{\displaystyle\int\limits_{-R}^{R}}
\dfrac{\sin r}{r}dr\right)  \left\langle f_{+}\left(  p\right)  ,\dfrac
{g_{0}\left(  p;\left\vert p\right\vert \right)  }{4\left\vert p\right\vert
^{2}}+\dfrac{p\cdot\left(  \triangledown_{q}g_{0}\right)  \left(  p;\left\vert
p\right\vert \right)  }{2\left\vert p\right\vert ^{2}}\right\rangle
\right\vert dp\\
\leq C%
{\displaystyle\int\limits_{\left\vert p\right\vert \geq1}}
\left\vert \left\langle f_{+}\left(  p\right)  ,\dfrac{g_{0}\left(
p;\left\vert p\right\vert \right)  }{4\left\vert p\right\vert ^{2}}%
+\dfrac{p\cdot\left(  \triangledown_{q}g_{0}\right)  \left(  p;\left\vert
p\right\vert \right)  }{2\left\vert p\right\vert ^{2}}\right\rangle
\right\vert dp\\
\leq C\left\Vert f_{+}\right\Vert _{L^{2}}\left(  \left\Vert g_{+}\right\Vert
_{L^{2}}+\left\Vert g_{+}\right\Vert _{\mathcal{H}^{1}}\right)  ,
\end{array}
\right.
\]
and
\[
\left.
\begin{array}
[c]{c}%
{\displaystyle\int\limits_{\left\vert p\right\vert \leq1}}
\left\vert \left(
{\displaystyle\int\limits_{-R}^{R\left\vert p\right\vert }}
\dfrac{\sin r}{r}dr\right)  \left\langle f_{+}\left(  p\right)  ,\dfrac
{g_{0}\left(  p;\left\vert p\right\vert \right)  }{4\left\vert p\right\vert
^{2}}+\dfrac{p\cdot\left(  \triangledown_{q}g_{0}\right)  \left(  p;\left\vert
p\right\vert \right)  }{2\left\vert p\right\vert ^{2}}\right\rangle
\right\vert dp\\
\leq C%
{\displaystyle\int\limits_{\left\vert p\right\vert \leq1}}
\left\vert \left\langle f_{+}\left(  p\right)  ,\dfrac{g_{0}\left(
p;\left\vert p\right\vert \right)  }{4\left\vert p\right\vert ^{2}}%
+\dfrac{p\cdot\left(  \triangledown_{q}g_{0}\right)  \left(  p;\left\vert
p\right\vert \right)  }{2\left\vert p\right\vert ^{2}}\right\rangle
\right\vert dp\\
\leq C\left\Vert f_{+}\right\Vert _{L^{\infty}\left(  \left\vert p\right\vert
\leq1\right)  }\left(  \left\Vert g_{+}\right\Vert _{L^{\infty}\left(
\left\vert p\right\vert \leq1\right)  }+\left\Vert g_{+}\right\Vert
_{\mathcal{H}^{1}\left(  \left\vert p\right\vert \leq1\right)  }\right)  ,
\end{array}
\right.
\]
uniformly on $R,$ it follows from the dominated convergence theorem and the
equality $\int_{-\infty}^{\infty}\frac{\sin r}{r}dr=\pi$ that%
\begin{equation}
\left.
\begin{array}
[c]{c}%
8\pi^{2}i\lim\limits_{R\rightarrow\infty}%
{\displaystyle\int\limits_{-1}^{1}}
{\displaystyle\int\limits_{\mathbb{R}^{3}}}
\dfrac{\sin\left(  Rr\right)  }{r}F\left(  \left\vert p\right\vert -r\right)
\left\langle f_{+}\left(  p\right)  ,\dfrac{g_{0}\left(  p;\left\vert
p\right\vert \right)  }{4\left\vert p\right\vert ^{2}}+\dfrac{p\cdot\left(
\triangledown_{q}g_{0}\right)  \left(  p;\left\vert p\right\vert \right)
}{2\left\vert p\right\vert ^{2}}\right\rangle dpdr\\
=8\pi^{3}i%
{\displaystyle\int\limits_{\mathbb{R}^{3}}}
\left\langle f_{+}\left(  p\right)  ,\dfrac{g_{0}\left(  p;\left\vert
p\right\vert \right)  }{4\left\vert p\right\vert ^{2}}+\dfrac{p\cdot\left(
\triangledown_{q}g_{0}\right)  \left(  p;\left\vert p\right\vert \right)
}{2\left\vert p\right\vert ^{2}}\right\rangle dp.
\end{array}
\right.  \label{t13}%
\end{equation}
Therefore, taking the limit, as $R\rightarrow\infty,$ in (\ref{t11}), and
using (\ref{t12}), (\ref{t13}), we obtain%
\begin{equation}
\lim\limits_{R\rightarrow\infty}J_{2}^{2,2}\left(  R\right)  =8\pi^{3}%
i\int\limits_{\mathbb{R}^{3}}\left\langle f_{+}\left(  p\right)  ,\frac
{g_{0}\left(  p;\left\vert p\right\vert \right)  }{4\left\vert p\right\vert
^{2}}+\frac{p\cdot\left(  \triangledown_{q}g_{0}\right)  \left(  p;\left\vert
p\right\vert \right)  }{2\left\vert p\right\vert ^{2}}\right\rangle dp.
\label{t24}%
\end{equation}

Let us consider now $J_{3}^{2,2}\left(  R\right)  .$ Observe that for all
$0<\varepsilon<\frac{1}{2}$%
\[
\left.
\begin{array}
[c]{c}%
{\displaystyle\int\limits_{-1}^{1}}
{\displaystyle\int\limits_{\left\vert p\right\vert \geq\frac{\left\vert
r\right\vert }{3}}}
\dfrac{1}{\left\vert p\right\vert }\left\vert F\left(  \left\vert p\right\vert
-r\right)  \left\langle f_{+}\left(  p\right)  ,\dfrac{g_{0}\left(
p;\left\vert p\right\vert \right)  }{4\left\vert p\right\vert ^{2}}%
-\dfrac{p\cdot\left(  \triangledown_{q}g_{0}\right)  \left(  p;\left\vert
p\right\vert \right)  }{2\left\vert p\right\vert ^{2}}\right\rangle
\right\vert dpdr\\
\leq C%
{\displaystyle\int\limits_{0}^{1}}
\dfrac{dr}{r^{1-\varepsilon}}%
{\displaystyle\int\limits_{\mathbb{R}^{3}}}
\left\vert \left\langle f_{+}\left(  p\right)  ,\dfrac{g_{0}\left(
p;\left\vert p\right\vert \right)  }{4\left\vert p\right\vert ^{2+\varepsilon
}}-\dfrac{p\cdot\left(  \triangledown_{q}g_{0}\right)  \left(  p;\left\vert
p\right\vert \right)  }{2\left\vert p\right\vert ^{2+\varepsilon}%
}\right\rangle \right\vert dp\\
\leq C\left\Vert f_{+}\right\Vert _{L^{\infty}\left(  \left\vert p\right\vert
\leq1\right)  }\left(  \left\Vert g_{+}\right\Vert _{L^{\infty}\left(
\left\vert p\right\vert \leq1\right)  }+\left\Vert g_{+}\right\Vert
_{\mathcal{H}^{1}\left(  \left\vert p\right\vert \leq1\right)  }\right)
+C\left\Vert f_{+}\right\Vert _{L^{2}}\left(  \left\Vert g_{+}\right\Vert
_{L^{2}}+\left\Vert g_{+}\right\Vert _{\mathcal{H}^{1}}\right)  .
\end{array}
\right.
\]
Then, arguing as in (\ref{t2}) we obtain%
\begin{equation}
8\pi^{2}i\lim_{R\rightarrow\infty}\int\limits_{-1}^{1}\sin\left(  Rr\right)
\int\limits_{\left\vert p\right\vert \geq\frac{\left\vert r\right\vert }{3}%
}\frac{1}{\left\vert p\right\vert }F\left(  \left\vert p\right\vert -r\right)
\left\langle f_{+}\left(  p\right)  ,\frac{g_{0}\left(  p;\left\vert
p\right\vert \right)  }{4\left\vert p\right\vert ^{2}}-\frac{p\cdot\left(
\triangledown_{q}g_{0}\right)  \left(  p;\left\vert p\right\vert \right)
}{2\left\vert p\right\vert ^{2}}\right\rangle dpdr=0. \label{t15}%
\end{equation}

Note now that for $0<\varepsilon<1$
\[
\left.
\begin{array}
[c]{c}%
\left\vert w\left(  \left\vert p\right\vert ,\frac{p}{\left\vert p\right\vert
},r\right)  F\left(  \left\vert p\right\vert -r\right)  \right\vert \leq
C\dfrac{\left\vert r\right\vert ^{\frac{3}{2}}}{\left\vert p\right\vert
^{5/2}}\left\vert g_{0}\left(  p;\left\vert p\right\vert \right)  \right\vert
+C\dfrac{1}{\left\vert p\right\vert ^{2}}\left\vert \dfrac{p\cdot\left(
\triangledown_{q}g_{0}\right)  \left(  p;\left\vert p\right\vert \right)
}{\left\vert p\right\vert }\right\vert \\
+C\dfrac{\left\vert r\right\vert ^{\frac{3}{2}-\varepsilon}}{\left\vert
p\right\vert ^{\varepsilon}\left(  \left\vert p\right\vert -r\right)
^{2-2\varepsilon}}F\left(  \left\vert p\right\vert -r\right)  \left(
{\displaystyle\int\limits_{0}^{\left\vert p\right\vert }}
\left\vert \partial_{t}^{2}g_{0}\left(  t\frac{p}{\left\vert p\right\vert
};\left\vert p\right\vert \right)  \right\vert ^{2}t^{2}dt\right)  ^{1/2}.
\end{array}
\right.
\]
Then, for $\varepsilon<\frac{1}{2},$%
\[
\left.
\begin{array}
[c]{c}%
{\displaystyle\int\limits_{-1}^{1}}
{\displaystyle\int\limits_{\left\vert p\right\vert \geq\frac{\left\vert
r\right\vert }{3}}}
\dfrac{1}{\left\vert p\right\vert r^{2}}\left\vert \left\langle f_{+}\left(
p\right)  ,w\left(  \left\vert p\right\vert ,\frac{p}{\left\vert p\right\vert
},r\right)  \right\rangle F\left(  \left\vert p\right\vert -r\right)  \left(
\left\vert p\right\vert -r\right)  \right\vert dpdr\\
\leq C%
{\displaystyle\int\limits_{\mathbb{R}^{3}}}
\dfrac{1}{\left\vert p\right\vert ^{5/2}}\left\vert f_{+}\left(  p\right)
\right\vert \left\vert g_{0}\left(  p;\left\vert p\right\vert \right)
\right\vert dp+C%
{\displaystyle\int\limits_{-1}^{1}}
\dfrac{dr}{\left\vert r\right\vert ^{1-\varepsilon}}%
{\displaystyle\int\limits_{\mathbb{R}^{3}}}
\dfrac{\left\vert f_{+}\left(  p\right)  \right\vert }{\left\vert p\right\vert
^{1+\varepsilon}}\left\vert \dfrac{p\cdot\left(  \triangledown_{q}%
g_{0}\right)  \left(  p;\left\vert p\right\vert \right)  }{\left\vert
p\right\vert }\right\vert dp\\
+C%
{\displaystyle\int\limits_{-1}^{1}}
\dfrac{1}{\left\vert r\right\vert ^{\frac{1}{2}+\varepsilon}}%
{\displaystyle\int\limits_{\left\vert p\right\vert \geq\frac{\left\vert
r\right\vert }{3}}}
\dfrac{\left\vert f_{+}\left(  p\right)  \right\vert }{\left\vert p\right\vert
}\dfrac{F\left(  \left\vert p\right\vert -r\right)  }{\left(  \left\vert
p\right\vert -r\right)  ^{1-2\varepsilon}}\left(
{\displaystyle\int\limits_{0}^{\left\vert p\right\vert }}
\left\vert \partial_{t}^{2}g_{0}\left(  t\frac{p}{\left\vert p\right\vert
};\left\vert p\right\vert \right)  \right\vert ^{2}t^{2}dt\right)  ^{1/2}dpdr.
\end{array}
\right.
\]
Thus, using that%
\[
\left.
\begin{array}
[c]{c}%
{\displaystyle\int\limits_{-1}^{1}}
\dfrac{1}{\left\vert r\right\vert ^{\frac{1}{2}+\varepsilon}}%
{\displaystyle\int\limits_{\left\vert p\right\vert \geq\frac{\left\vert
r\right\vert }{3}}}
\dfrac{\left\vert f_{+}\left(  p\right)  \right\vert }{\left\vert p\right\vert
}\dfrac{F\left(  \left\vert p\right\vert -r\right)  }{\left(  \left\vert
p\right\vert -r\right)  ^{1-2\varepsilon}}\left(
{\displaystyle\int\limits_{0}^{\left\vert p\right\vert }}
\left\vert \partial_{t}^{2}g_{0}\left(  t\frac{p}{\left\vert p\right\vert
};\left\vert p\right\vert \right)  \right\vert ^{2}t^{2}dt\right)
^{1/2}dpdr\\
\leq C%
{\displaystyle\int\limits_{\left\vert p\right\vert \leq1}}
\left(  \dfrac{1}{\left\vert p\right\vert ^{2-2\varepsilon}}+\dfrac
{1}{\left\vert p\right\vert ^{\frac{3}{2}+\varepsilon}}\right)  \left\vert
f_{+}\left(  p\right)  \right\vert \left(
{\displaystyle\int\limits_{0}^{1}}
\left\vert \partial_{t}^{2}g_{0}\left(  t\frac{p}{\left\vert p\right\vert
};\left\vert p\right\vert \right)  \right\vert ^{2}t^{2}dt\right)  ^{1/2}dp\\
+C%
{\displaystyle\int\limits_{\left\vert p\right\vert \geq1}}
\dfrac{\left\vert f_{+}\left(  p\right)  \right\vert }{\left\vert p\right\vert
}\left(  \left(
{\displaystyle\int\limits_{0}^{1}}
\left\vert \partial_{t}^{2}g_{0}\left(  t\frac{p}{\left\vert p\right\vert
};\left\vert p\right\vert \right)  \right\vert ^{2}t^{2}dt\right)
^{1/2}+\left(
{\displaystyle\int\limits_{1}^{\infty}}
\left\vert \partial_{t}^{2}g_{0}\left(  t\frac{p}{\left\vert p\right\vert
};\left\vert p\right\vert \right)  \right\vert ^{2}dt\right)  ^{1/2}\right)
dp\\
\leq C\left(  \left\Vert f_{+}\right\Vert _{L^{\infty}\left(  \left\vert
p\right\vert \leq1\right)  }+\left\Vert f_{+}\right\Vert _{L_{1/2+\varepsilon
}^{2}}\right)  \left\Vert g_{+}\right\Vert _{\mathcal{H}^{2}},
\end{array}
\right.
\]
we obtain%
\[
\left.
\begin{array}
[c]{c}%
{\displaystyle\int\limits_{-1}^{1}}
{\displaystyle\int\limits_{\left\vert p\right\vert \geq\frac{\left\vert
r\right\vert }{3}}}
\dfrac{1}{\left\vert p\right\vert r^{2}}\left\vert \left\langle f_{+}\left(
p\right)  ,w\left(  \left\vert p\right\vert ,\frac{p}{\left\vert p\right\vert
},r\right)  \right\rangle F\left(  \left\vert p\right\vert -r\right)  \left(
\left\vert p\right\vert -r\right)  \right\vert dpdr\\
\leq C\left(  \left\Vert f_{+}\right\Vert _{L^{\infty}\left(  \left\vert
p\right\vert \leq1\right)  }\left\Vert g_{+}\right\Vert _{L^{\infty}\left(
\left\vert p\right\vert \leq1\right)  }+\left\Vert f_{+}\right\Vert _{L^{2}%
}\left\Vert g_{+}\right\Vert _{L^{2}}\right) \\
+C\left(  \left\Vert f_{+}\right\Vert _{L^{\infty}\left(  \left\vert
p\right\vert \leq1\right)  }+\left\Vert f_{+}\right\Vert _{L^{2}}\right)
\left\Vert g_{+}\right\Vert _{\mathcal{H}^{1}}+C\left(  \left\Vert
f_{+}\right\Vert _{L^{\infty}\left(  \left\vert p\right\vert \leq1\right)
}+\left\Vert f_{+}\right\Vert _{L_{1/2+\varepsilon}^{2}}\right)  \left\Vert
g_{+}\right\Vert _{\mathcal{H}^{2}}\\
\leq C\left(  \left\Vert f_{+}\right\Vert _{L^{\infty}\left(  \left\vert
p\right\vert \leq1\right)  }+\left\Vert f_{+}\right\Vert _{L_{1/2+\varepsilon
}^{2}}\right)  \left(  \left\Vert g_{+}\right\Vert _{L^{\infty}\left(
\left\vert p\right\vert \leq1\right)  }+\left\Vert g_{+}\right\Vert
_{\mathcal{H}^{2}}\right)  .
\end{array}
\right.
\]
Hence, arguing as in (\ref{t2}) we see that%
\begin{equation}
\left.  -8\pi^{2}i\lim_{R\rightarrow\infty}\int\limits_{-1}^{1}\sin\left(
Rr\right)  \int\limits_{\left\vert p\right\vert \geq\frac{\left\vert
r\right\vert }{3}}\frac{1}{\left\vert p\right\vert r^{2}}\left\langle
f_{+}\left(  p\right)  ,w\left(  \left\vert p\right\vert ,\frac{p}{\left\vert
p\right\vert },r\right)  \right\rangle F\left(  \left\vert p\right\vert
-r\right)  \left(  \left\vert p\right\vert -r\right)  dpdr=0.\right.
\label{t16}%
\end{equation}
Taking the limit, as $R\rightarrow\infty,$ in (\ref{t14}), and taking in
account (\ref{t15}) and (\ref{t16}) we arrive to
\begin{equation}
\lim_{R\rightarrow\infty}J_{3}^{2,2}\left(  R\right)  =0. \label{t25}%
\end{equation}
Moreover, passing to the limit, as $R\rightarrow\infty,$ in (\ref{t22}), and
using (\ref{t23}), (\ref{t24}) and (\ref{t25}) we get%
\[
\left.
\begin{array}
[c]{c}%
-8\pi^{2}i\lim\limits_{R\rightarrow\infty}\lim\limits_{\delta\rightarrow
0}\left[
{\displaystyle\int\limits_{-1}^{-\delta}}
+%
{\displaystyle\int\limits_{\delta}^{1}}
\right]  \dfrac{\sin\left(  Rr\right)  }{r^{2}}%
{\displaystyle\int\limits_{\left\vert p\right\vert \geq\frac{\left\vert
r\right\vert }{3}}}
\left(  \left\langle f_{+}\left(  p\right)  ,\dfrac{g_{0}\left(  \left(
\left\vert p\right\vert -r\right)  \frac{p}{\left\vert p\right\vert
};\left\vert p\right\vert \right)  }{2\left\vert p\right\vert -r}\right\rangle
F\left(  \left\vert p\right\vert -r\right)  \left(  \left\vert p\right\vert
-r\right)  \dfrac{dp}{\left\vert p\right\vert }\right)  dr\\
=8\pi^{3}i%
{\displaystyle\int\limits_{\mathbb{R}^{3}}}
\left\langle f_{+}\left(  p\right)  ,\dfrac{g_{0}\left(  p;\left\vert
p\right\vert \right)  }{4\left\vert p\right\vert ^{2}}+\dfrac{p\cdot\left(
\triangledown_{q}g_{0}\right)  \left(  p;\left\vert p\right\vert \right)
}{2\left\vert p\right\vert ^{2}}\right\rangle dp.
\end{array}
\right.
\]
Using the last relation together with (\ref{t21}) in (\ref{t20}) we obtain
\begin{equation}
I_{1,2}^{2,2}\left(  R\right)  =8\pi^{3}i\int\limits_{\mathbb{R}^{3}%
}\left\langle f_{+}\left(  p\right)  ,\frac{g_{0}\left(  p;\left\vert
p\right\vert \right)  }{4\left\vert p\right\vert ^{2}}+\frac{p\cdot\left(
\triangledown_{q}g_{0}\right)  \left(  p;\left\vert p\right\vert \right)
}{2\left\vert p\right\vert ^{2}}\right\rangle dp+o\left(  1\right)  ,
\label{t26}%
\end{equation}
as $R\rightarrow\infty.$ Moreover, using equalities (\ref{t19}) and
(\ref{t26}) in (\ref{t17}) we arrive to (\ref{t41}).
\end{proof}

Now we show that $I_{1,2}^{3}\left(  R\right)  $ is $o\left(  1\right)  $ as
$R\rightarrow\infty.$ We have

\begin{lemma}
\label{t49}Suppose that $f,g\in\mathcal{H}_{2}^{3/2+\varepsilon}$,
$\varepsilon>0$. Then,
\[
\lim_{R\rightarrow\infty}I_{1,2}^{3}\left(  R\right)  =0.
\]

\end{lemma}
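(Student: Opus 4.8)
The plan is to treat $I_{1,2}^{3}(R)$ as an oscillatory integral and to conclude, exactly as for (\ref{t2}), by producing an $R$--independent $L^{1}$ majorant of the integrand together with pointwise vanishing coming from the Riemann--Lebesgue lemma, after which dominated convergence finishes the argument. Setting
\[
u=u(r,\theta):=\sqrt{|p|^{2}-2r|p|\cos\theta+r^{2}}=|p-r\omega| ,
\]
the object to estimate is the inner integral
\[
B(r,p):=\int_{0}^{2\pi}\int_{0}^{\pi}\frac{\sin\left(Ru\right)}{u}\,\partial_{\theta}g_{+}\left(r\omega(\theta,\varphi)\right)d\theta\,d\varphi ,
\]
which enters $I_{1,2}^{3}(R)$ paired with $f_{+}(p)$ against the explicit weight $\tfrac{\sqrt{|p|^{2}+m^{2}}+\sqrt{r^{2}+m^{2}}}{r|p|(|p|-r)(|p|+r)}F(r)r^{2}$ and integrated as a principal value in $r$ around $r=|p|$. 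All of the difficulty is localised on the diagonal set $r=|p|$, $\theta=0$, where $u\to0$ and $\sin(Ru)/u$ is only bounded by $R$; on its complement the integrand is bounded $R$--uniformly and the previous lemmas' arguments apply verbatim.

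The first gain comes from integrating in $\varphi$ before anything else. The $\varphi$--average $\bar g(\theta):=\int_{0}^{2\pi}g_{+}(r\omega(\theta,\varphi))\,d\varphi$ depends on $\theta$ only through $\cos\theta$, hence $\int_{0}^{2\pi}\partial_{\theta}g_{+}\,d\varphi=\partial_{\theta}\bar g$ vanishes like $\sin\theta$ at the pole $\theta=0$. Writing $\bar g(\theta)=\bar G(\cos\theta)$, substituting $s=\cos\theta$ and using $|\sin(Ru)|\le1$ together with Cauchy--Schwarz in $s$, the elementary identity
\[
\int_{-1}^{1}\frac{ds}{(|p|-r)^{2}+2r|p|(1-s)}=\frac{1}{r|p|}\ln\frac{|p|+r}{\,||p|-r|\,}
\]
yields
\[
|B(r,p)|\le C\left(\frac{1}{r|p|}\ln\frac{|p|+r}{\,||p|-r|\,}\right)^{1/2}\left(\int_{-1}^{1}|\bar G'(s)|^{2}\,ds\right)^{1/2}.
\]
The azimuthal cancellation has thus turned the crude power singularity $||p|-r|^{-1/2}$ into a merely logarithmic one; and $\|\bar G'\|_{L^{2}}$, using $\partial_{\theta}\bar g(0)=0$ and a Hardy inequality, is controlled by the second angular derivatives of $g_{+}$, available since $g\in\mathcal{H}_{2}^{3/2+\varepsilon}$ gives $g_{+}\in\mathcal{H}^{2}$ (as already used at the end of the proof of Lemma \ref{t54}).

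It remains to make sense of, and to exploit, the principal value in $r$ at $r=|p|$, where the logarithmic bound alone is still not integrable against $(|p|-r)^{-1}$. Here I would use the parity mechanism already present in the Taylor--subtraction of Lemma \ref{t57}: with $\mu:=|p|-r$ one has $u^{2}=\mu^{2}+2r|p|(1-\cos\theta)$ and $r|p|=(|p|-\mu)|p|$, so that to leading order $B(r,p)$ is an even function of $\mu$. Expanding in $\mu$, the even leading term is annihilated by $\mathrm{PV}\!\int\tfrac{d\mu}{\mu}$ over the symmetric interval, while the linear-in-$\mu$ correction cancels the factor $(|p|-r)^{-1}$ and leaves an absolutely integrable contribution once combined with the logarithmic bound of the previous step. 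Integrating the resulting estimates against $f_{+}(p)$, and using throughout $f_{+}\in L^{\infty}(|p|\le1)$ and the weighted $L^{2}$ decay of $f_{+}$ and $g_{+}$, produces the desired $R$--uniform $L^{1}$ majorant; since $B(r,p)\to0$ pointwise by the Riemann--Lebesgue lemma, dominated convergence gives $\lim_{R\to\infty}I_{1,2}^{3}(R)=0$.

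The main obstacle is precisely the passage through the diagonal in the previous step: the corrections to the even part of $B$ carry a factor $R$ coming from $\partial_{\mu}\sin(Ru)=R\cos(Ru)\,\partial_{\mu}u$, so the oscillation $\sin(Ru)$ cannot be discarded there. Controlling these terms $R$--uniformly requires one further integration by parts in $\theta$ inside $B$, paying with $\partial_{\theta}^{2}\bar g\in L^{2}$ (again furnished by $g_{+}\in\mathcal{H}^{2}$) to trade the growth in $R$ for additional decay, and checking that the estimate survives the coordinate singularity of $\omega(\theta,\varphi)$ at $\theta=0$. The complementary regions --- $r$ bounded away from $|p|$, and $r$ or $|p|$ large --- are routine and follow the template of Lemmas \ref{t45} and \ref{t57}.
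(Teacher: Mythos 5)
Your plan reproduces the overall architecture of the paper's proof: you subtract the integrand frozen at the diagonal $r=|p|$ so that the principal value kills the leading (even-in-$\mu$) part, you exploit the azimuthal cancellation $\int_{0}^{2\pi}\partial_{\theta}g_{+}\,d\varphi\big|_{\theta=0}=0$ to tame the pole of the kernel at $\theta=0$, and you split off the far region and the smooth-weight corrections. Your packaging of the angular cancellation — averaging in $\varphi$ first, substituting $s=\cos\theta$, and applying Cauchy--Schwarz against the explicit logarithmic integral — is a clean variant of the paper's pointwise bound (\ref{t115}), and the claim that this reduces the singularity at $r=|p|$ to a logarithm is correct.

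However, there is a genuine gap, and you have located it yourself in your final paragraph without closing it. The Taylor expansion in $\mu=|p|-r$ cannot be used naively: differentiating $\sin(Ru)/u$ in $\mu$ produces $R\cos(Ru)\,\partial_{\mu}u$, so the ``linear-in-$\mu$ correction'' is of size $R|\mu|/u$, not $|\mu|/u$; after dividing by the PV factor $1/\mu$ and integrating in $\theta$ this grows linearly in $R$. Consequently no $R$-uniform integrable majorant comes out of your expansion, and the intended endgame (Riemann--Lebesgue plus dominated convergence, ``as in (\ref{t2})'') is simply unavailable for these terms. Resolving this is the actual core of the paper's proof: the angular splitting into $[0,\pi/4]$, $[\pi/4,3\pi/4]$, $[3\pi/4,\pi]$, the integration-by-parts identity based on $e^{\pm iRu}/u=\partial_{\theta}\bigl(\sin\theta\,e^{\pm iRu}\bigr)/\bigl(\cos\theta\,u\pm iRr|p|\sin^{2}\theta\bigr)$, the degenerate-phase estimates (\ref{t113})--(\ref{t118}) with the exponent choices $\beta_{1}=3/4-\varepsilon$, $\beta_{2}=5/8-\varepsilon/2$, $\beta_{3}=7/8-\varepsilon/2$, and finally an interpolation (with $\alpha=4/7+\varepsilon$) between the trivial bound (\ref{t116}), which grows like $R$, and the integration-by-parts bound (\ref{t110}), which decays in $R$ but is too singular in $\bigl||p|-r\bigr|$ to be integrable on its own. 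Neither bound suffices separately; only their interpolation makes the $r$-integral converge while keeping a net negative power $R^{-\varepsilon_{1}}$. Your single sentence ``requires one further integration by parts in $\theta$ \dots and checking that the estimate survives the coordinate singularity'' is exactly this multi-page quantitative argument, so as written the proposal identifies the difficulty but does not prove the lemma.
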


\begin{proof}
Recall that%
\[
\left.
\begin{array}
[c]{c}%
I_{1,2}^{3}\left(  R\right)  =-4\pi i\lim\limits_{\delta\rightarrow0}%
{\displaystyle\int\limits_{\mathbb{R}^{3}}}
\left[
{\displaystyle\int\limits_{-\infty}^{\left\vert p\right\vert -\delta}}
+%
{\displaystyle\int\limits_{\left\vert p\right\vert +\delta}^{\infty}}
\right] \\
\times\frac{\left(  \sqrt{\left\vert p\right\vert ^{2}+m^{2}}+\sqrt
{r^{2}+m^{2}}\right)  }{r\left\vert p\right\vert \left(  \left\vert
p\right\vert -r\right)  \left(  \left\vert p\right\vert +r\right)
}\left\langle f_{+}\left(  p\right)  ,%
{\displaystyle\int\limits_{0}^{2\pi}}
{\displaystyle\int\limits_{0}^{\pi}}
\frac{\sin\left(  R\sqrt{\left\vert p\right\vert ^{2}-2r\left\vert
p\right\vert \cos\theta+r^{2}}\right)  }{\sqrt{\left\vert p\right\vert
^{2}-2r\left\vert p\right\vert \cos\theta+r^{2}}}\partial_{\theta}g_{+}\left(
r\omega(\theta,\varphi)\right)  d\theta d\varphi\right\rangle F\left(
r\right)  r^{2}drdp.
\end{array}
\right.
\]
Noting that for all $0<\delta\leq1$%
\[
\left[  \int\limits_{\left\vert p\right\vert -1}^{\left\vert p\right\vert
-\delta}+\int\limits_{\left\vert p\right\vert +\delta}^{\left\vert
p\right\vert +1}\right]  \frac{\sqrt{\left\vert p\right\vert ^{2}+m^{2}}%
}{\left(  \left\vert p\right\vert -r\right)  \left\vert p\right\vert
}\left\langle f_{+}\left(  p\right)  ,\int\limits_{0}^{2\pi}\int
\limits_{0}^{\pi}\frac{\sin\left(  R\left\vert p\right\vert \sqrt
{2-2\cos\theta}\right)  }{\left\vert p\right\vert \sqrt{2-2\cos\theta}%
}\partial_{\theta}g_{+}\left(  \left\vert p\right\vert \omega(\theta
,\varphi)\right)  d\theta d\varphi\right\rangle dr=0,
\]
we decompose $I_{1,2}^{3}\left(  R\right)  $ as follows%
\[
\left.
\begin{array}
[c]{c}%
I_{1,2}^{3}\left(  R\right)  =I_{1,2}^{3,1}\left(  R;0,\pi/4\right)
+I_{1,2}^{3,1}\left(  R;\pi/4,3\pi/4\right)  +I_{1,2}^{3,1}\left(
R;3\pi/4,\pi\right)  +I_{1,2}^{3,2}\left(  R;0,\pi/4\right) \\
+I_{1,2}^{3,2}\left(  R;\pi/4,3\pi/4\right)  +I_{1,2}^{3,2}\left(
R;3\pi/4,\pi\right)  +I_{1,2}^{3,3}\left(  R\right)  +I_{1,2}^{3,4}\left(
R\right)  ,
\end{array}
\right.
\]
where%
\[
\left.
\begin{array}
[c]{c}%
I_{1,2}^{3,1}\left(  R;a,b\right)  :=-4\pi i\lim\limits_{\delta\rightarrow0}%
{\displaystyle\int\limits_{\mathbb{R}^{3}}}
\left[
{\displaystyle\int\limits_{\left\vert p\right\vert -1}^{\left\vert
p\right\vert -\delta}}
+%
{\displaystyle\int\limits_{\left\vert p\right\vert +\delta}^{\left\vert
p\right\vert +1}}
\right] \\
\times\dfrac{1}{\left(  \left\vert p\right\vert -r\right)  }\left(
\frac{\left(  \sqrt{\left\vert p\right\vert ^{2}+m^{2}}+\sqrt{r^{2}+m^{2}%
}\right)  }{\left\vert p\right\vert \left(  \left\vert p\right\vert +r\right)
}F\left(  r\right)  r-\frac{\sqrt{\left\vert p\right\vert ^{2}+m^{2}}%
}{\left\vert p\right\vert }\right)  \left\langle f_{+}\left(  p\right)
,i_{1,2}^{3,1}\left(  \left\vert p\right\vert ,r,\theta,\varphi;R;a,b\right)
d\theta d\varphi\right\rangle drdp,
\end{array}
\right.
\]
with%
\begin{equation}
i_{1,2}^{3,1}\left(  \left\vert p\right\vert ,r;R;a,b\right)  :=%
{\displaystyle\int\limits_{0}^{2\pi}}
{\displaystyle\int\limits_{a}^{b}}
\left(  \frac{\sin\left(  R\sqrt{\left\vert p\right\vert ^{2}-2r\left\vert
p\right\vert \cos\theta+r^{2}}\right)  }{\sqrt{\left\vert p\right\vert
^{2}-2r\left\vert p\right\vert \cos\theta+r^{2}}}\partial_{\theta}g_{+}\left(
r\omega(\theta,\varphi)\right)  \right)  d\theta d\varphi, \label{t125}%
\end{equation}%
\[
\left.  I_{1,2}^{3,2}\left(  R;a,b\right)  :=-4\pi i\lim\limits_{\delta
\rightarrow0}\int\limits_{\mathbb{R}^{3}}\left[  \int\limits_{\left\vert
p\right\vert -1}^{\left\vert p\right\vert -\delta}+\int\limits_{\left\vert
p\right\vert +\delta}^{\left\vert p\right\vert +1}\right]  \left(  \frac
{\sqrt{\left\vert p\right\vert ^{2}+m^{2}}}{\left(  \left\vert p\right\vert
-r\right)  \left\vert p\right\vert }\left\langle f_{+}\left(  p\right)
,i_{1,2}^{3,2}\left(  \left\vert p\right\vert ,r,\theta,\varphi;R;a,b\right)
\right\rangle \right)  drdp,\right.
\]
with%
\[
i_{1,2}^{3,2}\left(  \left\vert p\right\vert ,r;R;a,b\right)  :=%
{\displaystyle\int\limits_{0}^{2\pi}}
{\displaystyle\int\limits_{a}^{b}}
\left(  \frac{\sin\left(  R\sqrt{\left\vert p\right\vert ^{2}-2r\left\vert
p\right\vert \cos\theta+r^{2}}\right)  }{\sqrt{\left\vert p\right\vert
^{2}-2r\left\vert p\right\vert \cos\theta+r^{2}}}-\frac{\sin\left(
R\left\vert p\right\vert \sqrt{2-2\cos\theta}\right)  }{\left\vert
p\right\vert \sqrt{2-2\cos\theta}}\right)  \partial_{\theta}g_{+}\left(
r\omega(\theta,\varphi)\right)  d\theta d\varphi,
\]%
\[
\left.  I_{1,2}^{3,3}\left(  R\right)  :=-4\pi i\lim\limits_{\delta
\rightarrow0}%
{\displaystyle\int\limits_{\mathbb{R}^{3}}}
\left[
{\displaystyle\int\limits_{\left\vert p\right\vert -1}^{\left\vert
p\right\vert -\delta}}
+%
{\displaystyle\int\limits_{\left\vert p\right\vert +\delta}^{\left\vert
p\right\vert +1}}
\right]  \left(  \dfrac{\sqrt{\left\vert p\right\vert ^{2}+m^{2}}}{\left(
\left\vert p\right\vert -r\right)  \left\vert p\right\vert }\left\langle
f_{+}\left(  p\right)  ,i_{1,2}^{3,3}\left(  \left\vert p\right\vert
,r,\theta,\varphi;R\right)  \right\rangle \right)  drdp,\right.
\]
with%
\[
i_{1,2}^{3,3}\left(  \left\vert p\right\vert ,r;R\right)  :=%
{\displaystyle\int\limits_{0}^{2\pi}}
{\displaystyle\int\limits_{0}^{\pi b}}
\left(  \frac{\sin\left(  R\left\vert p\right\vert \sqrt{2-2\cos\theta
}\right)  }{\left\vert p\right\vert \sqrt{2-2\cos\theta}}\left(
\partial_{\theta}g_{+}\left(  r\omega(\theta,\varphi)\right)  -\partial
_{\theta}g_{+}\left(  \left\vert p\right\vert \omega(\theta,\varphi)\right)
\right)  \right)  d\theta d\varphi,
\]
and%
\[
\left.
\begin{array}
[c]{c}%
I_{1,2}^{3,4}\left(  R\right)  :=-4\pi i%
{\displaystyle\int\limits_{\mathbb{R}^{3}}}
\left[
{\displaystyle\int\limits_{-\infty}^{\left\vert p\right\vert -1}}
+%
{\displaystyle\int\limits_{\left\vert p\right\vert +1}^{\infty}}
\right]  \frac{\left(  \sqrt{\left\vert p\right\vert ^{2}+m^{2}}+\sqrt
{r^{2}+m^{2}}\right)  }{r\left\vert p\right\vert \left(  \left\vert
p\right\vert -r\right)  \left(  \left\vert p\right\vert +r\right)  }\\
\times\left\langle f_{+}\left(  p\right)  ,%
{\displaystyle\int\limits_{0}^{2\pi}}
{\displaystyle\int\limits_{0}^{\pi}}
\frac{\sin\left(  R\sqrt{\left\vert p\right\vert ^{2}-2r\left\vert
p\right\vert \cos\theta+r^{2}}\right)  }{\sqrt{\left\vert p\right\vert
^{2}-2r\left\vert p\right\vert \cos\theta+r^{2}}}\partial_{\theta}g_{+}\left(
\omega(\theta,\varphi)\right)  d\theta d\varphi\right\rangle F\left(
r\right)  r^{2}drdp.
\end{array}
\right.
\]

Let us consider first $I_{1,2}^{3,1}\left(  R;0,\pi/4\right)  .$ Using that
$\sin x=\frac{e^{ix}-e^{-ix}}{2i}$ we have%
\[
\frac{\sin\left(  R\sqrt{\left\vert p\right\vert ^{2}-2r\left\vert
p\right\vert \cos\theta+r^{2}}\right)  }{\sqrt{\left\vert p\right\vert
^{2}-2r\left\vert p\right\vert \cos\theta+r^{2}}}\partial_{\theta}g_{+}\left(
r\omega(\theta,\varphi)\right)  =\frac{e^{iR\sqrt{\left\vert p\right\vert
^{2}-2r\left\vert p\right\vert \cos\theta+r^{2}}}-e^{-iR\sqrt{\left\vert
p\right\vert ^{2}-2r\left\vert p\right\vert \cos\theta+r^{2}}}}{2i\sqrt
{\left\vert p\right\vert ^{2}-2r\left\vert p\right\vert \cos\theta+r^{2}}%
}\partial_{\theta}g_{+}\left(  r\omega(\theta,\varphi)\right)  .
\]
Noting that
\[
\frac{e^{\pm iR\sqrt{\left\vert p\right\vert ^{2}-2r\left\vert p\right\vert
\cos\theta+r^{2}}}}{\sqrt{\left\vert p\right\vert ^{2}-2r\left\vert
p\right\vert \cos\theta+r^{2}}}=\frac{\partial_{\theta}\left(  \left(
\sin\theta\right)  e^{\pm iR\sqrt{\left\vert p\right\vert ^{2}-2r\left\vert
p\right\vert \cos\theta+r^{2}}}\right)  }{\left(  \cos\theta\right)
\sqrt{\left\vert p\right\vert ^{2}-2r\left\vert p\right\vert \cos\theta+r^{2}%
}\pm iRr\left\vert p\right\vert \sin^{2}\theta}%
\]
and integrating by parts we get
\[
\left.
\begin{array}
[c]{c}%
{\displaystyle\int\limits_{0}^{\pi/4}}
\frac{e^{\pm iR\sqrt{\left\vert p\right\vert ^{2}-2r\left\vert p\right\vert
\cos\theta+r^{2}}}}{\sqrt{\left\vert p\right\vert ^{2}-2r\left\vert
p\right\vert \cos\theta+r^{2}}}\partial_{\theta}g_{+}\left(  r\omega
(\theta,\varphi)\right)  d\theta=%
{\displaystyle\int\limits_{0}^{\pi/4}}
\frac{\partial_{\theta}\left(  \left(  \sin\theta\right)  e^{\pm
iR\sqrt{\left\vert p\right\vert ^{2}-2r\left\vert p\right\vert \cos
\theta+r^{2}}}\right)  }{\cos\theta\sqrt{\left\vert p\right\vert
^{2}-2r\left\vert p\right\vert \cos\theta+r^{2}}\pm iRr\left\vert p\right\vert
\sin^{2}\theta}\partial_{\theta}g_{+}\left(  r\omega(\theta,\varphi)\right)
d\theta\\
=\sqrt{2}\frac{e^{\pm iR\sqrt{\left\vert p\right\vert ^{2}-\sqrt{2}r\left\vert
p\right\vert +r^{2}}}}{\sqrt{2}\sqrt{\left\vert p\right\vert ^{2}-\sqrt
{2}r\left\vert p\right\vert +r^{2}}\pm iRr\left\vert p\right\vert }\left.
\partial_{\theta}g_{+}\left(  r\omega(\theta,\varphi)\right)  \right\vert
_{\theta=\pi/4}\\
-%
{\displaystyle\int\limits_{0}^{\pi/4}}
e^{\pm iR\sqrt{\left\vert p\right\vert ^{2}-2r\left\vert p\right\vert
\cos\theta+r^{2}}}\sin\theta\left(  \partial_{\theta}\frac{\partial_{\theta
}g_{+}\left(  r\omega(\theta,\varphi)\right)  }{\cos\theta\sqrt{\left\vert
p\right\vert ^{2}-2r\left\vert p\right\vert \cos\theta+r^{2}}\pm iRr\left\vert
p\right\vert \sin^{2}\theta}\right)  d\theta,
\end{array}
\right.
\]
and hence,%
\begin{equation}
\left.
\begin{array}
[c]{c}%
i_{1,2}^{3,1}\left(  \left\vert p\right\vert ,r;R;0,\pi/4\right)
=j_{1,2}^{3,1}\left(  \left\vert p\right\vert ,r;R\right)  +j_{1,2}%
^{3,1}\left(  \left\vert p\right\vert ,r;-R\right)  +l_{1,2}^{3,1}\left(
\left\vert p\right\vert ,r;R\right)  +l_{1,2}^{3,1}\left(  \left\vert
p\right\vert ,r;-R\right)  ,
\end{array}
\right.  \label{t107}%
\end{equation}
with%
\[
j_{1,2}^{3,1}\left(  \left\vert p\right\vert ,r;\pm R\right)  :=\pm
\dfrac{\sqrt{2}}{2i}\left(  \dfrac{e^{\pm iR\sqrt{\left\vert p\right\vert
^{2}-\sqrt{2}r\left\vert p\right\vert +r^{2}}}}{\sqrt{2}\sqrt{\left\vert
p\right\vert ^{2}-\sqrt{2}r\left\vert p\right\vert +r^{2}}\pm iRr\left\vert
p\right\vert }\right)
{\displaystyle\int\limits_{0}^{2\pi}}
\left.  \partial_{\theta}g_{+}\left(  r\omega(\theta,\varphi)\right)
\right\vert _{\theta=\pi/4}d\varphi
\]
and%
\begin{equation}
l_{1,2}^{3,1}\left(  \left\vert p\right\vert ,r;\pm R\right)  :=\mp\dfrac
{1}{2i}%
{\displaystyle\int\limits_{0}^{2\pi}}
{\displaystyle\int\limits_{0}^{\pi/4}}
e^{\pm iR\sqrt{\left\vert p\right\vert ^{2}-2r\left\vert p\right\vert
\cos\theta+r^{2}}}\sin\theta\left(  \partial_{\theta}\frac{\partial_{\theta
}g_{+}\left(  r\omega(\theta,\varphi)\right)  }{\cos\theta\sqrt{\left\vert
p\right\vert ^{2}-2r\left\vert p\right\vert \cos\theta+r^{2}}\pm iRr\left\vert
p\right\vert \sin^{2}\theta}\right)  d\theta d\varphi. \label{t130}%
\end{equation}
Since%
\[
\left.
\begin{array}
[c]{c}%
\left\vert
{\displaystyle\int\limits_{0}^{2\pi}}
\partial_{\theta}g_{+}\left(  r\omega(\theta,\varphi)\right)  d\varphi
\right\vert =\left\vert
{\displaystyle\int\limits_{0}^{2\pi}}
\partial_{\theta}g_{+}\left(  r\omega(\theta,\varphi)\right)  -\left.
\partial_{\theta}g_{+}\left(  r\omega(\theta,\varphi)\right)  \right\vert
_{\theta=0}d\varphi\right\vert \\
\leq C%
{\displaystyle\int\limits_{0}^{2\pi}}
\left\vert
{\displaystyle\int\limits_{0}^{\theta}}
\partial_{\theta}^{2}g_{+}\left(  r\omega(\theta,\varphi)\right)
d\theta\right\vert d\varphi\leq C%
{\displaystyle\int\limits_{0}^{2\pi}}
{\displaystyle\int\limits_{0}^{\pi}}
\left\vert \partial_{\theta}^{2}g_{+}\left(  r\omega(\theta,\varphi)\right)
\right\vert d\theta d\varphi
\end{array}
\right.
\]
and%
\[
\left\vert \dfrac{e^{\pm iR\sqrt{\left\vert p\right\vert ^{2}-\sqrt
{2}r\left\vert p\right\vert +r^{2}}}}{\sqrt{2}\sqrt{\left\vert p\right\vert
^{2}-\sqrt{2}r\left\vert p\right\vert +r^{2}}\pm iRr\left\vert p\right\vert
}\right\vert \leq\dfrac{1}{\left\vert \left\vert p\right\vert -r\right\vert
^{1-\beta}\left(  Rr\left\vert p\right\vert \right)  ^{\beta}},\text{ \ }%
\beta\leq1,
\]
we get%
\begin{equation}
\left\vert j_{1,2}^{3,1}\left(  \left\vert p\right\vert ,r;\pm R\right)
\right\vert \leq C\dfrac{1}{\left\vert \left\vert p\right\vert -r\right\vert
^{1-\beta}\left(  Rr\left\vert p\right\vert \right)  ^{\beta}}%
{\displaystyle\int\limits_{0}^{2\pi}}
{\displaystyle\int\limits_{0}^{\pi/4}}
\left\vert \partial_{\theta}^{2}g_{+}\left(  r\omega(\theta,\varphi)\right)
\right\vert d\theta d\varphi. \label{t108}%
\end{equation}
Note that the following estimates are true
\begin{equation}
\cos^{2}\theta\left(  \left\vert p\right\vert ^{2}-2r\left\vert p\right\vert
\cos\theta+r^{2}\right)  +\left(  Rr\left\vert p\right\vert \right)  ^{2}%
\sin^{4}\theta\geq C\left(  \left\vert p\right\vert -r\right)  ^{2-2\beta
}\left(  Rr\left\vert p\right\vert \right)  ^{2\beta}\sin^{4\beta}\theta,
\label{t113}%
\end{equation}
for $\beta\leq1$ and $\theta\in\lbrack0,\pi/4]$,
\begin{equation}
\left.  \dfrac{\left\vert r\right\vert \left\vert p\right\vert \sin\theta
\cos\theta}{\sqrt{\left\vert p\right\vert ^{2}-2r\left\vert p\right\vert
\cos\theta+r^{2}}}\leq C\sqrt{r\left\vert p\right\vert }\right.  \label{t114}%
\end{equation}
and%
\begin{equation}
\left.  \left\vert
{\displaystyle\int\limits_{0}^{2\pi}}
\partial_{\theta}g_{+}\left(  r\omega(\theta,\varphi)\right)  d\varphi
\right\vert =\left\vert
{\displaystyle\int\limits_{0}^{2\pi}}
\left(  \partial_{\theta}g_{+}\left(  r\omega(\theta,\varphi)\right)  -\left.
\partial_{\theta}g_{+}\left(  r\omega(\theta,\varphi)\right)  \right\vert
_{\theta=0}\right)  d\varphi\right\vert \leq C\left\vert \theta\right\vert
^{1/2}A\left(  g_{+};r\right)  ,\right.  \label{t115}%
\end{equation}
where%
\[
A\left(  g_{+};r\right)  :=\left(
{\displaystyle\int\limits_{0}^{2\pi}}
{\displaystyle\int\limits_{0}^{\pi}}
\left\vert \partial_{\theta}^{2}g_{+}\left(  r\omega(\theta,\varphi)\right)
\right\vert ^{2}d\theta d\varphi\right)  ^{1/2}.
\]
Using (\ref{t113}) with $\beta=3/4-\varepsilon,$ $\varepsilon>0,$ we get%
\begin{equation}
\left.
\begin{array}
[c]{c}%
\left\vert
{\displaystyle\int\limits_{0}^{2\pi}}
{\displaystyle\int\limits_{0}^{\pi/4}}
\sin\theta\left(  \dfrac{\partial_{\theta}^{2}g_{+}\left(  r\omega
(\theta,\varphi)\right)  }{\cos\theta\sqrt{\left\vert p\right\vert
^{2}-2r\left\vert p\right\vert \cos\theta+r^{2}}\pm iRr\left\vert p\right\vert
\sin^{2}\theta}\right)  d\theta d\varphi\right\vert \\
\leq\dfrac{C}{\left\vert \left\vert p\right\vert -r\right\vert ^{1-\beta
}\left(  Rr\left\vert p\right\vert \right)  ^{\beta}}%
{\displaystyle\int\limits_{0}^{2\pi}}
{\displaystyle\int\limits_{0}^{\pi/4}}
\left(  \dfrac{\left\vert \partial_{\theta}^{2}g_{+}\left(  r\omega
(\theta,\varphi)\right)  \right\vert }{\left(  \sin\theta\right)  ^{2\beta-1}%
}\right)  d\theta d\varphi\leq\dfrac{C}{\left\vert \left\vert p\right\vert
-r\right\vert ^{1/4+\varepsilon}\left(  Rr\left\vert p\right\vert \right)
^{3/4-\varepsilon}}A\left(  g_{+};r\right)  .
\end{array}
\right.  \label{t127}%
\end{equation}
It follows from (\ref{t113}), (\ref{t114}) and the estimate%
\[
\left.  \frac{\sqrt{\left\vert p\right\vert ^{2}-2r\left\vert p\right\vert
\cos\theta+r^{2}}\sin\theta}{\cos^{2}\theta\left(  \left\vert p\right\vert
^{2}-2r\left\vert p\right\vert \cos\theta+r^{2}\right)  +\left(  Rr\left\vert
p\right\vert \right)  ^{2}\sin^{4}\theta}\leq\frac{\sin\theta}{\left(
\cos^{2}\theta\left(  \left\vert p\right\vert ^{2}-2r\left\vert p\right\vert
\cos\theta+r^{2}\right)  +\left(  Rr\left\vert p\right\vert \right)  ^{2}%
\sin^{4}\theta\right)  ^{1/2}}\right.
\]
that%
\[
\left.
\begin{array}
[c]{c}%
\left\vert \partial_{\theta}\dfrac{1}{\cos\theta\sqrt{\left\vert p\right\vert
^{2}-2r\left\vert p\right\vert \cos\theta+r^{2}}+iRr\left\vert p\right\vert
\sin^{2}\theta}\right\vert \leq C\dfrac{\sqrt{\left\vert p\right\vert
^{2}-2r\left\vert p\right\vert \cos\theta+r^{2}}\sin\theta+\sqrt{r\left\vert
p\right\vert }+Rr\left\vert p\right\vert \sin\theta}{\cos^{2}\theta\left(
\left\vert p\right\vert ^{2}-2r\left\vert p\right\vert \cos\theta
+r^{2}\right)  +\left(  Rr\left\vert p\right\vert \right)  ^{2}\sin^{4}\theta
}\\
\leq C\dfrac{\sin\theta}{\left(  \left\vert p\right\vert -r\right)
^{1-\beta_{1}}\left(  Rr\left\vert p\right\vert \right)  ^{\beta_{1}}\left(
\sin\theta\right)  ^{2\beta_{1}}}+C\dfrac{\sqrt{r\left\vert p\right\vert }%
}{\left(  \left\vert p\right\vert -r\right)  ^{2-2\beta_{2}}\left(
Rr\left\vert p\right\vert \right)  ^{2\beta_{2}}\left(  \sin\theta\right)
^{4\beta_{2}}}\\
+C\dfrac{Rr\left\vert p\right\vert \sin\theta}{\left(  \left\vert p\right\vert
-r\right)  ^{2-2\beta_{3}}\left(  Rr\left\vert p\right\vert \right)
^{2\beta_{3}}\left(  \sin\theta\right)  ^{4\beta_{3}}},
\end{array}
\right.
\]
for $\beta_{j}\leq1,$ $j=1,2,3.$ Thus, using (\ref{t115}) we have
\begin{equation}
\left.
\begin{array}
[c]{c}%
\left\vert
{\displaystyle\int\limits_{0}^{2\pi}}
{\displaystyle\int\limits_{0}^{\pi/4}}
\sin\theta\left(  \partial_{\theta}\dfrac{1}{\cos\theta\sqrt{\left\vert
p\right\vert ^{2}-2r\left\vert p\right\vert \cos\theta+r^{2}}+iRr\left\vert
p\right\vert \sin^{2}\theta}\right)  \partial_{\theta}g_{+}\left(
r\omega(\theta,\varphi)\right)  d\theta d\varphi\right\vert \\
\leq C\left(  \left(  \dfrac{1}{\left\vert \left\vert p\right\vert
-r\right\vert ^{1-\beta_{1}}\left(  Rr\left\vert p\right\vert \right)
^{\beta_{1}}}%
{\displaystyle\int\limits_{0}^{\pi/4}}
\dfrac{d\theta}{\left(  \sin\theta\right)  ^{2\beta_{1}-5/2}}+\dfrac
{\sqrt{r\left\vert p\right\vert }}{\left\vert \left\vert p\right\vert
-r\right\vert ^{2-2\beta_{2}}\left(  Rr\left\vert p\right\vert \right)
^{2\beta_{2}}}%
{\displaystyle\int\limits_{0}^{\pi/4}}
\dfrac{d\theta}{\left(  \sin\theta\right)  ^{4\beta_{2}-3/2}}\right)  A\left(
g_{+};r\right)  \right) \\
+C\left(  \dfrac{1}{\left\vert \left\vert p\right\vert -r\right\vert
^{2-2\beta_{3}}\left(  Rr\left\vert p\right\vert \right)  ^{2\beta_{3}-1}}%
{\displaystyle\int\limits_{0}^{\pi/4}}
\dfrac{d\theta}{\left(  \sin\theta\right)  ^{4\beta_{3}-5/2}}\right)  A\left(
g_{+};r\right)  .
\end{array}
\right.  \label{t118}%
\end{equation}
Moreover, taking $\beta_{1}=3/4-\varepsilon,$ $\beta_{2}=5/8-\left(
1/2\right)  \varepsilon$ and $\beta_{3}=7/8-\left(  1/2\right)  \varepsilon,$
for $\varepsilon>0,\ $in (\ref{t118}), and using the resulting estimate,
together with (\ref{t127}) in (\ref{t130}), we get%
\begin{equation}
\left.  \left\vert l_{1,2}^{3,1}\left(  \left\vert p\right\vert ,r;\pm
R\right)  \right\vert \leq C\left(  \dfrac{1}{\left\vert \left\vert
p\right\vert -r\right\vert ^{1/4+\varepsilon}\left(  Rr\left\vert p\right\vert
\right)  ^{3/4-\varepsilon}}+\dfrac{\sqrt{r\left\vert p\right\vert }%
}{\left\vert \left\vert p\right\vert -r\right\vert ^{3/4+\varepsilon}\left(
Rr\left\vert p\right\vert \right)  ^{5/4-\varepsilon}}\right)  A\left(
g_{+};r\right)  .\right.  \label{t109}%
\end{equation}
Using (\ref{t108}), with $\beta=3/4-\varepsilon,$ and (\ref{t109}) in
(\ref{t107}) we get
\begin{equation}
\left.
\begin{array}
[c]{c}%
\left\vert i_{1,2}^{3,1}\left(  \left\vert p\right\vert ,r;R;0,\pi/4\right)
\right\vert \leq C\left(  \tfrac{1}{\left\vert \left\vert p\right\vert
-r\right\vert ^{1/4+\varepsilon}\left(  Rr\left\vert p\right\vert \right)
^{3/4-\varepsilon}}+\tfrac{\sqrt{r\left\vert p\right\vert }}{\left\vert
\left\vert p\right\vert -r\right\vert ^{3/4+\varepsilon}\left(  Rr\left\vert
p\right\vert \right)  ^{5/4-\varepsilon}}\right)  A\left(  g_{+};r\right) \\
\leq C\left(  \tfrac{1}{\left\vert \left\vert p\right\vert -r\right\vert
^{1/4+\varepsilon}\left(  Rr\left\vert p\right\vert \right)  ^{3/4-\varepsilon
}}+\tfrac{\sqrt{r\left\vert p\right\vert }}{\left\vert \left\vert p\right\vert
-r\right\vert ^{3/4+\varepsilon}\left(  Rr\left\vert p\right\vert \right)
^{5/4-\varepsilon}}\right)  \left(  B\left(  g_{+};r\right)  \right)  ^{1/2},
\end{array}
\right.  \label{t110}%
\end{equation}
with%
\begin{equation}
B\left(  g_{+};r\right)  :=%
{\displaystyle\int\limits_{0}^{2\pi}}
{\displaystyle\int\limits_{0}^{\pi}}
\left(  \left\vert \partial_{\theta}g_{+}\left(  r\omega(\theta,\varphi
)\right)  \right\vert ^{2}+\left\vert \partial_{\theta}^{2}g_{+}\left(
r\omega(\theta,\varphi)\right)  \right\vert ^{2}\right)  d\theta d\varphi.
\label{t131}%
\end{equation}
On the other hand, by (\ref{t125}),%
\begin{equation}
\left\vert i_{1,2}^{3,1}\left(  \left\vert p\right\vert ,r;R;0,\pi/4\right)
\right\vert \leq CR\left(
{\displaystyle\int\limits_{0}^{2\pi}}
{\displaystyle\int\limits_{0}^{\pi}}
\left\vert \partial_{\theta}g_{+}\left(  r\omega(\theta,\varphi)\right)
\right\vert ^{2}d\theta d\varphi\right)  ^{1/2}\leq CR\left(  B\left(
g_{+};r\right)  \right)  ^{1/2}. \label{t116}%
\end{equation}
Thus, observing that%
\[
\left.
\begin{array}
[c]{c}%
\left\vert \dfrac{\left(  \sqrt{\left\vert p\right\vert ^{2}+m^{2}}%
+\sqrt{r^{2}+m^{2}}\right)  }{\left\vert p\right\vert \left(  \left\vert
p\right\vert +r\right)  }F\left(  r\right)  r-\dfrac{\sqrt{\left\vert
p\right\vert ^{2}+m^{2}}}{\left\vert p\right\vert }\right\vert \leq\\
\leq\left\vert \left(  \dfrac{\left(  \sqrt{\left\vert p\right\vert ^{2}%
+m^{2}}+\sqrt{r^{2}+m^{2}}\right)  }{\left\vert p\right\vert \left(
\left\vert p\right\vert +r\right)  }-\dfrac{\sqrt{\left\vert p\right\vert
^{2}+m^{2}}}{\left\vert p\right\vert ^{2}}\right)  r\right\vert +\dfrac
{\sqrt{\left\vert p\right\vert ^{2}+m^{2}}}{\left\vert p\right\vert ^{2}%
}\left\vert \left(  F\left(  r\right)  r-F\left(  \left\vert p\right\vert
\right)  \left\vert p\right\vert \right)  \right\vert \leq C\left(
\left\langle p\right\rangle ^{-1}+\frac{1}{\left\vert p\right\vert ^{2}%
}\right)  \left\vert \left\vert p\right\vert -r\right\vert ,
\end{array}
\right.
\]
we obtain by using (\ref{t110}) and (\ref{t116})%
\begin{equation}
\left.
\begin{array}
[c]{c}%
{\displaystyle\int\limits_{\mathbb{R}^{3}}}
{\displaystyle\int\limits_{\left\vert p\right\vert -1}^{\left\vert
p\right\vert +1}}
\dfrac{1}{\left\vert \left\vert p\right\vert -r\right\vert }\left\vert
\tfrac{\left(  \sqrt{\left\vert p\right\vert ^{2}+m^{2}}+\sqrt{r^{2}+m^{2}%
}\right)  }{\left\vert p\right\vert \left(  \left\vert p\right\vert +r\right)
}F\left(  r\right)  r-\tfrac{\sqrt{\left\vert p\right\vert ^{2}+m^{2}}%
}{\left\vert p\right\vert }\right\vert \left\vert f_{+}\left(  p\right)
\right\vert \left\vert i_{1,2}^{3,1}\left(  \left\vert p\right\vert
,r;R;0,\pi/4\right)  \right\vert drdp\leq\\
\leq C%
{\displaystyle\int\limits_{\mathbb{R}^{3}}}
\left(  \left\langle p\right\rangle ^{-1}+\frac{1}{\left\vert p\right\vert
^{2}}\right)  \left\vert f_{+}\left(  p\right)  \right\vert
{\displaystyle\int\limits_{\left\vert p\right\vert -1}^{\left\vert
p\right\vert +1}}
\left\vert i_{1,2}^{3,1}\left(  \left\vert p\right\vert ,r;R;0,\pi/4\right)
\right\vert ^{\alpha}\left\vert i_{1,2}^{3,1}\left(  \left\vert p\right\vert
,r;R;0,\pi/4\right)  \right\vert ^{1-\alpha}drdp\\
\leq CR^{1-\alpha}%
{\displaystyle\int\limits_{\mathbb{R}^{3}}}
\left(  \left\langle p\right\rangle ^{-1}+\frac{1}{\left\vert p\right\vert
^{2}}\right)  \left(
{\displaystyle\int\limits_{\left\vert p\right\vert -1}^{\left\vert
p\right\vert +1}}
B\left(  g_{+};r\right)  dr\right)  ^{1/2}\\
\times\left\vert f_{+}\left(  p\right)  \right\vert \left(
{\displaystyle\int\limits_{\left\vert p\right\vert -1}^{\left\vert
p\right\vert +1}}
\left(  \tfrac{1}{\left\vert \left\vert p\right\vert -r\right\vert
^{1/4+\varepsilon}\left(  Rr\left\vert p\right\vert \right)  ^{3/4-\varepsilon
}}\right)  ^{2\alpha}+\left(  \tfrac{\sqrt{r\left\vert p\right\vert }%
}{\left\vert \left\vert p\right\vert -r\right\vert ^{3/4+\varepsilon}\left(
Rr\left\vert p\right\vert \right)  ^{5/4-\varepsilon}}\right)  ^{2\alpha
}dr\right)  ^{1/2},
\end{array}
\right.  \label{t117}%
\end{equation}
for $\alpha\leq1.$ Note that
\[
\left.
\begin{array}
[c]{c}%
{\displaystyle\int\limits_{\left\vert p\right\vert -1}^{\left\vert
p\right\vert +1}}
\left(  \tfrac{1}{\left\vert \left\vert p\right\vert -r\right\vert
^{1/4+\varepsilon}\left(  Rr\left\vert p\right\vert \right)  ^{3/4-\varepsilon
}}\right)  ^{2\alpha}+\left(  \tfrac{\sqrt{r\left\vert p\right\vert }%
}{\left\vert \left\vert p\right\vert -r\right\vert ^{3/4+\varepsilon}\left(
Rr\left\vert p\right\vert \right)  ^{5/4-\varepsilon}}\right)  ^{2\alpha}dr\\
\leq R^{-2\alpha\left(  5/4-\varepsilon\right)  }\left\vert p\right\vert
^{-2\alpha\left(  3/4-\varepsilon\right)  }%
{\displaystyle\int\limits_{\left\vert p\right\vert -1}^{\left\vert
p\right\vert +1}}
\left(  \tfrac{1}{\left\vert \left\vert p\right\vert -r\right\vert
^{3/4+\varepsilon}\left\vert r\right\vert ^{3/4-\varepsilon}}\right)
^{2\alpha}dr+\left(  R\left\vert p\right\vert \right)  ^{-2\alpha\left(
3/4-\varepsilon\right)  }%
{\displaystyle\int\limits_{\left\vert p\right\vert -1}^{\left\vert
p\right\vert +1}}
\left(  \tfrac{1}{\left\vert \left\vert p\right\vert -r\right\vert
^{1/4+\varepsilon}\left\vert r\right\vert ^{3/4-\varepsilon}}\right)
^{2\alpha}dr\\
\leq CR^{-2\alpha\left(  3/4-\varepsilon\right)  }\left(  1+\tfrac
{1}{\left\vert p\right\vert ^{4\alpha\left(  3/4+\varepsilon\right)  }%
}\right)  .
\end{array}
\right.
\]
Therefore, taking $\alpha=4/7+\varepsilon$ in (\ref{t117}) and using the last
inequality and (\ref{t131})$,$ we obtain%
\[
\left.
\begin{array}
[c]{c}%
{\displaystyle\int\limits_{\mathbb{R}^{3}}}
{\displaystyle\int\limits_{\left\vert p\right\vert -1}^{\left\vert
p\right\vert +1}}
\dfrac{1}{\left\vert \left\vert p\right\vert -r\right\vert }\left\vert
\tfrac{\left(  \sqrt{\left\vert p\right\vert ^{2}+m^{2}}+\sqrt{r^{2}+m^{2}%
}\right)  }{\left\vert p\right\vert \left(  \left\vert p\right\vert +r\right)
}F\left(  r\right)  r-\tfrac{\sqrt{\left\vert p\right\vert ^{2}+m^{2}}%
}{\left\vert p\right\vert }\right\vert \left\vert f_{+}\left(  p\right)
\right\vert \left\vert i_{1,2}^{3,1}\left(  \left\vert p\right\vert
,r;R;0,\pi/4\right)  \right\vert drdp\\
\leq\dfrac{C}{R^{\varepsilon_{1}}}\left(
{\displaystyle\int\limits_{\mathbb{R}^{3}}}
\left(  1+\frac{1}{\left\vert p\right\vert ^{2}}\right)  \left\vert
f_{+}\left(  p\right)  \right\vert \left(  1+\tfrac{1}{\left\vert p\right\vert
^{1-\varepsilon_{1}}}\right)  dp\right)  \left\Vert g_{+}\right\Vert
_{\mathcal{H}^{2}}\leq\dfrac{C}{R^{\varepsilon_{1}}}\left(  \left(  \left\Vert
f_{+}\right\Vert _{L^{\infty}}+\left\Vert f_{+}\right\Vert _{L^{2}}\right)
\left\Vert g_{+}\right\Vert _{\mathcal{H}^{2}}\right)  ,
\end{array}
\right.
\]
for $\varepsilon_{1}>0$ small enough, and furthermore, we conclude that%
\[
\lim_{R\rightarrow\infty}I_{1,2}^{3,1}\left(  R;0,\pi/4\right)  =0.
\]
Proceeding similarly, we obtain%
\[
\lim_{R\rightarrow\infty}I_{1,2}^{3,1}\left(  R;\pi/4,3\pi/4\right)  =0,
\]
\
\[
\lim_{R\rightarrow\infty}I_{1,2}^{3,1}\left(  R;3\pi/4,\pi\right)  =0,
\]%
\[
\lim_{R\rightarrow\infty}I_{1,2}^{3,2}\left(  R;0,\pi/4\right)  =0,
\]%
\[
\lim_{R\rightarrow\infty}I_{1,2}^{3,2}\left(  R;\pi/4,3\pi/4\right)  =0
\]
and\
\[
\lim_{R\rightarrow\infty}I_{1,2}^{3,2}\left(  R;3\pi/4,\pi\right)  =0.
\]
Actually, the proof for the parts $I_{1,2}^{3,1}\left(  R;\pi/4,3\pi/4\right)
$ and $I_{1,2}^{3,2}\left(  R;\pi/4,3\pi/4\right)  $ is more simple since
$\sin\theta\neq0$ on $[\pi/4,3\pi/4].$

Let us now consider\ $I_{1,2}^{3,2}\left(  R\right)  .$ Note that for
$\alpha\leq1$%
\[
\left.
\begin{array}
[c]{c}%
\left\vert
{\displaystyle\int\limits_{0}^{2\pi}}
\left(  \partial_{\theta}g_{+}\left(  r\omega(\theta,\varphi)\right)
-\partial_{\theta}g_{+}\left(  \left\vert p\right\vert \omega(\theta
,\varphi)\right)  \right)  d\varphi\right\vert \\
=\left\vert
{\displaystyle\int\limits_{0}^{2\pi}}
\left(  \partial_{\theta}g_{+}\left(  r\omega(\theta,\varphi)\right)
-\partial_{\theta}g_{+}\left(  \left\vert p\right\vert \omega(\theta
,\varphi)\right)  \right)  d\varphi\right\vert ^{\alpha}\left\vert
{\displaystyle\int\limits_{0}^{2\pi}}
\left(  \partial_{\theta}g_{+}\left(  r\omega(\theta,\varphi)\right)
-\partial_{\theta}g_{+}\left(  \left\vert p\right\vert \omega(\theta
,\varphi)\right)  \right)  d\varphi\right\vert ^{1-\alpha}\\
\leq C\left\vert \left\vert p\right\vert -r\right\vert ^{\alpha/2}\left\vert
\theta\right\vert ^{(1/2)\left(  1-\alpha\right)  }\left(
{\displaystyle\int\limits_{0}^{2\pi}}
{\displaystyle\int\limits_{\left\vert p\right\vert }^{r}}
\left\vert \partial_{r}\partial_{\theta}g_{+}\left(  r\omega(\theta
,\varphi)\right)  \right\vert ^{2}drd\varphi\right)  ^{\alpha/2}\left(
\left(  A\left(  g_{+};r\right)  \right)  ^{\frac{1-\alpha}{2}}+A\left(
g_{+};\left\vert p\right\vert \right)  ^{\frac{1-\alpha}{2}}\right)  .
\end{array}
\right.
\]
Then, by H\"{o}lder inequality we get for $\alpha<1/2$%
\[
\left.
\begin{array}
[c]{c}%
\left\vert i_{1,2}^{3,2}\left(  \left\vert p\right\vert ,r;R\right)
\right\vert \leq%
{\displaystyle\int\limits_{0}^{\pi}}
\left(  \frac{1}{\sqrt{2-2\cos\theta}}\left\vert
{\displaystyle\int\limits_{0}^{2\pi}}
\left(  \partial_{\theta}g_{+}\left(  r\omega(\theta,\varphi)\right)
-\partial_{\theta}g_{+}\left(  \left\vert p\right\vert \omega(\theta
,\varphi)\right)  \right)  d\varphi\right\vert \right)  d\theta\\
\leq C\left\vert \left\vert p\right\vert -r\right\vert ^{\alpha/2}\left(
\left(  A\left(  g_{+};r\right)  \right)  ^{\frac{1-\alpha}{2}}+A\left(
g_{+};\left\vert p\right\vert \right)  ^{\frac{1-\alpha}{2}}\right)
{\displaystyle\int\limits_{0}^{\pi/4}}
\left(  \left\vert \theta\right\vert ^{-1/2-(1/2)\alpha}\left(
{\displaystyle\int\limits_{0}^{2\pi}}
{\displaystyle\int\limits_{\left\vert p\right\vert }^{r}}
\left\vert \partial_{r}\partial_{\theta}g_{+}\left(  r\omega(\theta
,\varphi)\right)  \right\vert ^{2}drd\varphi\right)  ^{\alpha/2}\right)
d\theta\\
\leq C\left\vert \left\vert p\right\vert -r\right\vert ^{\alpha/2}\left(
\left(  A\left(  g_{+};r\right)  \right)  ^{\frac{1-\alpha}{2}}+A\left(
g_{+};\left\vert p\right\vert \right)  ^{\frac{1-\alpha}{2}}\right)  \left(
\left\Vert g_{+}\right\Vert _{\mathcal{H}^{2}}\right)  ^{\alpha/2}.
\end{array}
\right.
\]
Thus, we have%
\[
\left.
\begin{array}
[c]{c}%
{\displaystyle\int\limits_{\mathbb{R}^{3}}}
{\displaystyle\int\limits_{\left\vert p\right\vert -1}^{\left\vert
p\right\vert +1}}
\dfrac{\sqrt{\left\vert p\right\vert ^{2}+m^{2}}}{\left\vert \left\vert
p\right\vert -r\right\vert \left\vert p\right\vert }\left\vert f_{+}\left(
p\right)  \right\vert \left\vert i_{1,2}^{3,2}\left(  \left\vert p\right\vert
,r,\theta,\varphi;R\right)  \right\vert drdp\\
\leq C\left(  \left\Vert g_{+}\right\Vert _{\mathcal{H}^{2}}\right)
^{\alpha/2}%
{\displaystyle\int\limits_{-\infty}^{\infty}}
\left(  A\left(  g_{+};r\right)  \right)  ^{\frac{1-\alpha}{2}}%
{\displaystyle\int\limits_{\mathbb{R}^{3}}}
\dfrac{\sqrt{\left\vert p\right\vert ^{2}+m^{2}}}{\left\vert \left\vert
p\right\vert -r\right\vert ^{1-\alpha/2}\left\vert p\right\vert }\left\vert
f_{+}\left(  p\right)  \right\vert dpdr\\
+C\left(  \left\Vert g_{+}\right\Vert _{\mathcal{H}^{2}}\right)  ^{\alpha/2}%
{\displaystyle\int\limits_{\mathbb{R}^{3}}}
\dfrac{\sqrt{\left\vert p\right\vert ^{2}+m^{2}}}{\left\vert p\right\vert
}\left\vert f_{+}\left(  p\right)  \right\vert A\left(  g_{+};\left\vert
p\right\vert \right)  ^{\frac{1-\alpha}{2}}\left(
{\displaystyle\int\limits_{\left\vert p\right\vert -1}^{\left\vert
p\right\vert +1}}
\frac{1}{\left\vert \left\vert p\right\vert -r\right\vert ^{1-\alpha/2}%
}dr\right)  dp\\
\leq C\left\Vert f_{+}\right\Vert _{L_{1}^{\infty}}\left\Vert g_{+}\right\Vert
_{\mathcal{H}_{3/2+\varepsilon}^{2}}.
\end{array}
\right.
\]
Therefore, arguing as in (\ref{t2}) we get%
\[
\lim_{R\rightarrow\infty}I_{1,2}^{3,2}\left(  R\right)  =0.
\]

Finally, the proof for the term $I_{1,2}^{3,4}\left(  R\right)  $ is similar
to that of $I_{1,2}^{3,1}$ or $I_{1,2}^{3,2}.$ It results to be easier since
there is no irregular term $\left\vert \left\vert p\right\vert -r\right\vert
^{-1}.$
\end{proof}

We already obtained the asymptotics of $I_{1}\left(  R\right)  $ and
$I_{4}\left(  R\right)  $ as $R\rightarrow\infty.$ Let us now study the
behavior of $I_{2}\left(  R\right)  $ and $I_{3}\left(  R\right)  $ for big
$R.$ We prove the following

\begin{lemma}
\label{t48}Let $f,g\in\mathcal{H}_{2}^{3/2+\varepsilon}$, $\varepsilon>0.$
Then,
\[
\lim_{R\rightarrow\infty}I_{2}\left(  R\right)  =\lim_{R\rightarrow\infty
}I_{3}\left(  R\right)  =0.
\]

\end{lemma}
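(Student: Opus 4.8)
The plan is to exploit the non-resonance of the phase stressed after (\ref{t96}): in both $I_2(R)$ and $I_3(R)$ the time dependence is governed by the \emph{sum} $\Omega(p,q):=\sqrt{|p|^2+m^2}+\sqrt{|q|^2+m^2}\ge 2m>0$, which never vanishes. I would treat $I_2(R)$ in detail; $I_3(R)$ is identical after complex conjugation of the phase.

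\emph{Step 1 (time integration).} Since $f,g\in\mathcal{H}_2^{3/2+\varepsilon}$ and $\langle\xi\rangle^{-(3/2+\varepsilon)}\in L^2(\mathbb{R}^3)$, Cauchy--Schwarz gives $\hat f,\hat g\in L^1$, hence $f_+,g_-\in L^1(\mathbb{R}^3)$. Inserting an Abel regulator $e^{-\tau t}$, $\tau>0$, the integrand of $I_2(R)$ is absolutely integrable in $(t,p,q)$, so Fubini applies; recalling that $\langle\cdot,\cdot\rangle$ conjugates the first entry, the $t$-integral is $\int_0^\infty e^{i\Omega t}e^{-\tau t}\,dt=(\tau-i\Omega)^{-1}$. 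As $|(\tau-i\Omega)^{-1}|\le(2m)^{-1}$ uniformly, letting $\tau\to0^+$ (dominated convergence, the absolute convergence of the $t$-integral defining $I_2(R)$ being furnished by Lemma \ref{t42}) yields
\[
I_2(R)=i\int_{\mathbb{R}^3}\int_{\mathbb{R}^3}\frac{1}{\Omega(p,q)}\,\langle f_+(p),\tilde\zeta_R(p-q)g_-(q)\rangle\,dp\,dq .
\]
In contrast with $I_1(R)$ and $I_4(R)$, the bound $\Omega\ge 2m$ means no principal value arises and no factor of $R$ is generated.

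\emph{Step 2 (separation of variables and Parseval).} Writing $\Omega^{-1}=\int_0^\infty e^{-s\Omega}\,ds$ and setting $\phi_s(p):=e^{-s\sqrt{|p|^2+m^2}}f_+(p)$, $\psi_s(q):=e^{-s\sqrt{|q|^2+m^2}}g_-(q)$, Fubini (legitimate since $\Omega^{-1}\le(2m)^{-1}$ and $f_+,g_-\in L^1$) turns the inner double integral into $(\phi_s,\tilde\zeta_R*\psi_s)$, a genuine convolution in $q$. Using $\mathcal{F}^{\ast}\tilde\zeta_R=(2\pi)^{3/2}\zeta(|\cdot|/R)$ together with the convolution theorem and Parseval, this transports to position space:
\[
I_2(R)=i(2\pi)^3\int_0^\infty\Big(\int_{|x|\le R}\langle\Phi_s(x),\Psi_s(x)\rangle\,dx\Big)\,ds,\qquad \Phi_s:=\mathcal{F}^{\ast}\phi_s,\ \ \Psi_s:=\mathcal{F}^{\ast}\psi_s .
\]

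\emph{Step 3 (passage to the limit).} For fixed $s$, $\int_{|x|\le R}\langle\Phi_s,\Psi_s\rangle\,dx\to(\Phi_s,\Psi_s)=(\phi_s,\psi_s)=\int_{\mathbb{R}^3}e^{-2s\sqrt{|p|^2+m^2}}\langle f_+(p),g_-(p)\rangle\,dp$, and this is identically zero, because $f_+(p)=P_+(p)\hat f(p)\in X^+(p)$ and $g_-(p)=P_-(p)\hat g(p)\in X^-(p)$ lie in orthogonal eigenspaces, so $\langle f_+(p),g_-(p)\rangle=0$ pointwise. The interchange of $\lim_{R\to\infty}$ with $\int_0^\infty ds$ is justified by the $R$-uniform bound $\big|\int_{|x|\le R}\langle\Phi_s,\Psi_s\rangle\,dx\big|\le\|\phi_s\|_{L^2}\|\psi_s\|_{L^2}\le e^{-2sm}\|f_+\|_{L^2}\|g_-\|_{L^2}\in L^1(ds)$. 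Hence $I_2(R)\to0$, and the same computation with $f_-(p)\perp g_+(p)$ gives $I_3(R)\to0$. The heart of the argument, and its only delicate point, is that passing to $R\to\infty$ replaces the localization $\zeta(|x|/R)$ by $1$, so Parseval evaluates both factors at the \emph{same} momentum, where the orthogonality of the positive- and negative-energy projectors annihilates the integrand; the non-vanishing of $\Omega$ is precisely what permits this clean reduction by carrying out the $t$-integration exactly.
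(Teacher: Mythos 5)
Your proof is correct, and after the shared opening step it takes a genuinely different route from the paper's. Both arguments begin identically: one regularizes the $t$-integral (the paper with a cutoff $\varphi(\varepsilon t)$, $\varphi\in C_0^\infty$, you with the Abel factor $e^{-\tau t}$), exploits $\Omega(p,q)\geq 2m$ to carry out the time integration exactly, and arrives at $I_2(R)$ as a double momentum integral against the bounded kernel $\Omega(p,q)^{-1}\tilde\zeta_R(p-q)$. From there the paper stays in momentum space: it passes to spherical coordinates, inserts the explicit formula $\tilde\zeta_R(p)=4\pi\bigl(-R\cos(R|p|)/|p|^2+\sin(R|p|)/|p|^3\bigr)$, integrates by parts in $\theta$ via (\ref{t28}), splits $I_2=I_2^1+I_2^2+I_2^3$, and kills each piece with the Riemann--Lebesgue/dominated-convergence scheme of (\ref{t2}) (the term $I_2^2$ being handled as in Lemma \ref{t45}); the vanishing is thus attributed to oscillation. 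You instead separate the variables $p$ and $q$ through the subordination formula $\Omega^{-1}=\int_0^\infty e^{-s\Omega}\,ds$, transport the inner integral to position space by Parseval, where $\tilde\zeta_R$ becomes the sharp localization $\zeta(|x|/R)$, and then let $R\to\infty$, so that the full-space Parseval identity evaluates both factors at the same momentum, where $P_+(p)P_-(p)=0$ annihilates the integrand pointwise. Every interchange you invoke is legitimately dominated as you indicate: Fubini in $(t,p,q)$ and $(s,p,q)$ using $\|\tilde\zeta_R\|_{L^\infty}<\infty$ for fixed $R$, $\Omega^{-1}\leq(2m)^{-1}$ and $f_+,g_-\in L^1$ (which follows from $\mathcal{H}_2^{3/2+\varepsilon}\subset\mathcal{H}^{3/2+\varepsilon}$ and Cauchy--Schwarz, as you note); the $\tau\to0^+$ limit via Lemma \ref{t42}; and the final limit in $s$ via the bound $e^{-2sm}\|f_+\|_{L^2}\|g_-\|_{L^2}$. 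Your route buys brevity and insight: it needs no $\theta$-integration by parts and essentially only $L^1\cap L^2$ control of $f_+,g_-$ past Step 1 (rather than the $\mathcal{H}^2$-type norms appearing in the paper's estimates), and it exposes the structural reason for the vanishing, namely the pointwise orthogonality of the positive- and negative-energy eigenspaces, rather than oscillation of the kernel. What the paper's route buys is uniformity of method: it recycles the identity (\ref{t28}) and the machinery already built for $I_{1,2}^1$, $I_{1,2}^2$, $I_{1,2}^3$ in Lemmas \ref{t45}--\ref{t49}, so the treatment of $I_2$ and $I_3$ requires no new tools.
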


\begin{proof}
We consider the term $I_{2}\left(  R\right)  .$ The proof for $I_{3}\left(
R\right)  $ is analogous. Let $\varphi\in C_{0}^{\infty}\left(  \mathbb{R}%
\right)  $, such that $\varphi\left(  0\right)  =1.$ Then, it follows from the
proof of Lemma \ref{t42} (see relation (\ref{time11})) and the dominated
convergence theorem that \
\[
\left.
\begin{array}
[c]{c}%
I_{2}\left(  R\right)  =\lim\limits_{\varepsilon\rightarrow\infty}%
{\displaystyle\int\limits_{\mathbb{R}^{3}}}
{\displaystyle\int\limits_{\mathbb{R}^{3}}}
\left\langle \left(
{\displaystyle\int\limits_{0}^{\infty}}
e^{-i\left(  \sqrt{\left\vert p\right\vert ^{2}+m^{2}}+\sqrt{\left\vert
q\right\vert ^{2}+m^{2}}\right)  t}\varphi\left(  \varepsilon t\right)
dt\right)  f_{+}\left(  p\right)  ,\tilde{\zeta}_{R}\left(  p-q\right)
g_{-}\left(  q\right)  \right\rangle dqdp\\
=-i%
{\displaystyle\int\limits_{\mathbb{R}^{3}}}
{\displaystyle\int\limits_{\mathbb{R}^{3}}}
\left\langle f_{+}\left(  p\right)  ,\tilde{\zeta}_{R}\left(  p-q\right)
\frac{g_{-}\left(  q\right)  }{\sqrt{\left\vert p\right\vert ^{2}+m^{2}}%
+\sqrt{\left\vert q\right\vert ^{2}+m^{2}}}\right\rangle dqdp\\
-i\lim\limits_{\varepsilon\rightarrow\infty}\varepsilon%
{\displaystyle\int\limits_{0}^{\infty}}
{\displaystyle\int\limits_{\mathbb{R}^{3}}}
{\displaystyle\int\limits_{\mathbb{R}^{3}}}
\varphi^{\prime}\left(  \varepsilon t\right)  \left\langle \frac{e^{-i\left(
\sqrt{\left\vert p\right\vert ^{2}+m^{2}}+\sqrt{\left\vert q\right\vert
^{2}+m^{2}}\right)  t}}{\sqrt{\left\vert p\right\vert ^{2}+m^{2}}%
+\sqrt{\left\vert q\right\vert ^{2}+m^{2}}}f_{+}\left(  p\right)
,\tilde{\zeta}_{R}\left(  p-q\right)  g_{-}\left(  q\right)  \right\rangle
dqdpdt.
\end{array}
\right.
\]
Integrating by parts, as in expression (\ref{time10}), in the second integral
of the R.H.S. of the last relation, we show that the limit $-i\lim
\limits_{\varepsilon\rightarrow\infty}\int\limits_{0}^{\infty}\int
\limits_{\mathbb{R}^{3}}\int\limits_{\mathbb{R}^{3}}\varphi^{\prime}\left(
\varepsilon t\right)  \left\langle \frac{e^{-i\left(  \sqrt{\left\vert
p\right\vert ^{2}+m^{2}}+\sqrt{\left\vert q\right\vert ^{2}+m^{2}}\right)  t}%
}{\sqrt{\left\vert p\right\vert ^{2}+m^{2}}+\sqrt{\left\vert q\right\vert
^{2}+m^{2}}}f_{+}\left(  p\right)  ,\tilde{\zeta}_{R}\left(  p-q\right)
g_{-}\left(  q\right)  \right\rangle dqdpdt$ exists, and hence,%
\[
I_{2}\left(  R\right)  =-i\int\limits_{\mathbb{R}^{3}}\int\limits_{\mathbb{R}%
^{3}}\left\langle f_{+}\left(  p\right)  ,\tilde{\zeta}_{R}\left(  p-q\right)
\frac{g_{-}\left(  q\right)  }{\sqrt{\left\vert p\right\vert ^{2}+m^{2}}%
+\sqrt{\left\vert q\right\vert ^{2}+m^{2}}}\right\rangle dqdp.
\]
Passing to the spherical coordinate system, where the $z-$axis is directed
along the vector $p,$ we obtain
\[
\left.  I_{1,2}\left(  R\right)  =-i\int\limits_{\mathbb{R}^{3}}%
\int\limits_{0}^{2\pi}\int\limits_{0}^{\pi}\int\limits_{0}^{\infty
}\left\langle f_{+}\left(  p\right)  ,\tilde{\zeta}_{R}\left(  p-r\omega
(\theta,\varphi)\right)  g_{1}\left(  r\omega(\theta,\varphi);\left\vert
p\right\vert \right)  \right\rangle r^{2}\sin\theta drd\theta d\varphi
dp,\right.
\]
where $g_{1}\left(  r\omega(\theta,\varphi);\left\vert p\right\vert \right)
:=\frac{g_{-}\left(  r\omega(\theta,\varphi)\right)  }{\sqrt{\left\vert
p\right\vert ^{2}+m^{2}}+\sqrt{r^{2}+m^{2}}}$ and $\omega(\theta
,\varphi)=\left(  \cos\varphi\sin\theta,\sin\varphi\sin\theta,\cos
\theta\right)  .$ Using (\ref{t28}) with $g_{1}$ instead of $g_{0}$ we have%
\[
\left.
\begin{array}
[c]{c}%
{\displaystyle\int\limits_{0}^{\pi}}
\tilde{\zeta}_{R}\left(  p-r\omega(\theta,\varphi)\right)  g_{1}\left(
r\omega(\theta,\varphi);\left\vert p\right\vert \right)  \sin\theta d\theta\\
=-\frac{\sin R\left(  \left\vert p\right\vert +r\right)  }{r\left\vert
p\right\vert \left(  \left\vert p\right\vert +r\right)  }g_{1}\left(
-r\frac{p}{\left\vert p\right\vert };\left\vert p\right\vert \right)
+\frac{\sin R\left(  \left\vert p\right\vert -r\right)  }{r\left\vert
p\right\vert \left(  \left\vert p\right\vert -r\right)  }g_{1}\left(
r\frac{p}{\left\vert p\right\vert };\left\vert p\right\vert \right)  +\frac
{1}{r\left\vert p\right\vert }%
{\displaystyle\int\limits_{0}^{\pi}}
\frac{\sin R\sqrt{\left\vert p\right\vert ^{2}-2r\left\vert p\right\vert
\cos\theta+r^{2}}}{\sqrt{\left\vert p\right\vert ^{2}-2r\left\vert
p\right\vert \cos\theta+r^{2}}}\partial_{\theta}g_{1}\left(  r\omega
(\theta,\varphi);\left\vert p\right\vert \right)  d\theta,
\end{array}
\right.
\]
and hence,%
\begin{equation}
I_{2}\left(  R\right)  :=I_{2}^{1}\left(  R\right)  +I_{2}^{2}\left(
R\right)  +I_{2}^{3}\left(  R\right)  , \label{t50}%
\end{equation}
with%
\[
\left.  I_{2}^{1}\left(  R\right)  :=8\pi^{2}i\int\limits_{\mathbb{R}^{3}}%
\int\limits_{0}^{\infty}\left\langle f_{+}\left(  p\right)  ,\frac{\sin
R\left(  \left\vert p\right\vert +r\right)  }{\left\vert p\right\vert \left(
\left\vert p\right\vert +r\right)  }g_{1}\left(  -r\frac{p}{\left\vert
p\right\vert };\left\vert p\right\vert \right)  \right\rangle rdrdp,\right.
\]%
\[
\left.  I_{2}^{2}\left(  R\right)  :=-8\pi^{2}i\int\limits_{\mathbb{R}^{3}%
}\int\limits_{0}^{\infty}\left\langle f_{+}\left(  p\right)  ,\frac{\sin
R\left(  \left\vert p\right\vert -r\right)  }{\left\vert p\right\vert \left(
\left\vert p\right\vert -r\right)  }g_{1}\left(  r\frac{p}{\left\vert
p\right\vert };\left\vert p\right\vert \right)  \right\rangle rdrdp,\right.
\]
and%
\[
\left.  I_{2}^{3}\left(  R\right)  :=-4\pi i\int\limits_{\mathbb{R}^{3}}%
\int\limits_{0}^{\infty}\frac{\left\langle f_{+}\left(  p\right)  ,i_{2}%
^{3}\left(  r,p;R\right)  \right\rangle }{\left\vert p\right\vert \left(
\sqrt{\left\vert p\right\vert ^{2}+m^{2}}+\sqrt{r^{2}+m^{2}}\right)
}rdrdp,\right.
\]
where%
\[
i_{2}^{3}\left(  r,p;R\right)  :=\int\limits_{0}^{2\pi}\int\limits_{0}^{\pi
}\frac{\sin R\sqrt{\left\vert p\right\vert ^{2}-2r\left\vert p\right\vert
\cos\theta+r^{2}}}{\sqrt{\left\vert p\right\vert ^{2}-2r\left\vert
p\right\vert \cos\theta+r^{2}}}\partial_{\theta}g_{-}\left(  r\omega
(\theta,\varphi)\right)  d\theta d\varphi.
\]
Since
\[
\left.
\begin{array}
[c]{c}%
\int\limits_{\mathbb{R}^{3}}\int\limits_{0}^{\infty}\left\vert f_{+}\left(
p\right)  \right\vert \frac{\left\vert g_{1}\left(  -r\frac{p}{\left\vert
p\right\vert };\left\vert p\right\vert \right)  \right\vert }{\left\vert
p\right\vert \left(  \left\vert p\right\vert +r\right)  }rdrdp\leq
\int\limits_{0}^{\infty}\left(  \int\limits_{\mathbb{S}^{2}}\left\vert
f_{+}\left(  p\right)  \right\vert ^{2}d\omega\right)  ^{1/2}d\left\vert
p\right\vert \int\limits_{0}^{\infty}\left(  \int\limits_{\mathbb{S}^{2}}%
\frac{\left\vert g_{-}\left(  r\omega\right)  \right\vert ^{2}}{r^{2}+m^{2}%
}d\omega\right)  ^{1/2}rdr\\
\leq C\left(  \left\Vert f_{+}\right\Vert _{L^{\infty}\left(  \left\vert
p\right\vert \leq1\right)  }+\left\Vert f_{+}\right\Vert _{L^{2}\left(
\mathbb{R}^{3}\right)  }\right)  \left(  \left\Vert g_{-}\right\Vert
_{L^{\infty}\left(  \left\vert p\right\vert \leq1\right)  }+\left\Vert
g_{-}\right\Vert _{L^{2}\left(  \mathbb{R}^{3}\right)  }\right)  ,
\end{array}
\right.
\]
arguing as in (\ref{t2}) we get%
\begin{equation}
\lim_{R\rightarrow\infty}I_{2}^{1}\left(  R\right)  =0. \label{t51}%
\end{equation}
Similarly to Lemma \ref{t45} in the case of $I_{1,2}^{1}\left(  R\right)  $ we
prove that
\begin{equation}
\lim_{R\rightarrow\infty}I_{2}^{2}\left(  R\right)  =0. \label{t52}%
\end{equation}
Observing that $\int\limits_{0}^{2\pi}\left.  \partial_{\theta}g_{-}\left(
r\omega(\theta,\varphi)\right)  \right\vert _{\theta=0}d\varphi=0$ we have
\[
i_{2}^{3}\left(  r,p;R\right)  =\int\limits_{0}^{2\pi}\int\limits_{0}^{\pi
}\frac{\sin R\sqrt{\left\vert p\right\vert ^{2}-2r\left\vert p\right\vert
\cos\theta+r^{2}}}{\sqrt{\left\vert p\right\vert ^{2}-2r\left\vert
p\right\vert \cos\theta+r^{2}}}\left(  \partial_{\theta}g_{-}\left(
r\omega(\theta,\varphi)\right)  -\left.  \partial_{\theta}g_{-}\left(
r\omega(\theta,\varphi)\right)  \right\vert _{\theta=0}\right)  d\theta
d\varphi.
\]
Then,
\[
\left.  \left\vert i_{2}^{3}\left(  r,p;R\right)  \right\vert \leq\frac
{C}{\sqrt{r\left\vert p\right\vert }}\int\limits_{0}^{\pi}\frac{\theta^{1/2}%
}{\sqrt{1-\cos\theta}}d\theta\left(  \int\limits_{0}^{2\pi}\int\limits_{0}%
^{\pi}\left\vert \partial_{\theta}^{2}g_{-}\left(  r\omega(\theta
,\varphi)\right)  \right\vert ^{2}d\theta d\varphi\right)  ^{1/2}\leq\frac
{C}{\sqrt{r\left\vert p\right\vert }}\left(  \int\limits_{0}^{2\pi}%
\int\limits_{0}^{\pi}\left\vert \partial_{\theta}^{2}g_{-}\left(
r\omega(\theta,\varphi)\right)  \right\vert ^{2}d\theta d\varphi\right)
^{1/2},\right.
\]
and moreover,%
\[
\left.
\begin{array}
[c]{c}%
{\displaystyle\int\limits_{\mathbb{R}^{3}}}
{\displaystyle\int\limits_{0}^{\infty}}
\frac{\left\vert f_{+}\left(  p\right)  \right\vert \left\vert i_{2}%
^{3}\left(  r,p;R\right)  \right\vert }{\left\vert p\right\vert \left(
\sqrt{\left\vert p\right\vert ^{2}+m^{2}}+\sqrt{r^{2}+m^{2}}\right)
}rdrdp\leq C%
{\displaystyle\int\limits_{\mathbb{R}^{3}}}
\frac{\left\vert f_{+}\left(  p\right)  \right\vert }{\left\vert p\right\vert
^{3/2}}dp\left(
{\displaystyle\int\limits_{0}^{\infty}}
\frac{1}{\sqrt{r^{2}+m^{2}}}\left(
{\displaystyle\int\limits_{0}^{2\pi}}
{\displaystyle\int\limits_{0}^{\pi}}
\left\vert \partial_{\theta}\left(  \omega(\theta,\varphi)\cdot\nabla
g_{-}\left(  r\omega(\theta,\varphi)\right)  \right)  \right\vert ^{2}%
r^{2}d\theta d\varphi\right)  ^{1/2}dr\right) \\
\leq C\left(  \left\Vert f_{+}\right\Vert _{L^{\infty}\left(  \left\vert
p\right\vert \leq1\right)  }+\left\Vert f_{+}\right\Vert _{L_{1}^{2}\left(
\mathbb{R}^{3}\right)  }\right)  \left\Vert g_{+}\right\Vert _{\mathcal{H}%
_{1}^{2}}.
\end{array}
\right.
\]
Hence, arguing as in (\ref{t2}) we see that%
\begin{equation}
\lim_{R\rightarrow\infty}I_{2}^{3}\left(  R\right)  =0. \label{t53}%
\end{equation}
Taking the limit as $R\rightarrow\infty$ in (\ref{t50}) and using (\ref{t51}),
(\ref{t52}) and (\ref{t53}) we conclude that
\[
\lim_{R\rightarrow\infty}I_{2}\left(  R\right)  =0.
\]

\end{proof}

\section{Time delay.}

\subsection{Proof of Theorem \ref{T1}\label{proofT1}.}

We begin by presenting a result that allows us to express the time delay
$\delta\mathcal{T}\left(  f\right)  $ in terms of the scattering operator
$\mathbf{S}$ (see, for example, Proposition 7.22, page 365 of \cite{7}).

\begin{proposition}
\label{t89}Let $f$ be such that, for every fixed $R<\infty,$ each of the
functions $t\rightarrow\left\Vert \zeta\left(  \frac{\left\vert x\right\vert
}{R}\right)  e^{-iH_{0}t}f\right\Vert _{L^{2}\left(  \mathbb{R}^{3}\right)  }$
and $t\rightarrow\left\Vert \zeta\left(  \frac{\left\vert x\right\vert }%
{R}\right)  e^{-iH_{0}t}\mathbf{S}f\right\Vert _{L^{2}\left(  \mathbb{R}%
^{3}\right)  }$ belong to $L^{2}\left(  [0,\infty)\right)  .$ Assume that the
wave operators exist and are complete. Moreover, suppose that the function
$t\rightarrow\left\Vert \left(  W_{-}-e^{itH}e^{-itH_{0}}\right)  f\right\Vert
_{L^{2}\left(  \mathbb{R}^{3}\right)  }$ belongs to $L^{1}\left(
(-\infty,0]\right)  $ and that $t\rightarrow\left\Vert \left(  W_{+}%
-e^{itH}e^{-itH_{0}}\right)  \mathbf{S}f\right\Vert _{L^{2}\left(
\mathbb{R}^{3}\right)  }$ belongs to $L^{1}\left(  [0,\infty)\right)  .$
Then,
\begin{equation}
\delta\mathcal{T}\left(  f\right)  =\lim_{R\rightarrow\infty}\int
\limits_{0}^{\infty}\left\{  \left\Vert \zeta\left(  \frac{\left\vert
x\right\vert }{R}\right)  e^{-iH_{0}t}\mathbf{S}f\right\Vert ^{2}-\left\Vert
\zeta\left(  \frac{\left\vert x\right\vert }{R}\right)  e^{-iH_{0}%
t}f\right\Vert ^{2}\right\}  dt. \label{t103}%
\end{equation}

\end{proposition}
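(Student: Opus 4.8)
The plan is to start from the definition $\delta_R\mathcal{T}(f)=\mathcal{T}_R(W_-f)-\mathcal{T}_{0,R}(f)$, split each time integral at $t=0$, and replace the perturbed evolution $e^{-iHt}W_-f$ by the free evolution $e^{-iH_0t}f$ on the negative half-line and by $e^{-iH_0t}\mathbf{S}f$ on the positive half-line, controlling the two errors with the two $L^1$ hypotheses. First I would record the basic approximation identities. Since $e^{-iHt}$ is unitary and $HW_{\pm}=W_{\pm}H_0$, writing $e^{-iHt}W_-f-e^{-iH_0t}f=e^{-iHt}\bigl(W_--e^{itH}e^{-itH_0}\bigr)f$ gives
\[
\bigl\|e^{-iHt}W_-f-e^{-iH_0t}f\bigr\|_{L^2}=\bigl\|\bigl(W_--e^{itH}e^{-itH_0}\bigr)f\bigr\|_{L^2}\in L^1((-\infty,0]).
\]
By completeness $W_-=W_+\mathbf{S}$, so the same computation with $W_+\mathbf{S}f$ in place of $W_-f$ yields
\[
\bigl\|e^{-iHt}W_-f-e^{-iH_0t}\mathbf{S}f\bigr\|_{L^2}=\bigl\|\bigl(W_+-e^{itH}e^{-itH_0}\bigr)\mathbf{S}f\bigr\|_{L^2}\in L^1([0,\infty)).
\]

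Next I would write $\delta_R\mathcal{T}(f)=A_R+B_R$, where
\[
A_R:=\int_{-\infty}^0\Bigl[\bigl\|\zeta(\tfrac{|x|}{R})e^{-iHt}W_-f\bigr\|^2-\bigl\|\zeta(\tfrac{|x|}{R})e^{-iH_0t}f\bigr\|^2\Bigr]\,dt,
\]
\[
B_R:=\int_{0}^{\infty}\Bigl[\bigl\|\zeta(\tfrac{|x|}{R})e^{-iHt}W_-f\bigr\|^2-\bigl\|\zeta(\tfrac{|x|}{R})e^{-iH_0t}f\bigr\|^2\Bigr]\,dt,
\]
the regrouping being legitimate because the positive-time integrals are finite by the $L^2$-hypotheses while the negative-time difference $A_R$ converges absolutely by the first identity above. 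Using the elementary bound $\bigl|\,\|a\|^2-\|b\|^2\,\bigr|\le(\|a\|+\|b\|)\|a-b\|$ together with $\|\zeta(\tfrac{|x|}{R})\,\cdot\,\|\le\|\cdot\|$ and the isometries $\|e^{-iHt}W_-f\|=\|e^{-iH_0t}f\|=\|f\|$, the integrand of $A_R$ is dominated, uniformly in $R$, by $2\|f\|\,\|(W_--e^{itH}e^{-itH_0})f\|_{L^2}\in L^1((-\infty,0])$; for each fixed $t$ it converges to $\|f\|^2-\|f\|^2=0$ as $R\to\infty$, because $\zeta(\tfrac{|x|}{R})\to1$ pointwise forces $\zeta(\tfrac{|x|}{R})g\to g$ in $L^2$ for every fixed $g$. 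Dominated convergence then gives $A_R\to0$.

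For $B_R$ I would insert $\pm\,\bigl\|\zeta(\tfrac{|x|}{R})e^{-iH_0t}\mathbf{S}f\bigr\|^2$ and treat the resulting cross term $\int_0^\infty[\|\zeta(\tfrac{|x|}{R})e^{-iHt}W_-f\|^2-\|\zeta(\tfrac{|x|}{R})e^{-iH_0t}\mathbf{S}f\|^2]\,dt$ exactly as $A_R$: by the second identity its integrand is dominated (using $\|\mathbf{S}f\|=\|f\|$) by $2\|f\|\,\|(W_+-e^{itH}e^{-itH_0})\mathbf{S}f\|_{L^2}\in L^1([0,\infty))$ and tends to $0$ pointwise, so it vanishes as $R\to\infty$. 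Hence
\[
\lim_{R\to\infty}B_R=\lim_{R\to\infty}\int_0^\infty\Bigl[\bigl\|\zeta(\tfrac{|x|}{R})e^{-iH_0t}\mathbf{S}f\bigr\|^2-\bigl\|\zeta(\tfrac{|x|}{R})e^{-iH_0t}f\bigr\|^2\Bigr]\,dt,
\]
and combining with $A_R\to0$ produces exactly formula (\ref{t103}).

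The main obstacle is the bookkeeping of finiteness rather than any hard estimate: a priori $\mathcal{T}_R(W_-f)$ and $\mathcal{T}_{0,R}(f)$ need not be finite on the whole line, so the decomposition $\delta_R\mathcal{T}(f)=A_R+B_R$ must be organized so that the possibly divergent negative-time pieces cancel inside the absolutely convergent difference $A_R$, with the positive-time pieces finite by hypothesis. Each passage of the limit $R\to\infty$ inside the integral then rests on an $R$-uniform $L^1$ majorant, and these majorants are supplied \emph{precisely} by the two $L^1$ conditions on the wave-operator differences; once the domination is secured the pointwise limits are immediate from $\zeta(\tfrac{|x|}{R})\to1$ and the norm-preservation of $e^{-iHt}$, $W_{\pm}$ and $\mathbf{S}$.
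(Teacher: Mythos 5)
Your proof is correct, and there is in fact nothing in the paper to compare it against: the authors do not prove Proposition \ref{t89}, they quote it (see the sentence preceding it) from Proposition 7.22, page 365 of \cite{7}. Your argument --- writing $\delta_R\mathcal{T}(f)$ as the integral over $\mathbb{R}$ of the difference of the two localized densities, comparing $e^{-iHt}W_-f$ with $e^{-iH_0t}f$ on $(-\infty,0]$ and, via $W_-=W_+\mathbf{S}$ (which is where completeness enters), with $e^{-iH_0t}\mathbf{S}f$ on $[0,\infty)$, then letting $R\to\infty$ by dominated convergence with the two $L^1$ hypotheses as the $R$-uniform majorants --- is precisely the standard proof of the cited result. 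You also identify and resolve the only delicate point: the hypotheses control the sojourn-time integrals only on $[0,\infty)$, so $\mathcal{T}_R(W_-f)$ and $\mathcal{T}_{0,R}(f)$ may each diverge, and $\delta_R\mathcal{T}(f)$ must be read as the integral of the pointwise difference --- absolutely convergent on $(-\infty,0]$, and on $[0,\infty)$ the sum of an absolutely convergent term and a difference of two finite integrals --- which is exactly how you organize the decomposition $A_R+C_R+D_R$ with $A_R,C_R\to 0$.
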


In particular, Proposition \ref{t89} relates the time delay with the
expectation values $I\left(  R\right)  ,$ defined by (\ref{t96}). To prove
Theorem \ref{T1} first we need to show that under the assumptions of Theorem
\ref{T1} relation (\ref{t103}) holds. We only have to verify that the
functions $t\rightarrow\left\Vert \zeta\left(  \frac{\left\vert x\right\vert
}{R}\right)  e^{-iH_{0}t}f\right\Vert _{L^{2}\left(  \mathbb{R}^{3}\right)  }$
and $t\rightarrow\left\Vert \zeta\left(  \frac{\left\vert x\right\vert }%
{R}\right)  e^{-iH_{0}t}\mathbf{S}f\right\Vert _{L^{2}\left(  \mathbb{R}%
^{3}\right)  }$ belong to $L^{2}\left(  [0,\infty)\right)  .$ Let us prove
that this is true if $f\in\mathcal{H}_{2}^{3/2+\varepsilon}\left(
\mathbb{R}^{3};\mathbb{C}^{4}\right)  $ and $\mathbf{S}f\in\mathcal{H}%
_{2}^{3/2+\varepsilon}\left(  \mathbb{R}^{3};\mathbb{C}^{4}\right)  .$ It
suffices to show these inclusions for $t\geq1.$ Observe that%
\begin{equation}
\left.
\begin{array}
[c]{c}%
\left\Vert \zeta\left(  \frac{\left\vert x\right\vert }{R}\right)
\mathbf{P}_{\pm}e^{-iH_{0}t}g\right\Vert _{L^{2}\left(  \mathbb{R}^{3}\right)
}=\left\Vert \mathcal{F}\left(  \zeta\left(  \frac{\left\vert x\right\vert
}{R}\right)  \mathbf{P}_{\pm}e^{-iH_{0}t}g\right)  \right\Vert _{L^{2}\left(
\mathbb{R}^{3}\right)  }\\
=\left(  2\pi\right)  ^{-3/2}\left\Vert R^{3}%
{\displaystyle\int\limits_{\mathbb{R}^{3}}}
\left(  \mathcal{F}\zeta\right)  \left(  R\left(  p-q\right)  \right)  e^{\mp
i\sqrt{\left\vert q\right\vert ^{2}+m^{2}}t}\left(  \mathcal{F}\mathbf{P}%
_{\pm}g\right)  \left(  q\right)  dq\right\Vert _{L^{2}\left(  \mathbb{R}%
^{3}\right)  }.
\end{array}
\right.  \label{t86}%
\end{equation}
For $g\in\mathcal{S}$, integrating by parts, we have
\[
\left.
\begin{array}
[c]{c}%
{\displaystyle\int\limits_{\mathbb{R}^{3}}}
\left(  \mathcal{F}\zeta\right)  \left(  R\left(  p-q\right)  \right)  e^{\mp
i\sqrt{\left\vert q\right\vert ^{2}+m^{2}}t}\left(  \mathcal{F}\mathbf{P}%
_{\pm}g\right)  \left(  q\right)  dq\\
=\mp\dfrac{i}{t}%
{\displaystyle\int\limits_{0}^{\infty}}
{\displaystyle\int\limits_{\mathbb{S}^{2}}}
e^{\mp i\sqrt{\left\vert q\right\vert ^{2}+m^{2}}t}\frac{q\cdot\nabla
}{\left\vert q\right\vert }\left(  \left(  \mathcal{F}\zeta\right)  \left(
R\left(  p-q\right)  \right)  \left\vert q\right\vert \sqrt{q^{2}+m^{2}%
}\left(  \mathcal{F}\mathbf{P}_{\pm}g\right)  \left(  q\right)  \right)
d\left\vert q\right\vert d\omega,
\end{array}
\right.
\]
with $q=\left\vert q\right\vert \omega,$ and then, by Young's inequality,
\begin{equation}
\left\Vert
{\displaystyle\int\limits_{\mathbb{R}^{3}}}
\left(  \mathcal{F}\zeta\right)  \left(  R\left(  p-q\right)  \right)  e^{\mp
i\sqrt{\left\vert q\right\vert ^{2}+m^{2}}t}\left(  \mathcal{F}\mathbf{P}%
_{\pm}g\right)  \left(  q\right)  dq\right\Vert _{L^{2}\left(  \mathbb{R}%
^{3}\right)  }\leq C\frac{1}{t}\left\Vert \mathcal{F}g\right\Vert
_{\mathcal{H}_{3/2+\varepsilon}^{2}}.\label{t87}%
\end{equation}
Moreover, by continuity, (\ref{t87}) holds for any $g\in\mathcal{H}%
_{2}^{3/2+\varepsilon}.$ Hence, it follows from (\ref{t86}) that
\begin{equation}
\left\Vert \zeta\left(  \frac{\left\vert x\right\vert }{R}\right)
e^{-iH_{0}t}g\right\Vert _{L^{2}\left(  \mathbb{R}^{3}\right)  }\leq C\frac
{1}{t}\left\Vert \mathcal{F}g\right\Vert _{\mathcal{H}_{3/2+\varepsilon}^{2}%
}.\label{t92}%
\end{equation}
Therefore, using (\ref{t92}) with $g=\mathbf{S}f$ and $g=f,$ we conclude that
the functions $t\rightarrow\left\Vert \zeta\left(  \frac{\left\vert
x\right\vert }{R}\right)  e^{-iH_{0}t}\mathbf{S}f\right\Vert _{L^{2}\left(
\mathbb{R}^{3}\right)  }$ and $t\rightarrow\left\Vert \zeta\left(
\frac{\left\vert x\right\vert }{R}\right)  e^{-iH_{0}t}f\right\Vert
_{L^{2}\left(  \mathbb{R}^{3}\right)  }$ belong to $L^{2}\left(
\mathbb{R}\right)  .$\ Then, the assumptions of Proposition \ref{t89} are
satisfied and (\ref{t103}) is valid.

Let us prove now the first assertion of Theorem \ref{T1}. From the unitarity
of the scattering operator $\mathbf{S}$, since $\mathcal{F}\mathbf{S}%
\mathcal{F}^{-1}$ commutes with $\frac{\sqrt{\left\vert p\right\vert
^{2}+m^{2}}}{\left\vert p\right\vert }$ and $P_{+}\left(  p\right)
+P_{-}\left(  p\right)  =I$, we get
\[
\left.
{\displaystyle\sum\limits_{\sigma=\pm}}
\left(  \frac{\sqrt{\left\vert p\right\vert ^{2}+m^{2}}}{\left\vert
p\right\vert }f_{\sigma}\left(  p\right)  ,f_{\sigma}\left(  p\right)
\right)  _{L^{2}\left(  \mathbb{R}^{3};\mathbb{C}^{4}\right)  }=%
{\displaystyle\sum\limits_{\sigma=\pm}}
\left(  \frac{\sqrt{\left\vert p\right\vert ^{2}+m^{2}}}{\left\vert
p\right\vert }\left(  \mathbf{S}f\right)  _{\sigma}\left(  p\right)  ,\left(
\mathbf{S}f\right)  _{\sigma}\left(  p\right)  \right)  _{L^{2}\left(
\mathbb{R}^{3};\mathbb{C}^{4}\right)  },\right.
\]
where $f_{\pm}\left(  p\right)  =P_{\pm}\left(  p\right)  \hat{f}\left(
p\right)  $ and $\left(  \mathbf{S}f\right)  _{\pm}\left(  p\right)  =P_{\pm
}\left(  p\right)  \widehat{\mathbf{S}f}\left(  p\right)  .$ Then, applying
Theorem \ref{t58} to the R.H.S. of (\ref{t103}) we have%
\begin{equation}
\left.
\begin{array}
[c]{c}%
\delta\mathcal{T}\left(  f\right)  =i%
{\displaystyle\int\limits_{\mathbb{R}^{3}}}
\left\langle \left(  \mathbf{S}f\right)  _{+}\left(  p\right)  ,\frac
{\sqrt{\left\vert p\right\vert ^{2}+m^{2}}}{2\left\vert p\right\vert ^{2}%
}\left(  \mathbf{S}f\right)  _{+}\left(  p\right)  +\frac{\sqrt{\left\vert
p\right\vert ^{2}+m^{2}}}{2\left\vert p\right\vert ^{2}}p\cdot\triangledown
\left(  \mathbf{S}f\right)  _{+}\left(  p\right)  +\frac{1}{2\left\vert
p\right\vert ^{2}}p\cdot\triangledown\left(  \sqrt{\left\vert p\right\vert
^{2}+m^{2}}\left(  \mathbf{S}f\right)  _{+}\left(  p\right)  \right)
\right\rangle dp\\
-i%
{\displaystyle\int\limits_{\mathbb{R}^{3}}}
\left\langle \left(  \mathbf{S}f\right)  _{-}\left(  p\right)  ,\frac
{\sqrt{\left\vert p\right\vert ^{2}+m^{2}}}{2\left\vert p\right\vert ^{2}%
}\left(  \mathbf{S}f\right)  _{-}\left(  p\right)  +\frac{\sqrt{\left\vert
p\right\vert ^{2}+m^{2}}}{2\left\vert p\right\vert ^{2}}p\cdot\triangledown
\left(  \mathbf{S}f\right)  _{-}\left(  p\right)  +\frac{1}{2\left\vert
p\right\vert ^{2}}p\cdot\triangledown\left(  \sqrt{\left\vert p\right\vert
^{2}+m^{2}}\left(  \mathbf{S}f\right)  _{-}\left(  p\right)  \right)
\right\rangle dp\\
-i%
{\displaystyle\int\limits_{\mathbb{R}^{3}}}
\left\langle f_{+}\left(  p\right)  ,\frac{\sqrt{\left\vert p\right\vert
^{2}+m^{2}}}{2\left\vert p\right\vert ^{2}}f_{+}\left(  p\right)  +\frac
{\sqrt{\left\vert p\right\vert ^{2}+m^{2}}}{2\left\vert p\right\vert ^{2}%
}p\cdot\triangledown\left(  f_{+}\left(  p\right)  \right)  +\frac
{1}{2\left\vert p\right\vert ^{2}}p\cdot\triangledown\left(  \sqrt{\left\vert
p\right\vert ^{2}+m^{2}}f_{+}\left(  p\right)  \right)  \right\rangle dp\\
+i%
{\displaystyle\int\limits_{\mathbb{R}^{3}}}
\left\langle f_{-}\left(  p\right)  ,\frac{\sqrt{\left\vert p\right\vert
^{2}+m^{2}}}{2\left\vert p\right\vert ^{2}}f_{-}\left(  p\right)  +\frac
{\sqrt{\left\vert p\right\vert ^{2}+m^{2}}}{2\left\vert p\right\vert ^{2}%
}p\cdot\triangledown\left(  f_{-}\left(  p\right)  \right)  +\frac
{1}{2\left\vert p\right\vert ^{2}}p\cdot\triangledown\left(  \sqrt{\left\vert
p\right\vert ^{2}+m^{2}}f_{-}\left(  p\right)  \right)  \right\rangle dp.
\end{array}
\right.  \label{t93}%
\end{equation}
Noting that
\[
\frac{\sqrt{\left\vert p\right\vert ^{2}+m^{2}}}{2\left\vert p\right\vert
^{2}}p\cdot\triangledown f=\frac{1}{2\left\vert p\right\vert ^{2}}%
p\cdot\triangledown\left(  \sqrt{\left\vert p\right\vert ^{2}+m^{2}}f\left(
p\right)  \right)  -\frac{1}{2\sqrt{\left\vert p\right\vert ^{2}+m^{2}}}f,
\]
we get%
\begin{equation}
\left.
\begin{array}
[c]{c}%
\dfrac{\sqrt{\left\vert p\right\vert ^{2}+m^{2}}}{2\left\vert p\right\vert
^{2}}f+\dfrac{\sqrt{\left\vert p\right\vert ^{2}+m^{2}}}{2\left\vert
p\right\vert ^{2}}p\cdot\triangledown f+\dfrac{1}{2\left\vert p\right\vert
^{2}}\left(  p\cdot\triangledown\right)  \left(  \sqrt{\left\vert p\right\vert
^{2}+m^{2}}f\left(  p\right)  \right) \\
=\frac{1}{2}\left(  \dfrac{2}{\left\vert p\right\vert ^{2}}p\cdot
\triangledown\left(  \sqrt{\left\vert p\right\vert ^{2}+m^{2}}f\left(
p\right)  \right)  +\dfrac{\sqrt{\left\vert p\right\vert ^{2}+m^{2}}%
}{\left\vert p\right\vert ^{2}}f\right)  -\dfrac{1}{2\sqrt{\left\vert
p\right\vert ^{2}+m^{2}}}f\\
=\dfrac{\sqrt{\left\vert p\right\vert ^{2}+m^{2}}}{2}\left(  2\dfrac
{p\cdot\triangledown}{\left\vert p\right\vert ^{2}}f+\dfrac{1}{\left\vert
p\right\vert ^{2}}f\right)  +\dfrac{1}{2\sqrt{\left\vert p\right\vert
^{2}+m^{2}}}f,
\end{array}
\right.  \label{t188}%
\end{equation}
Also, recalling (\ref{t164}) we have%
\begin{equation}
\mathcal{F}\mathbf{A}_{0}=\frac{i}{2}\left(  2\frac{p\cdot\nabla}{\left\vert
p\right\vert ^{2}}+\frac{1}{\left\vert p\right\vert ^{2}}\right)
\mathcal{F}\text{,} \label{t189}%
\end{equation}
Therefore, using (\ref{t188}) and (\ref{t189}) in (\ref{t93}) we get
(\ref{t102}).

We now prove the second affirmation\ of Theorem \ref{T1}. Let us find the
decomposition of the operator $\mathbf{T}$ in the spectral representation of
the operator $H_{0}.$ Passing to the spherical coordinate system in the
integrals in the R.H.S. of (\ref{t93}) and making the change $E=\sqrt
{r^{2}+m^{2}}$ in the first and third terms and $E=-\sqrt{r^{2}+m^{2}}$ in the
other two terms we get
\begin{equation}
\left.
\begin{array}
[c]{c}%
\delta\mathcal{T}\left(  f\right)  =\\
=i%
{\displaystyle\int\limits_{\left(  -\infty,-m\right)  \cup\left(
m,\infty\right)  }}
{\displaystyle\int\limits_{\mathbb{S}^{2}}}
\left\langle S\left(  E\right)  \Gamma_{0}\left(  E\right)  f,\frac
{E}{2\left(  E^{2}-m^{2}\right)  }S\left(  E\right)  \Gamma_{0}\left(
E\right)  f+\frac{1}{2E}S\left(  E\right)  \Gamma_{0}\left(  E\right)
f+\partial_{E}\left(  S\left(  E\right)  \Gamma_{0}\left(  E\right)  f\right)
\right\rangle d\omega dE\\
-i%
{\displaystyle\int\limits_{\left(  -\infty,-m\right)  \cup\left(
m,\infty\right)  }}
{\displaystyle\int\limits_{\mathbb{S}^{2}}}
\left\langle \Gamma_{0}\left(  E\right)  f,\frac{E}{2\left(  E^{2}%
-m^{2}\right)  }\Gamma_{0}\left(  E\right)  f+\frac{1}{2E}\Gamma_{0}\left(
E\right)  f+\partial_{E}\Gamma_{0}\left(  E\right)  f\right\rangle d\omega dE.
\end{array}
\right.  \label{t105}%
\end{equation}
Suppose that the scattering matrix $S\left(  E\right)  $ is continuously
differentiable with respect to $E$ on some open set $O\subset(-\infty
,-m)\cup(m,+\infty)\setminus\sigma_{p}\left(  H\right)  $ and $f\in\Phi\left(
O\right)  $, where $\Phi\left(  O\right)  $ is defined by (\ref{t190}). Then,
from the unitarity of the scattering matrix $S\left(  E\right)  $ it follows
that%
\begin{equation}
\left.  \delta\mathcal{T}\left(  f\right)  =\int\limits_{\left(
-\infty,-m\right)  \cup\left(  m,\infty\right)  }\int\limits_{\mathbb{S}^{2}%
}\left\langle \Gamma_{0}\left(  E\right)  f,T\left(  E\right)  \Gamma
_{0}\left(  E\right)  f\right\rangle d\omega dE,\right.  \label{t104}%
\end{equation}
where%
\[
T\left(  E\right)  =-iS\left(  E\right)  ^{\ast}\frac{d}{dE}S\left(  E\right)
.
\]
The operators $T\left(  E\right)  $ are bounded in $L^{2}\left(
\mathbb{S}^{2}\right)  .$ Relation (\ref{t104}) shows that the family
$\{T\left(  E\right)  \}_{E\in\left(  -\infty,-m\right)  \cup\left(
m,\infty\right)  }$ defines a decomposition $\mathbf{T}$ in the spectral
representation of $H_{0}$ for any $f\in\Phi\left(  O\right)  .$ That is,
\[
\left(  \mathcal{F}_{0}\mathbf{T}f\right)  \left(  E,\omega\right)  =T\left(
E\right)  \Gamma_{0}\left(  E\right)  f,\text{ \ }f\in\Phi\left(  O\right)  .
\]
Therefore, for any $f\in\Phi\left(  O\right)  ,$ the operator $\mathbf{T}$ is
the Eisenbud-Wigner time delay operator.

\begin{rem}\rm
We observe that the condition of strong differentiability of the scattering
matrix may be relaxed (see\ page 485 of \cite{9}). Actually, one can obtain
(\ref{t104})\ from (\ref{t105}) by only requesting strong continuity of
$S\left(  E\right)  ,$ but in this case the operator $\frac{d}{dE}S\left(
E\right)  $ may be unbounded in $L^{2}\left(  \mathbb{S}^{2}\right)  .$
\end{rem}

The proof of Theorem \ref{T1} is complete.

\subsection{Proof of Theorem \ref{T2}\label{proofT2}.}

As in the case of the Schr\"{o}dinger operator (\cite{40}), the proof of
Theorem \ref{T2} consists in showing that the assumptions of Theorem \ref{T2}
imply the ones of Theorem \ref{T1}. That is, we need to prove that for
$\mathbf{V}$ satisfying Condition \ref{C1} and for $f\in\mathcal{D}_{\tau},$
$\tau>2,$ the function $t\rightarrow\left\Vert \left(  W_{-}-e^{itH}%
e^{-itH_{0}}\right)  f\right\Vert _{L^{2}\left(  \mathbb{R}^{3}\right)  }$
belongs to $L^{1}\left(  (-\infty,0]\right)  $, $t\rightarrow\left\Vert
\left(  W_{+}-e^{itH}e^{-itH_{0}}\right)  \mathbf{S}f\right\Vert
_{L^{2}\left(  \mathbb{R}^{3}\right)  }$ belongs to $L^{1}\left(
[0,\infty)\right)  $ and $f,\mathbf{S}f\in\mathcal{H}_{2}^{3/2+\varepsilon
}\left(  \mathbb{R}^{3};\mathbb{C}^{4}\right)  ,$ and, moreover, the
scattering matrix $S\left(  E\right)  $ is strongly differentiable on some
open set $O\subset(-\infty,-m)\cup(m,+\infty)\setminus\sigma_{p}\left(
H\right)  ,$ containing the support of $\psi_{f},$ given by the definition of
$\mathcal{D}_{\tau}.$ First of all, from the definition of the set
$\mathcal{D}_{\tau}$ it follows that $f\in\mathcal{H}_{2}^{3/2+\varepsilon
}\left(  \mathbb{R}^{3};\mathbb{C}^{4}\right)  .$ Next we note that the proofs
of Lemmas 2 to 9 of \cite{40} remain true if we consider the Dirac operator
instead of the Schr\"{o}dinger operator. We only make two remarks. Firstly,
instead of relations (11) and (12) in Lemma 4 of \cite{40}, one has%
\[
\lbrack H,x_{k}]=-i\alpha_{k}%
\]
and
\[
\lbrack H,\left\vert x\right\vert ^{m}]=m\left\vert x\right\vert ^{m-1}%
\sum_{j=1}^{3}\left(  -i\alpha_{j}\right)  ,
\]
respectively. Then, the rest of the proof of Lemma 4, in the case of the Dirac
operator, is similar, taking $\varepsilon=0$ in all of the formulas. Secondly,
in Lemmas 5 to 8, in our case, we need that $\varphi\in C_{0}^{\infty}\left(
\mathbb{R}\right)  $ be equal to $0$ in some neighborhood of $0.$ By using
Lemmas 1-9 of \cite{40}, we see that Corollary of Proposition 2 of \cite{40}
also holds in the case of the Dirac equation. Thus, in the way analogous to
Proposition 3 of \cite{40} we attain the inclusions $\left\Vert \left(
W_{-}-e^{itH}e^{-itH_{0}}\right)  f\right\Vert _{L^{2}\left(  \mathbb{R}%
^{3}\right)  }\in L^{1}\left(  (-\infty,0]\right)  $ and $\left\Vert \left(
W_{+}-e^{itH}e^{-itH_{0}}\right)  \mathbf{S}f\right\Vert _{L^{2}\left(
\mathbb{R}^{3}\right)  }\in L^{1}\left(  [0,\infty)\right)  $. Observing that
in the case of the Dirac operator $[x^{2},H_{0}]=2i\alpha\cdot x,$%
\[
\left.  \lbrack x^{2},U_{t}^{\ast}]=-i\left(
{\displaystyle\int\limits_{0}^{-t}}
U_{t+\tau}^{\ast}[x^{2},H_{0}]U_{\tau}d\tau\right)  =2i\left(
{\displaystyle\int\limits_{0}^{-t}}
U_{t+\tau}^{\ast}\left(  -i\alpha\cdot x\right)  U_{\tau}d\tau\right)
=t^{2}U_{t}^{\ast}-2tU_{t}^{\ast}\left(  \alpha\cdot x\right)  \right.
\]
and
\[
\lbrack x^{2},U_{t}^{0}]=i\left(
{\displaystyle\int\limits_{0}^{-t}}
\left(  U_{\tau}^{0}\right)  ^{\ast}[x^{2},H_{0}]U_{\tau+t}^{0}d\tau\right)
=-2i\left(
{\displaystyle\int\limits_{0}^{-t}}
\left(  U_{\tau}^{0}\right)  ^{\ast}\left(  -i\alpha\cdot x\right)  U_{\tau
+t}^{0}d\tau\right)  =-t^{2}U_{t}^{0}+2t\left(  \alpha\cdot x\right)
U_{t}^{0},
\]
where $U_{t}^{0}:=e^{-itH_{0}}$ and $U_{t}:=e^{-itH},$ and proceeding
similarly to the proof of Proposition 4 of \cite{40} we obtain $\mathbf{S}%
f\in\mathcal{D}_{2}\subset\mathcal{H}_{2}^{3/2+\varepsilon}\left(
\mathbb{R}^{3};\mathbb{C}^{4}\right)  .$ Finally, the strong differentiability
of the scattering matrix $S\left(  E\right)  $ on $(-\infty,-m)\cup
(m,+\infty)\setminus\sigma_{p}\left(  H\right)  $ can be obtained from the
stationary formula (\ref{basicnotions14}) in the way analogous to the case of
the Schr\"{o}dinger operator (see Theorem 3.5 of \cite{15}) by using
(\ref{basicnotions27}), (\ref{basicnotions13}) and the result about the
differentiability of the resolvent for the Schr\"{o}dinger operator given in
Lemma 3.4 of \cite{15}. Theorem \ref{T2} is proved.

\subsection{Proof of Theorem \ref{T4}.\label{proofT4}}

Recall that for an operator $A$ of trace class $\operatorname*{Det}\left(
I+A\right)  $ is the determinant of $I+A$ (\cite{39}, \cite{36}). Suppose that
the potential $\mathbf{V}$ satisfies Condition \ref{C2}$.$ Then, it follows
from Theorem 4.5 of \cite{yafaev2005} that the operator $S\left(  E\right)
-I$ is of trace class, the SSF $\xi\left(  E;H,H_{0}\right)  $ exists and it
is related to the scattering matrix $S\left(  E\right)  $ by the Birman-Krein
formula%
\begin{equation}
\operatorname*{Det}S\left(  E\right)  =e^{-2\pi i\xi\left(  E;H,H_{0}\right)
}.\label{t213}%
\end{equation}
Observe now that Condition \ref{C2} implies Condition \ref{basicnotions26}.
Moreover, under Condition \ref{C2} there are no eigenvalues embedded in the
absolutely continuous spectrum of $H$ (see Remark \ref{Rem1}). Hence,
(\ref{basicnotions27}) and (\ref{basicnotions13}) hold for any $E\in\left(
-\infty,-m\right)  \cup\left(  m,\infty\right)  $. Then, using
(\ref{basicnotions14}), (\ref{basicnotions27}), (\ref{basicnotions13}) and the
result about the differentiability of the resolvent for the Schr\"{o}dinger
operator given in Lemma 3.4 of \cite{15}, similarly to the case of the
Schr\"{o}dinger operator (see Proposition 1.9, page 300 of \cite{36}), we
prove that under Condition \ref{C2} $S\left(  E\right)  $ is continuously differentiable in
the trace class. Therefore, as the scattering matrix is unitary, the operator
$S\left(  E\right)  ^{\ast}\frac{d}{dE}S\left(  E\right)  $ is of trace class
and the following relation is satisfied
\begin{equation}
\left(  \operatorname*{Det}S\left(  E\right)  \right)  ^{-1}\frac{d}%
{dE}\operatorname*{Det}S\left(  E\right)  =\operatorname*{Tr}\left(  S\left(
E\right)  ^{\ast}\frac{d}{dE}S\left(  E\right)  \right)  .\label{t184}%
\end{equation}
For the convenience of the reader we include the simple proof of (\ref{t184}).
Let $\{f_{n}\}$ be an orthonormal basis of $L^{2}\left(  \mathbb{S}%
^{2};\mathbb{C}^{4}\right)  .$ We consider the square matrix $\left\{  \left(
S\left(  E\right)  f_{n},f_{m}\right)  \right\}  ,$ where $n,m\leq N.$ Here
$\left(  \cdot,\cdot\right)  $ denotes the scalar product of $L^{2}\left(
\mathbb{S}^{2};\mathbb{C}^{4}\right)  .$ By the definition of the determinant%
\begin{equation}
\operatorname*{Det}S\left(  E\right)  =\lim_{N\rightarrow\infty}\det\left(
\left\{  \left(  S\left(  E\right)  f_{n},f_{m}\right)  \right\}  \right)
.\label{t185}%
\end{equation}
Since $S\left(  E\right)  -I$ as of trace class, the limit in the R.H.S. of
the last expression exists. Moreover, using Jacobi's formula we get%
\[
\left.  \dfrac{d}{dE}\det\left(  \left\{  \left(  S\left(  E\right)
f_{n},f_{m}\right)  \right\}  \right)  =\det\left(  \left\{  \left(  S\left(
E\right)  f_{n},f_{m}\right)  \right\}  \right)  \operatorname*{Tr}\left(
\left\{  \left(  S\left(  E\right)  f_{n},f_{m}\right)  \right\}
^{-1}\left\{  \left(  \dfrac{d}{dE}S\left(  E\right)  f_{n},f_{m}\right)
\right\}  \right)  .\right.
\]
Taking the basis $\{f_{n}\}$ that consists of eigenvalues of $S\left(
E\right)  $ we get%
\[
\left.  \frac{d}{dE}\det\left(  \left\{  \left(  S\left(  E\right)
f_{n},f_{m}\right)  \right\}  \right)  =\det\left(  \left\{  \left(  S\left(
E\right)  f_{n},f_{m}\right)  \right\}  \right)  \operatorname*{Tr}\left(
\left\{  \lambda_{n,m}\right\} ^{-1}\left\{  \left(  \dfrac{d}{dE}S\left(
E\right)  f_{n},f_{m}\right)  \right\}  \right)  ,\right.
\]
where $\lambda_{n,n}$ is the $n$-th eigenvalue of $S\left(  E\right)  $ and
$\lambda_{n,m}=0,$ for $n\neq m.$ Thus,
\[
\left.
\begin{array}
[c]{c}%
\dfrac{d}{dE}\det  \left(\left\{  \left(  S\left( E\right)  f_{n}%
,f_{m}\right)  \right\}  \right)  =\det\left(  \left\{  \left(  S\left(
E\right)  f_{n},f_{m}\right)  \right\}  \right)
{\displaystyle\sum\limits_{n=1}^{N}}
\left( \left\{  \lambda_{n,n}\right\} ^{-1}\, \dfrac{d}{dE}S\left(  E\right)  f_{n},f_{n}\right)  =\\
=\det\left(  \left\{  \left(  S\left(  E\right)  f_{n},f_{m}\right) \right\}
\right)
{\displaystyle\sum\limits_{n=1}^{N}}
\left(\left(S\left(  E\right)\right)^*   \dfrac{d}{dE}S\left(  E\right)  f_{n}, f_{n}\right),
\end{array}
\right.
\]
where we used that as $S\left(  E\right)  $ is unitary $\lambda_{n,n}%
^{-1}=\lambda_{n,n}^{\ast}$. Taking the limit, as $N\rightarrow\infty,$ we
arrive to
\begin{equation}
\left.  \lim_{N\rightarrow\infty}\dfrac{d}{dE}\det\left(  \left\{  \left(
S\left(  E\right)  f_{n},f_{m}\right)  \right\}  \right)  =\left(
\operatorname*{Det}S\left(  E\right)  \right)  \operatorname*{Tr}\left(
S\left(  E\right)  ^{\ast}\frac{d}{dE}S\left(  E\right)  \right)  .\right.
\label{t187}%
\end{equation}
By (\ref{t187})
\begin{equation}
\dfrac{d}{dE}\operatorname*{Det}\left(  S\left(  E\right)  \right)  =\left(
\operatorname*{Det}S\left(  E\right)  \right)  \operatorname*{Tr}\left(
S\left(  E\right)  ^{\ast}\frac{d}{dE}S\left(  E\right)  \right)
,\label{derivada}%
\end{equation}
with the derivative in the left-hand side in distribution sense, but as
$S\left(  E\right)  ^{\ast}\frac{d}{dE}S\left(  E\right)  $ is continuous in
the trace class, (\ref{derivada}) holds with $\dfrac{d}{dE}\operatorname*{Det}%
\left(  S\left(  E\right)  \right)  $ in classical sense, what proves \eqref{t184}. 
 On
the other hand, similarly to Theorem 1.20, page 351 of \cite{36} we show that
$\xi\left(  E;H,H_{0}\right)  $ is continuous for $E\in\left(  -\infty
,-m\right)  \cup\left(  m,\infty\right)  $.
 Hence, differentiating
(\ref{t213}) and using (\ref{t184}) we see that $\xi\left(  E;H,H_{0}\right)
$ is differentiable on $\left(  -\infty,-m\right)  \cup\left(  m,\infty
\right)  $ and, furthermore, we attain (\ref{t186}). Theorem \ref{T4} is proved.

\end{document}